\documentclass{article}
\usepackage[utf8]{inputenc}
\usepackage{dsfont}
\usepackage{mathtools}
\usepackage{geometry}
\usepackage{amsmath}
\usepackage{amsthm}
\usepackage{amssymb}
\usepackage{xcolor}
\usepackage{comment}
\usepackage{tensor}
\usepackage{graphicx}
\usepackage{caption}
\usepackage{subcaption}
\usepackage[normalem]{ulem}
\numberwithin{equation}{section}
\usepackage{hyperref}
\newtheorem{hypothesis}{Hypothesis}
\newtheorem{theorem}{Theorem}[section]
\newtheorem{theoremmain}{Theorem}
\newtheorem{theoremmainagain}{Theorem}
\newtheorem*{theorem*}{Theorem}
\newtheorem{example}{Example}[section]
\newtheorem{proposition}{Proposition}[section]
\newtheorem*{proposition*}{Proposition}
\newtheorem{remark}{Remark}[section]
\newtheorem{lemma}{Lemma}[section]
\newtheorem*{lemma*}{Lemma}
\newtheorem{definition}{Definition}[section]
\newtheorem*{proposal*}{Proposal}
\newtheorem{corollary}{Corollary}[section]
\newtheorem*{corollary*}{Corollary}

\title{Detecting screens modeled by Schr\"odinger operators that generate $C_0$ contraction semigroups }
\author{Lawrence Frolov\thanks{Department of Mathematics, Rutgers University (New Brunswick)}
}
\date{October 2025}

\begin{document}
\maketitle
\begin{abstract}
    Consider a non-relativistic quantum particle with wave function $\psi$ in a bounded $C^2$ region $\Omega \subset \mathbb{R}^n$, and suppose detectors are placed along the boundary $\partial \Omega$. Assume the detection process is irreversible, its mechanism is time independent and also hard, i.e., detections occur only along the boundary $\partial \Omega$. Under these conditions Tumulka \cite{TUMULKA2022168910} informally argued that the dynamics of $\psi$ must be governed by a $C_0$ contraction semigroup that weakly solves the Schr\"odinger equation and proposed modeling the detector by a time-independent local absorbing boundary condition at $\partial \Omega$. In this paper, we apply the newly discovered theory of boundary quadruples \cite{Arendt2023} to parameterize all $C_0$ contraction semigroups whose generators extend the Schr\"odinger Hamiltonian, and prove a variant of Tumulka's claim: all such evolutions are generated by the placement of a linear absorbing boundary condition on $\psi$ along $\partial \Omega$. We combine this result with the work of Werner \cite{Werner1987} to show that each $C_0$ contraction semigroup naturally admits a Born rule for the time of detection along $\partial \Omega$, and we prove that a detection will almost surely occur in finite time if detectors have been placed everywhere along $\partial \Omega$.
\end{abstract}
\section{Introduction and Statement of Main Results}

\subsection{Irreversible Hard Autonomous Detection}
Suppose that a detecting surface, such as a scintillating screen, is placed along the boundary $\partial \Omega$ of a bounded region $\Omega \subset \mathbb{R}^3$ in physical space, and suppose that a non-relativistic quantum particle is prepared at time $0$ with wave function $\psi_0$ whose support lies in $\Omega$. We would like to understand how placing this detector along the boundary $\partial \Omega$ affects the dynamics of the wave function $\psi$ in $\Omega$. We expect that as detections occur, the total probability that the particle remains undetected in $\Omega$ decreases. As an idealization, we take the detector to be \textit{hard}, so detections only occur along the boundary $\partial \Omega$. The detection process should also be \textit{irreversible}, i.e. probability in the particle-detector state space should irreversibly transfer from the space of undetected states to the space of detected states. We additionally assume that the mechanism of detection is time-independent, it depends only on the initially prepared state of the detector. This paper aims to provide an explicit parameterization of all quantum-mechanical models which satisfy these assumptions, and to demonstrate that every such model admits a natural Born rule for the distribution of times at which the particle is detected along $\partial \Omega$.
\par
We are primarily motivated by a recent paper of Tumulka \cite{TUMULKA2022168910}, in which he informally argued that the dynamics of a particle interacting with an idealized detecting screen should be governed by a $C_0$ contraction semigroup that weakly solves the Schr\"odinger equation. In summary, Tumulka analyzes the particle and detector together as a quantum system with wave function $\Psi_t$. This wave function evolves unitarily in a Hilbert space of the form $\mathcal{H}_P\oplus\mathcal{H}_F$, where $\mathcal{H}_F$ denotes the space of states in which a detector has fired while $\mathcal{H}_P=L^2(\Omega)\otimes\mathcal{H}_D$ consists of states in which the particle resides in $\Omega$ and the detectors are primed. The system is initially prepared as a pure product state $\Psi_0=\psi_0\otimes\phi_0\in L^2(\Omega)\otimes\mathcal{H}_D$, with $\psi_0$ and $\phi_0$ of unit norm. Tumulka models detection as an irreversible process that transports probability from $\mathcal{H}_P$ to $\mathcal{H}_F$. This means the dynamics of the projection $\Psi_t \big{|}_{\mathcal{H}_P}$ are norm non-increasing and autonomous, they are not affected by the dynamics in $\mathcal{H}_F$.
\par
Since the detector is assumed to be \textit{hard}, detections cannot occur in the interior of $\Omega$. Tumulka also neglects the formation of entanglement in the dynamics of $\Psi_t \big{|}_{\mathcal{H}_P}$, as the only interactions between the particle and the idealized detector are those that lead to immediate firings. So, the wave function of the particle conditioned on non-detection is given by a pure state $\psi_t$ that undergoes a non-relativistic Schr\"odinger evolution in $\Omega$. This returns our first condition (C1).
\begin{enumerate}
    \item[(C1)] $\psi_t$ satisfies a Schr\"odinger equation inside $\Omega$ in a weak distributional sense.
    \begin{equation}\label{Intro Evo Eq}
        i \frac{\partial \psi}{\partial t}= \hat{H}^*\psi \quad \text{in }\Omega
    \end{equation}
    where, setting $\hbar=\frac{\hbar^2}{2m}=1$, $\hat{H}^*$ denotes the adjoint of the Hamiltonian $\hat{H}=-\Delta+V$ defined on $D(\hat{H})=C_c^\infty(\Omega)$ with $V$ a real valued potential depending on the experimental apparatus. 
\end{enumerate}
Time independence of the detection mechanism implies that the evolution of $\psi_t$ should be autonomous, although it may depend on the initial detector state $\phi_0$. Then the dynamics of $\psi_t$ inherits linearity, continuity in time, and the semigroup property from the dynamics of $\Psi_t\big{|}_{\mathcal{H}_P}$.
\begin{enumerate}
    \item[(C2)] The evolution maps $W_t:\psi_0 \mapsto\psi_t$ define a $C_0$ semigroup on $L^2(\Omega)$:
    \begin{enumerate}
        \item The maps $W_t:L^2(\Omega)\to L^2(\Omega)$ are linear for $t \geq 0$.
        \item They are strongly continuous,   $\lim_{t \to t_0}||W_t \psi - W_{t_0}\psi||_{L^2(\Omega)}=0$ for all $\psi \in L^2(\Omega)$, $t_0 \geq 0$.
        \item They form a semigroup under composition, $W_t W_s= W_{t+s}$ for $t,s \geq 0$, with $W_0=\mathds{1}$. 
    \end{enumerate}
\end{enumerate}
The quantity $||\psi_t||_{L^2(\Omega)}^2$ represents the probability that the particle has remained undetected in $\Omega$ up to time $t$, so we expect the dynamics of $\psi_t$ to be non-unitary. This returns the final condition.
\begin{enumerate}
    \item[(C3)] $W_t$ are contractions, $||W_t \psi||_{L^2(\Omega)} \leq ||\psi||_{L^2(\Omega)}$ for all $\psi \in L^2(\Omega)$.
\end{enumerate}
The reasoning above is a slight refinement of the argument presented in \cite{TUMULKA2022168910}, which only implicitly stipulated (C2b) and demanded further conditions that we will show are effectively redundant. It would be interesting to investigate the circumstances in which we expect a real-life detecting screen to be approximately described by such an idealized model, but we do not do that here. Our goal is to study the mathematical implications of conditions (C1-C3).

$C_0$ contraction semigroups have a long history in the study of quantum particles undergoing irreversible interactions with their environment (see e.g. \cite{Davies,Exner}). Allcock \cite{ALLCOCK} famously modeled \textit{soft} detection with the non-unitary evolution operators $W_t=\exp(-it(-\Delta -iV)):L^2(\mathbb{R}^n)\to L^2(\mathbb{R}^n)$, where $iV$ is an imaginary potential. In this model, quantum particles are gradually detected with rate $2V$ upon entering the region where $iV$ is supported. It is tempting to model hard detection along the boundary of a region $\Omega$ using an imaginary potential supported in the complement $\mathbb{R}^n\setminus  \Omega$ and taking a limit of the dynamics as $V \to \infty$. However, this limit famously returns norm-preserving dynamics in $\Omega$, so with probability $1$ the particle is never detected.

Tumulka has instead proposed that the interaction between the particle and the hard detector should be modeled by a local time-independent linear absorbing boundary condition, so that the dynamics of $\psi_t$ in $\Omega$ are governed by the initial-boundary value problem
\begin{equation}\label{IBVP}
\left\{\begin{array}{rclr}
       i \partial_t \psi&=& (-\Delta + V) \psi \quad &\text{in } \Omega
       \\
       \psi &=&\psi_0 \quad &\text{at } t=0
       \\
       \partial_n \psi&=&i \beta \psi \quad &\text{on } \partial \Omega
    \end{array}
    \right.
    \end{equation}
Here, $\partial_n$ denotes the outwards normal derivative of $\Omega$, and $\beta$ is a function on $\partial \Omega$ satisfying $\text{Re}(\beta)\geq 0$. The real part of $\beta$ should be taken strictly positive wherever detectors are present along $\partial \Omega$.

Tumulka's absorbing boundary condition was largely motivated by the question of \textit{detection time distributions}. Quantum mechanics accurately predicts the formation of interference patterns in the distribution of particle positions measured along a screen, but a theoretical description of the distribution of \textit{times} at which a particle is detected on a screen is still an open question \cite{Mielnik}. Various competing proposals have been made for this distribution (see e.g. \cite{MUGA2000353} for a wide review), but Tumulka's model admits a natural Born rule. For initial wave functions satisfying $||\psi_0||_{L^2(\Omega)}=1$, the \textit{absorbing boundary rule} states that over any time interval $0 \leq t_1 <t_2$, the probability of detecting a particle with wave function $\psi_t$ satisfying equation
(\ref{IBVP}) is
\begin{equation}
    \text{Prob}_{\psi_0}(t_1 \leq t \leq t_2)= \int_{t_1}^{t_2} \int_{\partial \Omega} \vec{n}\cdot \vec{j}_{\psi_t}~ dx^{n-1}dt=\int_{t_1}^{t_2}\int_{\partial \Omega} 2\text{Re}(\beta)|\psi_t|^2~ dx^{n-1}dt
\end{equation}
where $dx^{n-1}$ denotes the surface element on $\partial \Omega$, $\vec{n}$ the outwards unit normal and $\vec{j}_{\psi_t}=2\text{Im} (\psi_t^* \vec{\nabla}\psi_t)$ is the probability current. It has been shown in \cite{Existence} that for sufficiently regular $\beta$, the initial-boundary value problem (\ref{IBVP}) admits a unique global-in-time solution $\psi_t$ for each $\psi_0 \in H^2(\Omega)$ satisfying $\partial_n\psi_0 \big{|}_{\partial \Omega}=i\beta\psi_0 \big{|}_{\partial \Omega}$. The solution mappings $W_t:\psi_0 \mapsto\psi_t$ were shown to continuously extend to a unique $C_0$ contraction semigroup on $L^2(\Omega)$, and the detection time probability distributions are well defined for any initial data $\psi_0 \in L^2(\Omega)$. We emphasize that this proposal provides a probability distribution for the times at which the particle is first \textit{detected} along $\partial \Omega$, not the arrival time of the particle in the absence of detectors.
\par
This paper aims to establish a converse of the result in \cite{Existence}. We shall prove that every $C_0$ contraction semigroup on $L^2(\Omega)$ whose generator is extended by $-i\hat{H}^*$ corresponds to the placement of a time-independent linear absorbing boundary condition. In one space dimension this result is known (see e.g. \cite{Arendt2023}), and we will find it instructive to state for the simplest case $\Omega=(-\infty,0)$.
\begin{theorem*}
    Let $\hat{H}:=(-\partial_x^2 + V)\big{|}_{C_c^\infty((-\infty,0))}$ with $V \in L^\infty((-\infty,0],\mathbb{R})$. Then a $C_0$ contraction semigroup $W_t$ on $L^2((-\infty,0])$ has its generator extended by $-i\hat{H}^*$ if and only if there exists a $\Phi \in \mathbb{C}$ with $|\Phi|\leq 1$ such that $W_t$ is the solution mapping of the initial-boundary value problem \em
    \begin{equation}
        \left\{\begin{array}{rclr}
       i \partial_t \psi&=& \hat{H}^* \psi \quad &\text{in } \Omega
       \\
       \psi &=&\psi_0 \quad &\text{at } t=0
       \\
        \psi+i\partial_x\psi&=&\Phi(\psi-i\partial_x \psi) \quad &\text{at } x=0
    \end{array}
    \right.
    \end{equation}
\end{theorem*}
Analogous results in higher space dimensions require considerable effort to state and prove, as it is not immediately clear how to formalize the notion of boundary values for distributional wave functions residing in $D(\hat{H}^*)$. Generalizations of the theorem stated above do exist in the literature in terms of multivalued linear operators (see e.g. \cite{Derkach,BHS2020}); however, we shall offer an alternative characterization using explicit boundary conditions. This characterization generalizes the work of \cite{Facchi}, and naturally lends itself to the statement of a Born rule for the distribution of times at which the particle is detected along $\partial \Omega$.
\par
Our work is based on the theory of boundary tuples, a framework that goes back to the work of von Neumann \cite{NeumannOne,NeumannTwo} and later J.W. Calkin \cite{Calkin} on the parameterization of self-adjoint extensions of symmetric operators. Boundary triples and their modern incarnations have proven robust in parameterizing extensions of symmetric operators, and are notably applied to study spectral properties of elliptic differential operators (see e.g. \cite{BEHRNDTLanger2007,BEHRNDT2014}). Recently, Wegner \cite{Wegner2017}
demonstrated that a large class of $C_0$ contraction semigroups could be parameterized in the boundary triple framework without the use of multivalued linear operators and his result was soon extended by Arendt et al. \cite{Arendt2023} through the newly discovered theory of boundary quadruples.
\par
For a densely defined symmetric operator $\hat{H}$ on a Hilbert space $\mathcal{H}$, a \textit{boundary quadruple} for $-i\hat{H}$ consists of two Hilbert spaces $\mathcal{H}_\pm$ and two linear mappings $G_\pm:D(\hat{H}^*)\to\mathcal{H}_\pm$ such that
\begin{equation}\label{abstract green}
    \langle-i\hat{H}^*\psi, \phi \rangle_{\mathcal{H}} + \langle\psi,-i\hat{H}^*\phi\rangle_{\mathcal{H}}=\langle G_+\psi,G_+\phi\rangle_{\mathcal{H}_+}-\langle G_-\psi,G_- \phi \rangle_{\mathcal{H}_-}
\end{equation}
holds for all $\psi, \phi \in D(\hat{H}^*)$, and the map $(G_+,G_-):D(\hat{H}^*)\to \mathcal{H}_+\times \mathcal{H}_-$ is surjective. When $\hat{H}:C_c^\infty(\Omega)\to L^2(\Omega)$ is a symmetric differential operator on $\mathcal{H}=L^2(\Omega)$, it is often possible to construct a boundary quadruple for $-i\hat{H}$ where $G_\pm \psi\in L^2(\partial \Omega)$ are functions of $\psi$ or its derivatives restricted to $\partial \Omega$, in which case we call equation (\ref{abstract green}) an ``abstract Green's identity". All densely defined symmetric operators $\hat{H}$ admit boundary quadruples, and any quadruple can be used to parameterize the set of $C_0$ contraction semigroups whose generators are extended by $-i\hat{H}^*$.
\begin{theorem*}
    \cite[Theorem 3.10]{Arendt2023}  Let $-i\hat{H}$ be a densely defined skew-symmetric operator on a Hilbert space $\mathcal{H}$, and let $(\mathcal{H}_\pm,G_\pm)$ be a boundary quadruple for $-i\hat{H}$. Then the following are equivalent
    \begin{enumerate}
        \item[(a)] $W_t=\exp(-itL)$ is a $C_0$ contraction semigroup on $\mathcal{H}$ with generator $L\subset \hat{H}^*$.
        \item[(b)] There exists a linear contraction $\Phi:\mathcal{H_-}\to \mathcal{H}_+$ such that $W_t$ is the solution mapping of the abstract initial-boundary value problem  
    \begin{equation}\label{intro abstract IBVP}
    \left\{\begin{array}{rclr}
       i\partial_t \psi&=& \hat{H}^* \psi \quad 
       \\
       \psi\big{|}_{t=0} &=&\psi_0 
       \\
    G_+\psi&=&\Phi G_- \psi
    \end{array}
    \right.
\end{equation}
Stated formally, the generator of $W_t=\exp(-itL)$ takes the form $L=\hat{H}_{\Phi}$ where
\begin{equation}\label{Dom B}
            D(\hat{H}_\Phi):=\{\psi \in D(\hat{H}^*):G_+\psi=\Phi G_-\psi \}, \quad \hat{H}_\Phi\psi:=\hat{H}^*\psi. 
        \end{equation}
    \end{enumerate}
\end{theorem*}
If $\exp(-it\hat{H}_\Phi)$ is a $C_0$ contraction semigroup with $\hat{H}_\Phi$ of the form in equation (\ref{Dom B}), we call $G_+\psi=\Phi G_-\Psi$ an ``abstract absorbing boundary condition" because the norm of $\psi_t=\exp(-it\hat{H}_\Phi)\psi_0$ is lost at rate
    \begin{equation}
        \frac{d}{dt}||\psi_t||_{\mathcal{H}}^2=||\Phi G_-\psi_t||^2_{\mathcal{H}_+}- ||G_- \psi_t||^2_{\mathcal{H}_-}\leq 0.
    \end{equation}
\subsection{Main Results}
Several results in this paper are based on the following regularity assumptions for $\Omega$ and $V$.
\begin{hypothesis}\label{hypothesis}
    We assume $\Omega\subset \mathbb{R}^n$ is a bounded $C^2$ domain of dimension $n\geq 2$, and we take $\hat{H}:=(-\Delta+V)\big{|}_{C_c^\infty(\Omega)}$ with $V \in L^\infty(\Omega,\mathbb{R})$.
\end{hypothesis}

We shall provide an explicit parameterization of all $C_0$ contraction semigroups that ``weakly solve" the Schr\"odinger equation $i\partial_t\psi=\hat{H}^*\psi$ in $\Omega$, proving that they are the solution mappings of a particular class of initial-boundary value problems. 
\begin{theoremmain}\label{main}
    Assume Hypothesis \ref{hypothesis}. Then there exists linear maps $G_\pm:D(\hat{H}^*)\to L^2(\partial \Omega)$ that define a boundary quadruple for $-i\hat{H}$, where $G_\pm \psi$ are linear functions of $\psi$ and its normal derivative $\partial_n \psi$ along $\partial \Omega$ (see Proposition \ref{Current Decomposition}). Consequently, a $C_0$ contraction semigroup $W_t$ on $L^2(\Omega)$ has its generator extended by $-i\hat{H}^*$ if and only if there exists a linear contraction $\Phi:L^2(\partial\Omega)\to L^2(\partial \Omega)$ such that $W_t$ is the solution mapping of the initial-boundary value problem
    \em \begin{equation}\label{main gen IBVP}
\left\{\begin{array}{rclr}
       i \partial_t \psi&=& \hat{H}^* \psi \quad &\text{in } \Omega
       \\
       \psi &=&\psi_0 \quad &\text{at } t=0
       \\
       G_+\psi&=&\Phi G_-\psi \quad &\text{on } \partial \Omega
    \end{array}
    \right.
    \end{equation}
\end{theoremmain}
Hence, all models for idealized hard detection that satisfy conditions (C1-C3) must come from placing an absorbing boundary condition along $\partial \Omega$, and this boundary condition completely specifies the interaction between the particle and the detecting screen while the detector has not yet fired. Most of these $C_0$ contraction semigroups admit highly nonlocal dynamics. This is hardly surprising, given that our assumptions never ruled out dynamics where probability is instantaneously transported from one part of the boundary to another. For the modeling of hard detectors, only local absorbing boundary conditions are relevant. We shall study a large family of such boundary conditions known as the ``Robin" boundary conditions $\partial_n\psi \big{|}_{\partial \Omega}=i\beta \psi \big{|}_{\partial \Omega}$, extending the well-posedness result of \cite{Existence} to an even wider class of boundary operators $\beta$.
\begin{hypothesis}\label{Robin Operator Conditions}
    Suppose that $\beta=\beta_1+\beta_2+\beta_3$ is a sum of three bounded linear operators $\beta_k:H^{1/2}(\partial \Omega)\to H^{-1/2}(\partial \Omega)$ having ``non-negative real parts", meaning \em $\text{Re}\langle \beta_k \zeta,\zeta \rangle_{H^{-1/2}(\partial \Omega)\times H^{1/2}(\partial \Omega)}\geq 0$ \em for all $\zeta\in H^{1/2}(\partial \Omega)$ and each $k\in\{1,2,3\}$, with the following conditions on $\beta_1$, $\beta_2$, and $\beta_3$:
    \begin{enumerate}
        \em \item \em Compactness: $\beta_1:H^{1/2}(\partial \Omega)\to H^{-1/2}(\partial \Omega)$ is a compact operator.
        \em \item \em Smallness in operator norm: $||\beta_2
        ||_{\mathcal{B}(H^{1/2}(\partial \Omega),H^{-1/2}(\partial \Omega))}< ||\tau_D||^{-2}_{\mathcal{B}(H^{1}( \Omega),H^{1/2}(\partial \Omega))}$, where $\tau_D:H^1(\Omega)\to H^{1/2}(\partial \Omega)$ denotes the boundary value operator mapping $\psi \mapsto \psi \big{|}_{\partial \Omega}$.
        \em \item \em Non-negative imaginary part: \em $\text{Im}\langle \beta_3 \zeta, \zeta\rangle_{H^{-1/2}(\partial \Omega)\times H^{1/2}(\partial \Omega)}\geq 0$ \em for all $\zeta \in H^{1/2}(\partial \Omega)$. 
    \end{enumerate}
\end{hypothesis}
These conditions are quite general, allowing the Robin boundary operator to contain small tangential derivatives and; if $\dim(\Omega)>2$; multiplication by unbounded functions with poles of order up to $1$ (see \cite{Hardy} and Remark \ref{Hardy remark}). The special case where $i\beta$ is self-adjoint, for which detectors are absent and $\partial \Omega$ consists only of reflecting walls, has been studied in \cite{Gesztesy09} using the theory of quadratic forms. These methods may be insufficient to study post-detection states and relativistic generalizations of our model, so we take a more general approach in extending those results to the dissipative setting; we shall explicitly construct the linear contractions $\Phi$ on $L^2(\partial \Omega)$ corresponding to the Robin boundary conditions and prove the following well-posedness theorem.  
\begin{theoremmain}\label{main Robin}
   Assume Hypothesis \ref{hypothesis} and suppose that $\beta$ satisfies the conditions of Hypothesis \ref{Robin Operator Conditions}. Then the initial-boundary value problem 
   \em \begin{equation}\label{Gen Robin IBVP}
\left\{\begin{array}{rclr}
       i \partial_t \psi&=& \hat{H}^* \psi \quad &\text{in } \Omega
       \\
       \psi &=&\psi_0 \quad &\text{at } t=0
       \\
       \partial_n \psi&=&i \beta \psi \quad &\text{on } \partial \Omega
    \end{array}
    \right.
    \end{equation}
    admits a unique global-in-time solution $\psi_t\in C^1([0,\infty),L^2(\Omega))$ for all initial data $\psi_0 \in H^1(\Omega)\cap D(\hat{H}^*)$ satisfying $\partial_n \psi_0 \big{|}_{\partial \Omega}=i \beta \psi_0 \big{|}_{\partial \Omega}$. In addition, the solution mappings $W_t:\psi_0 \mapsto \psi_t$ extend continuously to a $C_0$ contraction semigroup on $L^2(\Omega)$.
\end{theoremmain}
For initial states $\psi_0$ of unit norm, the probability that the particle is never detected along $\partial \Omega$ is $\lim_{t \to \infty}||\psi_t||_{L^2(\Omega)}^2$. A natural question is whether this probability becomes zero whenever the particle has been  completely enveloped by detectors, meaning that the Robin boundary operator $\beta$ has \textit{strictly positive real part} on $\partial \Omega$: $\text{Re}\langle \beta \zeta, \zeta \rangle_{H^{-1/2}(\partial \Omega)\times H^{1/2}(\partial \Omega)}>0$ for all $0\neq \zeta \in H^{1/2}(\partial \Omega)$.
\begin{theoremmain}\label{main Robin stability}
Assume Hypothesis \ref{hypothesis} and suppose that $\beta$ satisfies the conditions of Hypothesis \ref{Robin Operator Conditions} while having strictly positive real part. Then all solutions of the initial-boundary value problem (\ref{Gen Robin IBVP}) asymptotically vanish, $||W_t\psi_0||_{L^2(\Omega)}\xrightarrow{t \to \infty}0$ for all $\psi_0 \in L^2(\Omega)$.
\end{theoremmain}
While studying a much broader class of detector models, Werner \cite{Werner1987} showed that any $C_0$ contraction semigroup $W_t$ on a Hilbert space $\mathcal{H}$ admits a positive operator-valued measure $E(\cdot)$ defined on Lebesgue measureable subsets $I \subset [0,\infty)\cup \{\text{N}\}$ and acting on $\mathcal{H}$ which produces the detection time probability distribution
\begin{equation}\label{main Werner distribution}
    \langle E(I)\psi_0,\psi_0\rangle_{\mathcal{H}}=\text{Prob}_{\psi_0}(t\in I):=\begin{cases}
        ||W_{t_1}\psi_0||_{\mathcal{H}}^2- ||W_{t_2}\psi_0||_{\mathcal{H}}^2 \quad &\text{for } I=(t_1,t_2)\subset[0,\infty)
        \\
        \lim_{t \to \infty}||W_{t}\psi_0||_{\mathcal{H}}^2 \quad &\text{for } I=\{\text{N} \}
    \end{cases}
\end{equation}
Here, $t=\text{N}$ denotes the event that the particle is never detected. The construction of this positive operator-valued measure is in general very abstract. The final result of this paper applies the boundary quadruple framework to explicitly construct $E(\cdot)$ whenever the $C_0$ contraction semigroup ``weakly solves" $i\partial_t\psi=\hat{H}^*\psi$ for some densely defined symmetric operator $\hat{H}$. 
\begin{theoremmain}\label{Born Rule Main}
    Let $-i\hat{H}$ be a densely defined skew-symmetric operator on a Hilbert space $\mathcal{H}$ and let $W_t$ be a $C_0$ contraction semigroup on $\mathcal{H}$ whose generator is extended by $-i\hat{H}^*$. Choosing any boundary quadruple $(\mathcal{H}_\pm, G_\pm)$ for $-i\hat{H}$, we may write $W_t=\exp(-it\hat{H}_\Phi)$ where $\Phi: \mathcal{H}_- \to \mathcal{H}_+$ is the linear contraction in (\ref{Dom B}). Then the linear mapping $\psi_0 \mapsto \sqrt{1-\Phi^*\Phi}G_-W_t\psi_0$ defines a bounded operator $J:\mathcal{H}\to L^2([0,\infty),\mathcal{H}_-)$ such that $||\left(J\psi_0\right) (t)||_{\mathcal{H_-}}^2$ is a (unnormalized) probability density for the detection time distribution \em 
    \begin{equation}
     \text{Prob}_{\psi_0}(t_1 \leq t \leq t_2):=||W_{t_1} \psi_0||_{\mathcal{H}}^2 - ||W_{t_2}\psi_0||_{\mathcal{H}}^2=\int_{t_1}^{t_2}||\left(J\psi_0\right) (t)||_{\mathcal{H}_-}^2 ~dt.
    \end{equation}
    \em Consequently, the positive operator-valued measure $E(\cdot)$ that produces the detection time probability distribution(\ref{main Werner distribution}) can be expressed in terms of the indicator function $\chi_I$ as $E(I)=J^*\chi_IJ$ for Lebesgue measureable subsets $I\subset [0,\infty)$ and $E(\{N\})=\mathds{1}_\mathcal{H}-J^*J$. 
\end{theoremmain}
The organization of this paper is as follows. Section 2 reviews the theory of boundary quadruples. Section 3 reviews the standard boundary tuple construction for Schr\"odinger operators on bounded $C^2$ domains, culminating in a proof of Theorem \ref{main}. Section 4 discusses particular initial-boundary value problems associated with Schr\"odinger operators, and presents a proof of Theorems \ref{main Robin} and \ref{main Robin stability}. In Section 5 we combine the work of Werner \cite{Werner1987} with the theory of boundary quadruples to state a Born rule for the time of detection.
\section{Dissipative Extensions of Skew-Symmetric Operators}
In this section, we review some applications of boundary quadruples presented in \cite{Arendt2023}. A Hilbert space $\mathcal{H}$ is taken to be a vector space over $\mathbb{C}$ equipped with an inner product $\langle \cdot, \cdot \rangle_{\mathcal{H}}$ that is linear in the first slot and anti-linear in the second slot, such that $\mathcal{H}$ is complete with respect to the norm $||\cdot||_{\mathcal{H}}:=\sqrt{\langle \cdot, \cdot \rangle_{\mathcal{H}}}$. An operator $A$ on $\mathcal{H}$ is a linear mapping defined on a subspace $D(A)$, called its domain, which takes values in $\mathcal{H}$. 
\begin{definition}
    \begin{enumerate}
        \item[(a)] An operator $B$ is an \textbf{extension} of $A$, denoted $A \subset B$, if $D(A)\subset D(B)$ and $B\psi=A\psi$ for $\psi \in D(A)$. 
        \item[(b)] An operator $A$ is called \textbf{dissipative} if 
        \begin{equation}
            \text{Re}\langle A\psi,\psi \rangle_{\mathcal{H}} \leq 0, \quad \forall \psi \in D(A). \end{equation}
        \item[(c)] An operator $A$ is called \textbf{skew-symmetric} if $\pm A$ are dissipative.
       
        \item[(d)] An operator $A$ is called \textbf{maximal dissipative} if $A\subset B$ with $B$ dissipative implies that $A=B$. 
        \item[(e)] An operator $A$ is called \textbf{m-dissipative} if $A$ is maximal dissipative and $D(A)$ is dense in $\mathcal{H}$. Equivalently, $A$ is \textbf{m-dissipative} if $A$ is dissipative and $(A-\mathds{1}):D(A)\to\mathcal{H}$ is invertible. 
        \item[(f)] A \textbf{$\mathbf{C_0}$ semigroup} $W_t$ on $\mathcal{H}$ is a one parameter family of linear operators on $\mathcal{H}$ for $t\in[0,\infty)$ satisfying
        \begin{enumerate}
            \item[1.] $W_0=\mathds{1}$, the identity operator on $\mathcal{H}$.
            \item[2.] $\forall t,s \geq 0$, $W_t W_s=W_{t+s}$, they form a semigroup under composition.
            \item[3.] $\forall \psi \in \mathcal{H}$,  $\lim_{t \to 0}||W_t\psi - \psi||_{\mathcal{H}} \to 0$.
        \end{enumerate}
       \item[(g)]  $W_t$ is a $\mathbf{C_0}$ \textbf{contraction semigroup} if additionally $||W_t \psi||_{\mathcal{H}}\leq ||\psi||_{\mathcal{H}}$ for all $\psi \in \mathcal{H}$, $t \geq 0$.
    \end{enumerate}
\end{definition}
The connection between $m$-dissipative operators and $C_0$ contraction semigroups is established by a theorem of Lumer and Phillips.
\begin{theorem}\label{Lumer-Phillips}\cite{Phi59}
    (Lumer-Phillips) Suppose $W_t$ is a $C_0$ contraction semigroup on some Hilbert space $\mathcal{H}$. Then there exists a linear operator $B$ such that
    \begin{enumerate}
        \item $D(B):= \{ \psi \in \mathcal{H}:\lim_{t \to 0^+}\frac{W_t \psi - \psi}{t} \text{ exists in }\mathcal{H} \}$ is dense in $\mathcal{H}$.
        \item $W_t=\exp(tB)$ i.e.  $\lim_{t \to t_0}\frac{W_t \psi - W_{t_0}\psi}{t-t_0}= B W_{t_0} \psi$ for all $\psi \in D(B)$, and all time $t_0\geq 0$.
        \item $W_t:D(B)\to D(B)$ and $W_tB\psi=BW_t\psi$ for all $\psi \in D(B)$, and all $t \geq 0$.
        \item $B$ is maximally dissipative, and hence also $m$-dissipative.
    \end{enumerate}
    The converse is also true, if $B$ is $m$-dissipative on $\mathcal{H}$ then it generates a $C_0$ contraction semigroup.
\end{theorem}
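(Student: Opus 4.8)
The plan is to prove the two implications separately, following the classical Hille--Yosida--Lumer--Phillips circle of ideas.

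\emph{Semigroup $\Rightarrow$ generator.} Given the $C_0$ contraction semigroup $W_t$, I would take item~1 as the \emph{definition} of $B$ and then check the remaining properties. Density of $D(B)$ comes from the averaging trick: for $\psi\in\mathcal H$ and $\varepsilon>0$ put $\psi_\varepsilon:=\varepsilon^{-1}\int_0^\varepsilon W_s\psi\,ds$, a Bochner integral that makes sense by strong continuity and $\|W_s\|\le1$; the semigroup law gives $h^{-1}(W_h\psi_\varepsilon-\psi_\varepsilon)\to\varepsilon^{-1}(W_\varepsilon\psi-\psi)$ as $h\to0^+$, so $\psi_\varepsilon\in D(B)$, while $\psi_\varepsilon\to\psi$, so $D(B)$ is dense. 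For $\psi\in D(B)$ the orbit $t\mapsto W_t\psi$ is right-differentiable with right derivative $W_tB\psi$ (again the semigroup law), this right derivative is continuous in $t$, hence the orbit is $C^1$ with $\tfrac{d}{dt}W_t\psi=W_tB\psi=BW_t\psi$; in particular $W_t D(B)\subset D(B)$ and $W_t$ commutes with $B$ there, giving items~2 and~3.

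\emph{Dissipativity and maximality.} Dissipativity of $B$ is immediate from
\begin{equation*}
\mathrm{Re}\langle B\psi,\psi\rangle_{\mathcal H}=\lim_{t\to0^+}\tfrac1t\,\mathrm{Re}\langle W_t\psi-\psi,\psi\rangle_{\mathcal H}\le\lim_{t\to0^+}\tfrac1t\big(\|W_t\psi\|_{\mathcal H}\|\psi\|_{\mathcal H}-\|\psi\|_{\mathcal H}^2\big)\le0.
\end{equation*}
The one analytic step is that $\mathds{1}-B:D(B)\to\mathcal H$ is bijective. I would introduce the Laplace-transform operator $R:=\int_0^\infty e^{-t}W_t\,dt$, which is bounded with $\|R\|\le1$, and check by a direct manipulation of the semigroup law that $R\psi\in D(B)$ with $(\mathds{1}-B)R\psi=\psi$ for all $\psi\in\mathcal H$ and $R(\mathds{1}-B)\psi=\psi$ for $\psi\in D(B)$; injectivity is free since dissipativity gives $\|(\mathds{1}-B)\psi\|\ge\|\psi\|$. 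Thus $B$ is dissipative and $\mathds{1}-B$ is invertible, i.e.\ m-dissipative; maximality among dissipative operators is then formal: if $B\subset C$ with $C$ dissipative, then for $\psi\in D(C)$ choose $\phi\in D(B)$ with $(\mathds{1}-B)\phi=(\mathds{1}-C)\psi$, so $(\mathds{1}-C)(\psi-\phi)=0$ and dissipativity of $C$ forces $\psi=\phi\in D(B)$.

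\emph{Generator $\Rightarrow$ semigroup.} Conversely, let $B$ be m-dissipative. First I would upgrade the resolvent: for $\lambda>0$ dissipativity yields $\|(\lambda-B)\psi\|\ge\lambda\|\psi\|$, so wherever $\lambda-B$ is surjective its inverse has norm $\le\lambda^{-1}$, and a Neumann-series continuation starting from $1\in\rho(B)$ gives $(0,\infty)\subset\rho(B)$ with $\|(\lambda-B)^{-1}\|\le\lambda^{-1}$. Then form the Yosida approximants $B_\lambda:=\lambda^2(\lambda-B)^{-1}-\lambda\mathds{1}=\lambda B(\lambda-B)^{-1}$, which are bounded, satisfy $B_\lambda\psi\to B\psi$ for $\psi\in D(B)$ (because $\lambda(\lambda-B)^{-1}\to\mathds{1}$ strongly), and generate contraction semigroups since $\|e^{tB_\lambda}\|\le e^{-\lambda t}e^{\lambda^2 t\|(\lambda-B)^{-1}\|}\le1$. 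As the $B_\lambda$ commute, $\|e^{tB_\lambda}\psi-e^{tB_\mu}\psi\|\le t\|B_\lambda\psi-B_\mu\psi\|$ on $D(B)$, so $W_t\psi:=\lim_{\lambda\to\infty}e^{tB_\lambda}\psi$ exists for $\psi\in D(B)$, and by density and the uniform bound $\|e^{tB_\lambda}\|\le1$ for all $\psi\in\mathcal H$, uniformly for $t$ in compact sets; this makes $W_t$ a $C_0$ contraction semigroup. Passing to the limit in $e^{tB_\lambda}\psi-\psi=\int_0^t e^{sB_\lambda}B_\lambda\psi\,ds$ shows $W_t\psi-\psi=\int_0^t W_sB\psi\,ds$ for $\psi\in D(B)$, so the generator $\tilde B$ of $W_t$ extends $B$; since $\tilde B$ is m-dissipative by the first part and $B$ is already m-dissipative, the maximality argument above gives $\tilde B=B$.

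\emph{Main obstacle.} Everything except two points is bookkeeping with the semigroup law and Bochner integrals. The first genuine step is the surjectivity of $\mathds{1}-B$ in the forward direction, which is exactly where the Laplace transform $R=\int_0^\infty e^{-t}W_t\,dt$ and the contraction hypothesis are used. The second, and the one I expect to demand the most care, is the converse construction: extending the resolvent to all of $(0,\infty)$ with the sharp bound, controlling the Yosida approximants, and checking that the semigroup they produce has generator exactly $B$ rather than a proper extension.
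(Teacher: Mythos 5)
The paper does not prove this theorem: it is quoted as a classical result of Lumer and Phillips \cite{Phi59} and used as a black box throughout. Your argument is the standard textbook proof (difference-quotient definition of $B$, the averaging trick for density of $D(B)$, the Laplace-transform operator $R=\int_0^\infty e^{-t}W_t\,dt$ for surjectivity of $\mathds{1}-B$, and Yosida approximants for the converse), and it is correct as outlined.
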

If $B$ is the generator of a $C_0$ contraction semigroup, then $2\text{Re}\langle B \psi, \psi \rangle_{\mathcal{H}}$ represents the rate at which $\exp(tB)\psi$ loses norm at $t=0$
\begin{equation}
    \frac{d}{dt} ||\exp(tB)\psi||_\mathcal{H}^2 \bigg{|}_{t=0}=\langle B \psi, \psi\rangle_{\mathcal{H}} + \langle \psi, B \psi \rangle_{\mathcal{H}}=2\text{Re}\langle B \psi, \psi \rangle_{\mathcal{H}}, \quad \forall \psi \in D(B).
\end{equation}
For a densely defined symmetric operator $\hat{H}_0$, our goal is to describe all $m$-dissipative extensions of $A_0=-i\hat{H}_0$, as these generate contraction semigroups that ``weakly" solve $i\frac{\partial \psi}{\partial t}=\hat{H}_0 \psi$. We begin by showing that all dissipative extensions of a densely defined skew-symmetric operator $A_0$ are restrictions of a ``maximal operator".
\begin{definition}
    For a densely defined operator $B$ on $\mathcal{H}$, we define its adjoint $B^*$
    \begin{equation}
        D(B^*):= \{\psi \in \mathcal{H}: \text{there exists some } \eta\in \mathcal{H} \text{ s.t } \langle \psi,B \phi \rangle_\mathcal{H}= \langle \eta,\phi \rangle_\mathcal{H}, \forall \phi \in D(B) \}, \quad B^*\psi:=\eta.
    \end{equation}
\end{definition}
\begin{proposition}\label{Diss ext rest}\cite[Theorem 2.5]{Arendt2023}
    Let $A_0$ be a densely defined skew-symmetric operator and $B$ a dissipative operator such that $A_0 \subset B$. Then $B \subset (-A_0)^*$.
\end{proposition}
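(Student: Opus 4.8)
The plan is to prove the inclusion directly: show that every $\psi\in D(B)$ lies in $D((-A_0)^*)$ and satisfies $(-A_0)^*\psi = B\psi$, which is precisely what $B\subset(-A_0)^*$ means. Since $A_0$ is densely defined, so is $-A_0$, hence $(-A_0)^*$ is well defined. Fix $\psi\in D(B)$ and an arbitrary $\phi\in D(A_0)$. Because $D(A_0)\subset D(B)$, the vectors $u=\psi+\lambda\phi$ belong to $D(B)$ for every $\lambda\in\mathbb{C}$, so dissipativity of $B$ gives $\operatorname{Re}\langle Bu,u\rangle_{\mathcal H}\le 0$ for all such $\lambda$; the idea is to extract information from this family of inequalities.

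First I would expand, using linearity of $B$ and sesquilinearity of the inner product,
\begin{equation}
\langle B(\psi+\lambda\phi),\psi+\lambda\phi\rangle_{\mathcal H}
=\langle B\psi,\psi\rangle_{\mathcal H}+\bar\lambda\langle B\psi,\phi\rangle_{\mathcal H}+\lambda\langle B\phi,\psi\rangle_{\mathcal H}+|\lambda|^2\langle B\phi,\phi\rangle_{\mathcal H}.
\end{equation}
Here $\phi\in D(A_0)$, so $B\phi=A_0\phi$ and, since $A_0$ is skew-symmetric, $\operatorname{Re}\langle B\phi,\phi\rangle_{\mathcal H}=\operatorname{Re}\langle A_0\phi,\phi\rangle_{\mathcal H}=0$; thus the $|\lambda|^2$ term drops out of the real part. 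Writing $\langle B\phi,\psi\rangle_{\mathcal H}=\overline{\langle\psi,B\phi\rangle_{\mathcal H}}$ to combine the two terms linear in $\lambda$, the dissipativity inequality becomes
\begin{equation}
\operatorname{Re}\langle B\psi,\psi\rangle_{\mathcal H}+\operatorname{Re}\!\big(\bar\lambda\,z\big)\le 0\qquad\text{for all }\lambda\in\mathbb{C},\qquad z:=\langle B\psi,\phi\rangle_{\mathcal H}+\langle\psi,B\phi\rangle_{\mathcal H}.
\end{equation}

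Next I would observe that a real quantity of the form $c+\operatorname{Re}(\bar\lambda z)$ can stay $\le 0$ for every $\lambda\in\mathbb{C}$ only if $z=0$: otherwise taking $\lambda=tz$ with $t\to+\infty$ gives $\operatorname{Re}(\bar\lambda z)=t|z|^2\to+\infty$. Hence $z=0$, i.e. $\langle B\psi,\phi\rangle_{\mathcal H}=-\langle\psi,B\phi\rangle_{\mathcal H}=\langle\psi,(-A_0)\phi\rangle_{\mathcal H}$, using $B\phi=A_0\phi$ and that $-1$ is real. Since $\phi\in D(A_0)$ was arbitrary, the definition of the adjoint yields $\psi\in D((-A_0)^*)$ with $(-A_0)^*\psi=B\psi$; as $\psi\in D(B)$ was arbitrary, $B\subset(-A_0)^*$.

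I do not expect a real obstacle here. The only point needing care is the bookkeeping with the anti-linear slot and the use of a \emph{complex} parameter $\lambda$ rather than a real one: a real $\lambda$ would only annihilate the real part of the linear coefficient and leave its imaginary part undetermined, whereas collecting the two linear terms as $\bar\lambda z$ and varying $\lambda$ over all of $\mathbb{C}$ kills $z$ entirely in one stroke.
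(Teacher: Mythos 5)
Your proof is correct and is essentially the paper's argument: both expand the dissipativity inequality on a combination $\psi+\lambda\phi$ of an element of $D(B)$ with an element of $D(A_0)$, use skew-symmetry to annihilate one diagonal term, and vary the complex parameter to force the cross term $\langle B\psi,\phi\rangle_{\mathcal H}+\langle\psi,B\phi\rangle_{\mathcal H}$ to vanish. The only (cosmetic) difference is which diagonal term survives: the paper attaches the parameter to the $D(B)$-vector and sends the scale to $0$ to discard the quadratic term, whereas you attach it to the $D(A_0)$-vector and send $|\lambda|\to\infty$ to overwhelm the constant term $\operatorname{Re}\langle B\psi,\psi\rangle_{\mathcal H}\le 0$.
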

Any dissipative extension of $A_0$ must be given by some restriction of $(-A_0)^*$, so we would like to identify which class of restrictions return $m$-dissipative operators. To do this we apply the theory of boundary quadruples. For convenience we henceforth denote $A:=(-A_0)^*$.
\begin{definition}
    A \textbf{boundary quadruple} for a densely defined skew-symmetric operator $A_0$ consists of Hilbert spaces $\mathcal{H}_\pm$ and continuous, surjective linear maps $G_\pm: D(A)\to \mathcal{H}_\pm$ satisfying 
    \begin{equation}
        \langle A \psi, \phi \rangle_{\mathcal{H}} + \langle \psi, A \phi \rangle_{\mathcal{H}} = \langle G_+\psi, G_+ \phi \rangle_{\mathcal{H}_+} - \langle G_- \psi, G_- \phi \rangle_{\mathcal{\mathcal{H}_-}} 
    \end{equation}
    for all $\psi, \phi \in D(A)$, and 
    \begin{equation}
        \ker G_+ + \ker G_- = D(A), \text{ or equivalently } (G_+,G_-):D(A)\to \mathcal{H}_+ \times \mathcal{H}_- \text{ surjective}.
    \end{equation}
\end{definition}
Boundary quadruples always exist (see \cite[Example 3.6 and Example 3.7]{Arendt2023}), and all boundary quadruples are isometric (see \cite[Theorem 3.16]{Arendt2023}), but they may not obviously correspond to the behavior of wave functions along the boundary of some region. The example below does.
\begin{example}\label{One D Boundary Quad Example}
\em    Consider $A_0=\partial_x$ densely defined on $C_c^\infty((0,1))\subset L^2((0,1))$. This operator is skew symmetric, and its maximal extension is  $A=\partial_x :H^1((0,1))\to L^2((0,1))$. Recall by Sobolev embedding theorems \cite{Brezis} that $H^1((0,1))\subset C([0,1])$, the space of continuous functions on $[0,1]$, so the values of $H^1$ functions are well-defined at the endpoints. Integration by parts returns
    \begin{equation}
        \langle \partial_x \psi, \phi \rangle_{L^2((0,1))} + \langle \psi, \partial_x \phi \rangle_{L^2((0,1))} = \psi(1)\phi^*(1) - \psi(0)\phi^*(0), \quad \forall \psi, \phi \in H^1([0,1]).
    \end{equation}
    A boundary quadruple for $A$ is thus given by $G_\pm: H^1((0,1))\to \mathbb{C}$ with $G_+ \psi:= \psi(1)$, $G_-\psi:=\psi(0)$, after checking that $(G_+,G_-):H^1((0,1))\to \mathbb{C}\times \mathbb{C}$ surjective.
\end{example}
Given a fixed boundary quadruple $(\mathcal{H}_\pm, {G}_\pm)$ for a densely defined skew symmetric operator $A_0$, we will now parameterize all of its $m$-dissipative extensions. It is useful to first state a lemma relating the boundary quadruple to $\overline{A_0}$. Recall that any densely-defined skew-symmetric operator $A_0$ is closeable, and its closure $\overline{A_0}$ is also skew-symmetric.
\begin{lemma}\label{Closure} \cite[Proposition 3.9]{Arendt2023}
$D(\overline{A_0})=\ker G_+ \cap \ker G_-$ and $\overline{A_0}\psi=A\psi$ for all $\psi \in D(\overline{A_0})$.
\end{lemma}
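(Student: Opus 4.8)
The plan is to identify $\overline{A_0}$ by passing through the adjoint of the maximal operator $A$ and then reading off the boundary data from the abstract Green's identity. The first step is the algebraic bookkeeping: since $A := (-A_0)^* = -A_0^*$, we have $A_0^* = -A$ and hence $\overline{A_0} = (A_0^*)^* = (-A)^* = -A^*$; in particular $D(\overline{A_0}) = D(A^*)$. Separately, $A_0 \subset A$ with $A$ closed forces $\overline{A_0} \subset A$, so $D(\overline{A_0}) \subset D(A)$ (in particular $G_\pm\psi$ is defined for every $\psi \in D(\overline{A_0})$) and $\overline{A_0}\psi = A\psi$ on $D(\overline{A_0})$ — this is already the second assertion of the lemma, modulo the domain equality. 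Combining the two observations, every $\psi \in D(A^*)$ satisfies $A^*\psi = -\overline{A_0}\psi = -A\psi$, which is equivalent to $\langle A\psi, \phi\rangle_\mathcal{H} + \langle \psi, A\phi\rangle_\mathcal{H} = 0$ for all $\phi \in D(A)$. So the whole lemma reduces to showing that, for $\psi \in D(A)$, this vanishing condition holds if and only if $G_+\psi = G_-\psi = 0$.

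For the inclusion $\ker G_+ \cap \ker G_- \subset D(\overline{A_0})$: if $G_+\psi = G_-\psi = 0$, the right-hand side of the Green's identity vanishes for every $\phi \in D(A)$, so $\langle \psi, A\phi\rangle_\mathcal{H} = -\langle A\psi,\phi\rangle_\mathcal{H}$ for all $\phi \in D(A)$, which says precisely that $\psi \in D(A^*) = D(\overline{A_0})$, with $\overline{A_0}\psi = -A^*\psi = A\psi$. For the reverse inclusion, take $\psi \in D(\overline{A_0})$; by the paragraph above $\langle A\psi,\phi\rangle_\mathcal{H} + \langle\psi, A\phi\rangle_\mathcal{H} = 0$ for all $\phi \in D(A)$, so the Green's identity gives $\langle G_+\psi, G_+\phi\rangle_{\mathcal{H}_+} = \langle G_-\psi, G_-\phi\rangle_{\mathcal{H}_-}$ for every $\phi \in D(A)$. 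Now I would invoke the joint surjectivity of $(G_+,G_-) : D(A) \to \mathcal{H}_+ \times \mathcal{H}_-$: choosing $\phi$ with $(G_+\phi, G_-\phi) = (G_+\psi, 0)$ yields $\|G_+\psi\|_{\mathcal{H}_+}^2 = 0$, and choosing $\phi$ with $(G_+\phi, G_-\phi) = (0, G_-\psi)$ yields $\|G_-\psi\|_{\mathcal{H}_-}^2 = 0$; hence $\psi \in \ker G_+ \cap \ker G_-$.

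The only place the boundary-quadruple axioms are used in an essential way is this reverse inclusion, and the crucial ingredient there is the joint surjectivity of $(G_+,G_-)$: the Green's identity alone tells us merely that the boundary datum of $\psi$ is isotropic for the indefinite form $\langle\cdot,\cdot\rangle_{\mathcal{H}_+} - \langle\cdot,\cdot\rangle_{\mathcal{H}_-}$, and it is surjectivity that lets one separately probe the $\mathcal{H}_+$ and $\mathcal{H}_-$ components by testing against $(G_+\psi, 0)$ and $(0, G_-\psi)$. I therefore do not expect a genuine obstacle here; the points to handle carefully are the sign conventions linking $A$, $A_0^*$ and $\overline{A_0}$, and the remark that $\overline{A_0}\subset A$ — which is what makes the maps $G_\pm$ meaningful on $D(\overline{A_0})$ in the first place and simultaneously delivers $\overline{A_0}\psi = A\psi$ there.
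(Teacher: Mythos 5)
Your proof is correct and follows essentially the same route as the paper: both directions use the abstract Green's identity together with the joint surjectivity of $(G_+,G_-)$, the only cosmetic differences being that you phrase the setup via $\overline{A_0}=-A^*$ rather than via the paper's Proposition \ref{Diss ext rest}, and you test against two boundary data $(G_+\psi,0)$ and $(0,G_-\psi)$ where the paper uses the single choice $(G_+\psi,-G_-\psi)$.
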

Like $\overline{A_0}$, the domain of any $m$-dissipative extension of $A_0$ can be easily expressed in terms of the boundary quadruple. For $\Phi:\mathcal{H}_- \to \mathcal{H}_+$ a linear contraction, we define the operator $A_\Phi$ on $\mathcal{H}$ via
\begin{equation}
    D(A_\Phi):= \{\psi \in D(A): \Phi G_- \psi= G_+ \psi \}, \quad A_\Phi \psi:= A\psi.
\end{equation}
Clearly $A_0 \subset A_\Phi \subset A$, and $A_\Phi$ dissipative because for all $\psi \in D(A_\Phi)$
\begin{equation}\label{Probability Contracting}
    2\text{Re}\langle A_\Phi \psi, \psi \rangle_{\mathcal{H}}= ||G_+\psi||_{\mathcal{H}_+}^2 - ||G_-\psi||_{\mathcal{H}_-}^2=||\Phi G_-\psi||_{\mathcal{H}_+}^2 - ||G_-\psi||_{\mathcal{H}_-}^2 \leq 0.
\end{equation}
We shall now state a parameterization of all $C_0$ contraction semigroup generators that extend $A_0$.
\begin{theorem}\label{Description}
    \cite[Theorem 3.10]{Arendt2023}  Let $B$ be an operator on $\mathcal{H}$ such that $A_0 \subset B$. Then the following are equivalent
    \begin{enumerate}
        \item[(a)] $B$ is $m-$dissipative.
        \item[(b)] There exists a linear contraction $\Phi:\mathcal{H_-}\to \mathcal{H}_+$ such that $B=A_\Phi$.
    \end{enumerate}
\end{theorem}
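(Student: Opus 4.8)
The plan is to establish the two implications separately, using Lemma \ref{Closure} and the characterization of $m$-dissipativity via invertibility of $A - \mathds{1}$. For the direction (b) $\Rightarrow$ (a), suppose $B = A_\Phi$ for a linear contraction $\Phi$. We already checked in equation (\ref{Probability Contracting}) that $A_\Phi$ is dissipative, and $D(A_\Phi) \supset D(A_0)$ is dense, so by the Lumer--Phillips characterization it suffices to show $(A_\Phi - \mathds{1}): D(A_\Phi) \to \mathcal{H}$ is surjective (injectivity is automatic from dissipativity). Given $f \in \mathcal{H}$, I would first solve the inhomogeneous problem at the level of the maximal operator: since $\overline{A_0}$ is $m$-dissipative (being a closed skew-symmetric... actually one needs $\overline{A_0} - \mathds{1}$ to have closed range and trivial cokernel, which follows because $(\overline{A_0})^* = A_0^* = -A$ and $-A - \mathds{1} = -(A+\mathds{1})$ is injective by dissipativity of $-A_0 \subset -A$), there is a particular $\psi_p \in D(A)$ with $(A - \mathds{1})\psi_p = f$. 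The remaining freedom lies in $\ker(A - \mathds{1})$, and the key structural fact to extract is that $G_\pm$ restrict to \emph{bijections} from $\ker(A-\mathds{1})$ (respectively $\ker(A+\mathds{1})$) onto $\mathcal{H}_\pm$; this is where the abstract Green's identity does the work, since for $\psi \in \ker(A - \mathds{1})$ one computes $2\,\mathrm{Re}\langle A\psi,\psi\rangle = 2\|\psi\|^2 = \|G_+\psi\|^2 - \|G_-\psi\|^2$, giving a quantitative comparison of the two boundary norms. Then solving $G_+\psi = \Phi G_-\psi$ for a correction $\psi \in \ker(A-\mathds{1})$ reduces, after transporting through these bijections, to inverting an operator of the form $\mathds{1} - \Phi S$ where $S$ is a strict-in-the-right-sense map and $\Phi$ is a contraction, so $\|\Phi S\| < 1$ or a Neumann-series/Fredholm argument applies.

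For the direction (a) $\Rightarrow$ (b), suppose $B$ is $m$-dissipative with $A_0 \subset B$. By Proposition \ref{Diss ext rest} we have $B \subset A$, so $B$ is a restriction of $A$ and is determined by its domain $D(B) \subset D(A)$. The idea is to produce $\Phi$ by specifying its graph: define a relation on $\mathcal{H}_- \times \mathcal{H}_+$ as $\{(G_-\psi, G_+\psi) : \psi \in D(B)\}$, and show (i) it is the graph of a well-defined single-valued map, (ii) that map is everywhere defined on $\mathcal{H}_-$, and (iii) it is a contraction. Single-valuedness: if $\psi \in D(B)$ has $G_-\psi = 0$, then by dissipativity of $B$, $0 \geq 2\,\mathrm{Re}\langle A\psi,\psi\rangle = \|G_+\psi\|^2 - \|G_-\psi\|^2 = \|G_+\psi\|^2$, forcing $G_+\psi = 0$; linearity then gives well-definedness, and the same computation on general $\psi \in D(B)$ gives $\|G_+\psi\| \leq \|G_-\psi\|$, i.e. contractivity on the range of $G_-|_{D(B)}$. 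The nontrivial point is (ii), that $G_-$ maps $D(B)$ \emph{onto} all of $\mathcal{H}_-$; this is precisely where maximality of $B$ is used — if the range were a proper (closed) subspace one could enlarge $D(B)$ by adjoining a suitable element of $\ker(A - \mathds{1})$ while preserving dissipativity, contradicting maximality. One then extends $\Phi$ by continuity/closure from a dense-or-closed domain to all of $\mathcal{H}_-$ and checks $A_\Phi = B$ (the inclusion $B \subset A_\Phi$ is immediate; equality again uses maximality of $B$ and dissipativity of $A_\Phi$).

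I expect the main obstacle to be the surjectivity statements — both the claim that $G_\pm$ restricted to the eigenspaces $\ker(A \mp \mathds{1})$ is bijective onto $\mathcal{H}_\pm$, and the claim that $G_-$ is surjective from $D(B)$ when $B$ is maximal. These require combining the algebraic surjectivity of $(G_+, G_-)$ on $D(A)$ from the boundary-quadruple axioms with the analytic input that $A_0 - \mathds{1}$ (equivalently $\overline{A_0} - \mathds{1}$) has closed range, so that $D(A) = D(\overline{A_0}) \oplus \ker(A - \mathds{1})$ as a (not necessarily orthogonal) direct sum, together with Lemma \ref{Closure} which identifies $D(\overline{A_0}) = \ker G_+ \cap \ker G_-$. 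Once that decomposition is in hand, tracking how $G_\pm$ acts on each summand, and showing the relevant boundary operators between $\mathcal{H}_+$ and $\mathcal{H}_-$ are invertible (or Fredholm of index zero with trivial kernel by the strict norm inequality), is the heart of the argument; the maximality-versus-enlargement step is then a clean contradiction argument built on the same decomposition.
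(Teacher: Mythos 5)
Your overall architecture (graph of $\Phi$ for (a)$\Rightarrow$(b), range condition for (b)$\Rightarrow$(a)) is reasonable, but the hard half of your argument contains a genuine gap, and both halves are substantially harder than what the paper actually does. For (b)$\Rightarrow$(a) you propose to verify surjectivity of $A_\Phi-\mathds{1}$ via the decomposition $D(A)=D(\overline{A_0})\oplus\ker(A-\mathds{1})\oplus\ker(A+\mathds{1})$ and the claim that $G_+$ restricts to a \emph{bijection} of $\ker(A-\mathds{1})$ onto $\mathcal{H}_+$. The Green's identity computation you give, $\|G_+\psi\|^2-\|G_-\psi\|^2=2\|\psi\|^2$ on $\ker(A-\mathds{1})$, yields injectivity and closed range (a lower bound $\|G_+\psi\|\geq\sqrt{2}\|\psi\|$), but \emph{surjectivity onto all of} $\mathcal{H}_+$ does not follow from it, nor does it follow immediately from the joint surjectivity of $(G_+,G_-)$ on $D(A)$: a preimage of $(\xi,0)$ decomposes as $\psi_++\psi_-$ with cross terms $G_+\psi_-$, $G_-\psi_+$ that you have no control over. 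This surjectivity is the crux of the Cayley-transform route and you have asserted rather than proved it; likewise the final ``Neumann-series/Fredholm'' inversion of $\mathds{1}-\Phi S$ needs the quantitative bound $\|S\|<1$, which requires an argument you have not supplied ($\Phi$ being merely a contraction gives nothing by itself).

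The paper avoids all of this. For (a)$\Rightarrow$(b) it builds $\Phi$ on $G_-D(B)$ exactly as you do (well-definedness and contractivity from dissipativity), but then simply extends by continuity to $\overline{G_-D(B)}$ and by $0$ on the orthogonal complement --- so your step (ii), the surjectivity of $G_-$ from $D(B)$, is never needed, and your proposed proof of it (adjoining an element of $\ker(A-\mathds{1})$ while preserving dissipativity) is left unexecuted anyway. This gives $B\subset A_\Phi$ for \emph{every} dissipative extension $B$ of $A_0$, and maximality of $B$ plus dissipativity of $A_\Phi$ (equation (\ref{Probability Contracting})) forces $B=A_\Phi$. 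For (b)$\Rightarrow$(a) the paper never touches the range of $A_\Phi-\mathds{1}$: it applies the same graph construction to an arbitrary dissipative extension $B'$ of $A_\Phi$ to get $A_\Phi\subset B'\subset A_{\Phi'}$, and the preceding lemma ($A_{\Phi_1}\subset A_{\Phi_2}$ implies $\Phi_1=\Phi_2$) collapses the chain, showing $A_\Phi$ is maximal dissipative; density of $D(A_\Phi)\supset D(A_0)$ then gives $m$-dissipativity via the stated equivalence in Definition 2.1(e). If you want to salvage your route you must prove the bijectivity of $G_\pm|_{\ker(A\mp\mathds{1})}$, but the maximality argument is the intended and far shorter path.
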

When the boundary quadruple mappings $G_\pm$ directly relate to the values of the wave function along the boundary $\partial \Omega$ of some region $\Omega$, as was the case in Example \ref{One D Boundary Quad Example}, the condition $G_+\psi=\Phi G_-\psi$ becomes identifiable as an ``absorbing boundary condition". In combination with the theorem of Lumer-Phillips, Theorem \ref{Description} states that for every linear contraction $\Phi:\mathcal{H}_- \to \mathcal{H}_+$, the initial-boundary value problem
\begin{equation}\label{abstract IBVP}
    \left\{\begin{array}{rclr}
       \partial_t \psi&=& A \psi \quad 
       \\
       \psi\big{|}_{t=0} &=&\psi_0 
       \\
    G_+\psi&=&\Phi G_- \psi
    \end{array}
    \right.
\end{equation}
admits a unique global-in-time solution for each initial $\psi_0 \in D(A_\Phi)$, and the solution mappings $W_t:\psi_0\mapsto\psi_t$ extend to a $C_0$ contraction semigroup on $\mathcal{H}$. Theorem \ref{Description} states that a converse is also true, any $C_0$ contraction semigroup whose generator is an extension of $A_0$ (or equivalently, a restriction of $-A_0^*$) must be the solution mapping of an initial-boundary value problem of the form (\ref{abstract IBVP}).
\par
When the boundary condition is \textit{strictly} absorbing; $||\Phi \xi||_{\mathcal{H}_+}<||\xi||_{\mathcal{H}_-}$ for all $\xi \neq 0$ in $\mathcal{H}_-$; one may expect $\exp(tA_\Phi)\psi$ to asymptotically vanish as $t \to \infty$. This is generally not true, but the Theorem below provides necessary and sufficient conditions when $A_\Phi$ has compact resolvent. Note that since $A_\Phi$ is $m$-dissipative its resolvent set $\rho(A_\Phi)$ is non-empty with $1 \in \rho(A_\Phi)$.  
\begin{theorem}\label{asympt stability}
    Let $\Phi:\mathcal{H}_-\to \mathcal{H}_+$ be a linear contraction satisfying $||\Phi \xi||_{\mathcal{H}_+}<||\xi||_{\mathcal{H}_-}$ for all $0\neq \xi \in \mathcal{H}_-$, and suppose $(A_\Phi - \alpha)^{-1}$ is a compact operator on $\mathcal{H}$ for some $\alpha \in \rho(A_\Phi)$. Then the following are equivalent
    \begin{enumerate}
        \item $\overline{A_0}$ has no eigenvalues on the imaginary axis.
        \item $||\exp(tA_\Phi)\psi_0||_\mathcal{H}\xrightarrow{t \to \infty}0$ for each $\psi_0 \in \mathcal{H}$.
    \end{enumerate}
\end{theorem}
\begin{proof}
The first direction is straightforward. Suppose $\overline{A_0}$ admits an eigenvalue $i\lambda$ on the imaginary axis, and let $0\neq\psi_\lambda\in D(\overline{A_0})$ satisfy $\overline{A_0}\psi_{\lambda}=i\lambda \psi_{\lambda}$. Then $A_\Phi \psi_\lambda=i\lambda \psi_{\lambda}$, hence $\exp(tA_\Phi)\psi_\lambda=e^{i\lambda t}\psi_\lambda$, which does not asymptotically vanish as $t \to \infty$.

To prove the other direction, we rely on a theorem of Arendt, Batty, Lyubich and Phong.
\begin{theorem*}
    \cite{ArendtStab,Lyubich1988} Let $W_t=\exp(tB)$ be a $C_0$ contraction semigroup on $\mathcal{H}$ such that $B$ has no eigenvalues on the imaginary axis and $\sigma(B)\cap i\mathbb{R}$ is countable. Then $||W_t \psi_0||\xrightarrow{t \to \infty}0$ for each $\psi_0 \in \mathcal{H}$. 
\end{theorem*}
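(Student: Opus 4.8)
The plan is to reconstruct the Arendt--Batty argument: apply a vector-valued Tauberian theorem to each orbit $u(t)=W_t\psi_0$, whose Laplace transform is the resolvent $R(\lambda,B)=(\lambda-B)^{-1}$. A preliminary reduction aligns the hypotheses with that machinery, which naturally asks for control of $\sigma_p(B^*)$ rather than $\sigma_p(B)$. In the present Hilbert-space contraction setting the two coincide: if $T$ is a contraction on $\mathcal H$ and $T\psi=\psi$, then, using $\langle T^*\psi,\psi\rangle=\langle\psi,T\psi\rangle=\|\psi\|^2$,
\[
\|T^*\psi-\psi\|^2=\|T^*\psi\|^2-2\operatorname{Re}\langle T^*\psi,\psi\rangle+\|\psi\|^2\le \|\psi\|^2-2\|\psi\|^2+\|\psi\|^2=0,
\]
so $T^*\psi=\psi$. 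Applying this to the contraction $e^{-i\eta t}W_t$ shows that $B\psi=i\eta\psi$ (real $\eta$) forces $W_t^*\psi=e^{-i\eta t}\psi$ and hence $B^*\psi=-i\eta\psi$; the converse is symmetric. Thus $\sigma_p(B)\cap i\mathbb R=\emptyset$ if and only if $\sigma_p(B^*)\cap i\mathbb R=\emptyset$, and I may freely use both. Note also that $B$ is $m$-dissipative, so $\sigma(B)\subseteq\{\operatorname{Re}\lambda\le 0\}$ and $\|R(\lambda,B)\|\le 1/\operatorname{Re}\lambda$ for $\operatorname{Re}\lambda>0$.

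Fix $\psi_0\in\mathcal H$ and set $u(t)=W_t\psi_0$; as the orbit of a contraction $C_0$-semigroup it is bounded and uniformly continuous, with Laplace transform $\hat u(\lambda)=R(\lambda,B)\psi_0$ for $\operatorname{Re}\lambda>0$. Since $\sigma(B)\cap i\mathbb R=i\Sigma$ with $\Sigma=\{\eta\in\mathbb R:i\eta\in\sigma(B)\}$ closed and countable, $\hat u$ continues holomorphically to an open neighborhood of $i\mathbb R\setminus i\Sigma\subseteq\rho(B)$. The spectral hypotheses enter only through the local Abel limits at the excluded frequencies. For $\psi_0=(B-i\eta)z$ in the range of $B-i\eta$, the resolvent identity gives $\alpha R(\alpha+i\eta,B)\psi_0=-\alpha z+\alpha^2 R(\alpha+i\eta,B)z\to 0$ as $\alpha\downarrow 0$; since $\operatorname{ran}(B-i\eta)$ is dense --- its closure is $\ker(B^*+i\eta)^{\perp}=\mathcal H$, using $\sigma_p(B^*)\cap i\mathbb R=\emptyset$ --- and $\|\alpha R(\alpha+i\eta,B)\|\le 1$, it follows that $\lim_{\alpha\downarrow 0}\alpha\,\hat u(\alpha+i\eta)=0$ for every $\eta\in\Sigma$ and every $\psi_0$.

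The analytic core is then the Tauberian statement: if $u\in\mathrm{BUC}([0,\infty),\mathcal H)$ is bounded, $\hat u$ extends holomorphically to a neighborhood of $i\mathbb R\setminus i\Sigma$ with $\Sigma$ closed and countable, and $\lim_{\alpha\downarrow 0}\alpha\hat u(\alpha+i\eta)=0$ for each $\eta\in\Sigma$, then $\|u(t)\|\to 0$. I would first treat $\Sigma=\emptyset$ by Newman's contour method, representing $\hat u(0)-\int_0^T u(s)\,ds$ as a contour integral of $\big(\hat u(\lambda)-\int_0^T e^{-\lambda s}u(s)\,ds\big)e^{\lambda T}\big(\tfrac1\lambda+\tfrac{\lambda}{R^2}\big)$ along a path that bulges into $\operatorname{Re}\lambda<0$ through the holomorphic extension, estimating each arc with the boundedness of $u$ and Newman's kernel and letting $T\to\infty$ then $R\to\infty$; the resulting convergence of $\int_0^\infty u$, together with the semigroup relation $u(t+s)=W_s u(t)$ and the monotonicity of $t\mapsto\|W_t\psi_0\|$ for contractions, upgrades to $\|u(t)\|\to 0$. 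I would then remove a countable $\Sigma$ by transfinite induction on its Cantor--Bendixson derivatives: a nonempty countable closed set has an isolated point, the vanishing local Abel limit there lets one show that frequency contributes nothing to the asymptotics, and the derived sets strictly decrease in rank and terminate because countable closed sets have countable Cantor--Bendixson rank.

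Assembling these, the Tauberian theorem applies to $u(t)=W_t\psi_0$ for every $\psi_0$ and yields $\|W_t\psi_0\|\to 0$, which is exactly the claim. The main obstacle is the Tauberian theorem itself --- in particular the transfinite induction that upgrades the $\Sigma=\emptyset$ case to a countable exceptional set while assuming only the weak local control at each excluded frequency. An alternative, purely operator-theoretic route (Lyubich--V\~u) sidesteps the Tauberian theorem: on the quotient by the stable subspace $\mathcal H_0=\{\psi:\|W_t\psi\|\to 0\}$, where monotonicity of $t\mapsto\|W_t\psi\|$ guarantees that no nonzero orbit decays, one forms the invariant inner product $\langle\psi,\phi\rangle_\infty=\lim_t\langle W_t\psi,W_t\phi\rangle$, extends $W_t$ to a $C_0$-semigroup of isometries on the completion, excludes a unilateral-shift part via countability of the boundary spectrum (a shift contributes all of $i\mathbb R$), and so obtains a unitary group whose countable spectrum must have an isolated point and hence an imaginary eigenvector of the adjoint generator, contradicting the hypothesis. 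There the delicate points are the spectral inclusions under the completion and the duality that pulls the eigenvector back to $B^*$, which again explains why the adjoint formulation of the hypothesis is the natural one.
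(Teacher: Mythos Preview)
The paper does not prove this theorem; it is quoted from the literature (Arendt--Batty and Lyubich--Ph\'ong) inside the proof of the main stability result and used as a black box. Your proposal goes well beyond that and sketches the actual argument(s) behind the citation. The outline is accurate: the Arendt--Batty route via the Laplace transform of the orbit, holomorphic extension of the resolvent across $i\mathbb{R}\setminus i\Sigma$, the vanishing Abel limits $\alpha R(\alpha+i\eta,B)\psi_0\to 0$ obtained from density of $\operatorname{ran}(B-i\eta)$, Newman's contour argument for $\Sigma=\emptyset$, and a Cantor--Bendixson transfinite induction for countable $\Sigma$; and the alternative Lyubich--V\~u route via the quotient by the stable subspace, the limit isometric semigroup, and an isolated-point spectral argument. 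Your preliminary reduction $\sigma_p(B)\cap i\mathbb{R}=\emptyset\Leftrightarrow\sigma_p(B^*)\cap i\mathbb{R}=\emptyset$ for Hilbert-space contraction semigroups is exactly the bridge needed, since the hypothesis in the cited references is stated for $B^*$. The steps you yourself flag as delicate --- the transfinite Tauberian step and the spectral inclusions under completion in the Lyubich--V\~u approach --- are indeed where the substantive work lies and are only outlined here, but nothing in your sketch is incorrect; it simply supplies content the paper deliberately outsources to the references.
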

We aim to show that $A_\Phi$ has no eigenvalues on the imaginary axis and $\sigma(A_\Phi)\cap i \mathbb{R}$ is countable. The second condition follows almost immediately from our assumption that $A_\Phi$ has compact resolvent.
\begin{lemma}
    If $B$ is a densely defined operator with $(B-\lambda)^{-1}$ compact for some $\lambda \in \rho(B)$. Then $\sigma(B)$ consists only of countably many eigenvalues.
\end{lemma}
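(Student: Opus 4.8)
The plan is to deduce the statement from the classical Riesz--Schauder spectral theory of the compact operator $R:=(B-\lambda)^{-1}$, transferring information from $R$ to the (possibly unbounded) operator $B$ through the resolvent identity. Recall that for a compact operator $R$ on a Hilbert space, $\sigma(R)$ is at most countable, its only possible accumulation point is $0$, and every $\nu \in \sigma(R)\setminus\{0\}$ is an eigenvalue of finite multiplicity. So the entire task is bookkeeping: expressing $\sigma(B)$ in terms of $\sigma(R)$.

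For the transfer, fix $\mu \neq \lambda$ and write, as operators on $D(B)$,
\begin{equation*}
    B-\mu=(B-\lambda)-(\mu-\lambda)=(B-\lambda)\bigl(\mathds{1}-(\mu-\lambda)R\bigr)=\bigl(\mathds{1}-(\mu-\lambda)R\bigr)(B-\lambda),
\end{equation*}
using $(B-\lambda)R=\mathds{1}$ on $\mathcal{H}$ and $R(B-\lambda)\subset\mathds{1}$ on $D(B)$. Since $B-\lambda:D(B)\to\mathcal{H}$ is a bijection with bounded inverse $R$, these factorizations show that $B-\mu:D(B)\to\mathcal{H}$ has a bounded, everywhere-defined inverse if and only if $\mathds{1}-(\mu-\lambda)R$ is boundedly invertible on $\mathcal{H}$, i.e. if and only if $(\mu-\lambda)^{-1}\notin\sigma(R)$. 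Together with the hypothesis $\lambda\in\rho(B)$ (so $\lambda\notin\sigma(B)$) this gives
\begin{equation*}
    \sigma(B)=\bigl\{\lambda+\nu^{-1}:\nu\in\sigma(R)\setminus\{0\}\bigr\}.
\end{equation*}
The map $\nu\mapsto\lambda+\nu^{-1}$ is injective, so $\sigma(B)$ is at most countable.

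It then remains to check that each $\mu\in\sigma(B)$ is genuinely an eigenvalue. Writing $\mu=\lambda+\nu^{-1}$ with $\nu\in\sigma(R)\setminus\{0\}$, Riesz--Schauder produces $\phi\neq0$ with $R\phi=\nu\phi$. Since $R\phi\in D(B)$ and $\phi=\nu^{-1}R\phi$, we get $\phi\in D(B)$, and applying $B-\lambda$ to $R\phi=\nu\phi$ yields $\phi=(B-\lambda)R\phi=\nu(B-\lambda)\phi$, hence $B\phi=(\lambda+\nu^{-1})\phi=\mu\phi$. The only step requiring any care is the domain bookkeeping in the factorization of $B-\mu$ (which side $R$ sits on, and that the composed operators are defined on $D(B)$); beyond that there is no real obstacle, as everything reduces to the standard spectral theorem for compact operators.
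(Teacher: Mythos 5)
Your argument is correct and follows essentially the same route as the paper: both factor the resolvent as $B-\mu=(B-\lambda)\bigl(\mathds{1}-(\mu-\lambda)(B-\lambda)^{-1}\bigr)$ to transfer the Riesz--Schauder spectral picture of the compact operator $(B-\lambda)^{-1}$ to $\sigma(B)$, and both then upgrade each spectral point to an eigenvalue via the corresponding eigenvector of the resolvent. Your domain bookkeeping and the explicit identity $\sigma(B)=\{\lambda+\nu^{-1}:\nu\in\sigma(R)\setminus\{0\}\}$ are fine, so nothing further is needed.
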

\begin{proof}
When $\mathcal{H}$ is finite dimensional this proof becomes trivial, so take $\dim(\mathcal{H})=\infty$. Since $(B-\lambda)^{-1}$ is a compact operator on $\mathcal{H}$, the spectral theorem for compact operators \cite[Theorem 6.8]{Brezis} states that $\sigma((B-\lambda)^{-1})$ consists only of countably many eigenvalues and $0$. Take $0 \neq\mu\in \rho((B-\lambda)^{-1})$. Then
\begin{equation}
    \frac{1}{\mu}-(B-\lambda)=\frac{1}{\mu} (B-\lambda)\left((B-\lambda)^{-1}-\mu \right).
\end{equation}
So $\frac{1}{\mu}\in \rho(B-\lambda)$ with 
\begin{equation}
    \left(\frac{1}{\mu}-(B-\lambda)\right)^{-1}=\mu\left((B-\lambda)^{-1}-\mu \right)^{-1}(B-\lambda)^{-1}.
\end{equation}
Hence for any $\nu \in \sigma(B-\lambda)$ we have $\nu \neq 0$ (since $B-\lambda$ is invertible) and $\frac{1}{\nu}\in \sigma((B-\lambda)^{-1})$, in particular $\frac{1}{\nu}$ must be an eigenvalue of $(B-\lambda)^{-1}$. But this implies $\nu$ is an eigenvalue of $B-\lambda$, so the spectrum of $B-\lambda$ only consists of countably many eigenvalues, and consequently $\sigma(B)$ only consists of countably many eigenvalues.
\end{proof}
Returning to our proof of Theorem \ref{asympt stability}, it is now sufficient to prove that $A_\Phi$ has no eigenvalues on the imaginary axis whenever $\overline{A_0}$ has no eigenvalues on the imaginary axis. We proceed by contrapositive, suppose $A_\Phi$ has an eigenvalue $i\lambda$ along the imaginary axis, and let $\psi_\lambda\in D(A_\Phi)$ be a corresponding eigenvector. Then 
\begin{equation}
   ||\Phi G_-\psi_\lambda||_{\mathcal{H}_+}^2- ||G_-\psi_\lambda||_{\mathcal{H}_-}^2=2\text{Re}\langle A\psi_\lambda, \psi_\lambda \rangle_{\mathcal{H}}=2\text{Re}\langle i\lambda \psi_\lambda,\psi_\lambda \rangle_{\mathcal{H}}=0. 
\end{equation}
Since $||\Phi G_- \psi_\lambda||_{\mathcal{H}_+}=||G_-\psi_\lambda||_{\mathcal{H}_-}$ and $\Phi$ is \textit{strictly} contractive, we must have $G_+\psi_\lambda=\Phi G_- \psi_\lambda=G_-\psi_\lambda=0$. By Lemma \ref{Closure}, $\psi_\lambda \in D(\overline{A_0})$ and $\overline{A_0}\psi_\lambda=A_\Phi \psi_\lambda=i\lambda\psi_\lambda$, so $\overline{A_0}$ must also have eigenvalues along the imaginary axis. 
\end{proof}
\begin{remark}
    Theorem \ref{asympt stability} is typically useful when $A_0$ is a differential operator on a bounded Lipschitz domain $\Omega$ with domain $D(A_0)=C_c^\infty(\Omega)$ dense in $L^2(\Omega)$. In practice $D(A_\Phi)$ is often (but not always) a subset of $H^s(\Omega)$ for some $s>0$, in which case $A_\Phi$ has compact resolvent on $L^2(\Omega)$.
\end{remark}
\section{Schr\"odinger Operators on Bounded $C^2$ Domains}
In this section, we parameterize all $m$-dissipative extensions of the densely defined skew-symmetric operator $-i\hat{H}_0=-i(-\Delta + V): C_c^\infty(\Omega)\to L^2(\Omega)$, where $\Omega$ is some bounded $C^2$ region in $\mathbb{R}^n$ and $V$ a real valued and bounded potential. Parameterization results for second-order elliptic differential operators can be found in the literature (see, e.g. \cite{Gesztesy11,BEHRNDT2014}) and go back to the classical work of Birman \cite{Bir56}, Visik \cite{Visik}, and Grubb \cite{Grubb1968ACO}. However, we will find that taking the approach of Wegner \cite{Wegner2017} and Arendt et al. \cite{Arendt2023} simplifies the parameterization to a form that is well adapted to defining detection time distributions.  

Before tackling the higher dimensional case, let us first review the parameterization result in a simple one dimensional setting.
\subsection{One Dimensional Case}
For simplicity we only consider the case of a bounded interval $\Omega=(a,b)\subset \mathbb{R}$. We define the \textit{pre-minimal} operator $\hat{H}_0=-\partial_x^2 + V:C_c^\infty((a,b))\to L^2((a,b))$, where $V\in L^\infty((a,b),\mathbb{R})$. The closure of the pre-minimal operator is called the \textit{minimal} operator $\hat{H}$ and has domain $D(\hat{H})=H^2_0((a,b))$, while the maximal operator $\hat{H}^*=(-\partial_x^2 + V)$ has domain $D(\hat{H}^*)=H^2((a,b))$. For $\psi \in H^2((a,b))$ the Sobolev embedding theorems \cite{Brezis} guarantee $\psi$ and $\partial_x \psi$ admit continuous representatives in $C([a,b])$. We define the mappings $G_\pm:H^2((a,b))\to \mathbb{C}^2$
\begin{equation}\label{One D BQ}
    G_+\psi:=\frac{1}{\sqrt{2}}\begin{pmatrix}
        \psi(a)-i \partial_x \psi(a)
        \\
        \psi(b)+i \partial_x \psi(b)
    \end{pmatrix}, \quad G_-\psi:=\frac{1}{\sqrt{2}}\begin{pmatrix}
        \psi(a)+i \partial_x \psi(a)
        \\
        \psi(b)-i \partial_x \psi(b)
    \end{pmatrix}.
\end{equation}
It is then straightforward to apply integration by parts and show
\begin{equation}\label{One D Flux}
    \langle -i\hat{H}^*\psi, \phi \rangle_{L^2((a,b))} + \langle \psi, -i\hat{H}^*\phi \rangle_{L^2((a,b)}= \langle G_+ \psi, G_+ \phi \rangle_{\mathbb{C}^2} - \langle G_- \psi, G_- \phi \rangle_{\mathbb{C}^2}, \quad \forall \psi, \phi \in H^2((a,b)).
\end{equation}
We now state the parameterization result for Schr\"odinger operators in one space dimension.
\begin{theorem}\cite[Theorem 6.5]{Arendt2023} Let $(a,b)\subset \mathbb{R}$ be a bounded interval, and let $\hat{H}:=(-\partial_x^2 +V)\big{|}_{H^2_0((a,b))}$ with $V \in L^\infty((a,b),\mathbb{R})$. Then the maps $G_\pm:H^2((a,b))\to \mathbb{C}^2$ define a boundary quadruple for $-i\hat{H}$. Consequently, a $C_0$ contraction semigroup $W_t$ on $L^2((a,b))$ has its generator extended by $-i\hat{H}^*$ if and only if there exists a linear contraction $\Phi:\mathbb{C}^2\to \mathbb{C}^2$ such that $W_t=\exp(-it\hat{H}_\Phi)$ with
    \begin{equation}
        D(\hat{H}_\Phi):= \{ \psi \in H^2((a,b)): G_+\psi=\Phi G_-\psi\}, \quad \hat{H}_\Phi\psi:=(-\partial_x^2 + V)\psi.
    \end{equation}
\end{theorem}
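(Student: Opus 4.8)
The plan is to verify the hypotheses of the abstract parameterization theorem (Theorem \ref{Description}, equivalently \cite[Theorem 3.10]{Arendt2023}) for the skew-symmetric operator $A_0 := -i\hat{H} = -i(-\partial_x^2+V)$ with $D(A_0)=H^2_0((a,b))$ and the candidate maps $G_\pm$ given in (\ref{One D BQ}). Once we know $(\mathbb{C}^2,\mathbb{C}^2,G_+,G_-)$ is a boundary quadruple for $A_0$, the stated characterization of $C_0$ contraction semigroups with generator extended by $-i\hat{H}^* = (-A_0)^*$ follows verbatim from Theorem \ref{Description} combined with Lumer--Phillips (Theorem \ref{Lumer-Phillips}): $m$-dissipative extensions of $A_0$ are exactly the $A_\Phi$, and these are exactly the generators of $C_0$ contraction semigroups that restrict $-A_0^* = -i\hat{H}^*$. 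So the entire content is: confirm the four defining properties of a boundary quadruple.

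First I would pin down the maximal operator. The adjoint of $\hat{H}_0 = (-\partial_x^2+V)|_{C_c^\infty}$ is the maximal Schr\"odinger operator $\hat{H}^* = -\partial_x^2+V$ with $D(\hat{H}^*) = H^2((a,b))$; this is standard (for $V\in L^\infty$ the potential term is a bounded perturbation, so it does not change the domain from the free case), and dually $D(\overline{\hat{H}_0}) = H^2_0((a,b))$. Hence $A := (-A_0)^* = -i\hat{H}^*$ has domain $H^2((a,b))$. By the one-dimensional Sobolev embedding $H^2((a,b))\hookrightarrow C^1([a,b])$, the functionals $\psi\mapsto\psi(a),\psi(b),\partial_x\psi(a),\partial_x\psi(b)$ are continuous on $H^2$, so $G_\pm : H^2((a,b))\to\mathbb{C}^2$ are well-defined bounded linear maps.

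The three remaining items are: (i) the abstract Green identity (\ref{One D Flux}); (ii) surjectivity of each of $G_+$ and $G_-$; (iii) joint surjectivity of $(G_+,G_-):H^2\to\mathbb{C}^2\times\mathbb{C}^2$ (equivalently $\ker G_+ + \ker G_- = H^2$). For (i), integrate by parts twice: $\langle-i\hat H^*\psi,\phi\rangle + \langle\psi,-i\hat H^*\phi\rangle = i\langle\psi'',\phi\rangle - i\langle\psi,\phi''\rangle = i[\psi'\bar\phi - \psi\bar\phi']_a^b$ (the $V$ terms cancel since $V$ is real), and then check by a direct $2\times2$ computation that this boundary form equals $\langle G_+\psi,G_+\phi\rangle_{\mathbb{C}^2} - \langle G_-\psi,G_-\phi\rangle_{\mathbb{C}^2}$; the point is that at the endpoint $a$ the combination $\tfrac12(|\psi(a)-i\psi'(a)|\text{-type cross term}) - \tfrac12(\cdots)$ reproduces $-i(\psi'(a)\bar\phi(a)-\psi(a)\bar\phi'(a))$, and similarly at $b$ with the opposite sign, matching the outward-normal orientation. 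For (ii) and (iii), the four boundary values $(\psi(a),\psi'(a),\psi(b),\psi'(b))$ can be prescribed arbitrarily in $\mathbb{C}^4$ by choosing, e.g., a suitable cubic polynomial or a sum of bump functions supported near each endpoint; since the linear map $\psi\mapsto(\psi(a),\psi'(a),\psi(b),\psi'(b))$ from $H^2((a,b))$ onto $\mathbb{C}^4$ is surjective and $(G_+,G_-)$ is obtained from it by composing with the invertible linear change of coordinates on $\mathbb{C}^4$ implicit in (\ref{One D BQ}), $(G_+,G_-)$ is surjective onto $\mathbb{C}^2\times\mathbb{C}^2$, which immediately gives both the joint surjectivity and the individual surjectivity of $G_\pm$.

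None of these steps is genuinely hard; the main obstacle is really just bookkeeping care in step (i), namely getting the signs and the $\tfrac{1}{\sqrt2}$ normalizations to match so that the endpoint-$a$ and endpoint-$b$ contributions land in $G_-$ and $G_+$ with the correct orientation (this is the one place where a sign error would silently break the identity). One should also remember to invoke that $\hat{H}$ is defined as the \emph{closure} of $\hat{H}_0$ so that $\hat{H}^* = \hat{H}_0^*$ and Proposition \ref{Diss ext rest} / Theorem \ref{Description} apply with $A_0 = -i\hat{H}$, $(-A_0)^* = -i\hat{H}^*$. With the boundary quadruple established, the stated ``if and only if'' is exactly Theorem \ref{Description} read through Lumer--Phillips, with $\hat{H}_\Phi$ here being the concrete name for the abstract $i A_\Phi$.
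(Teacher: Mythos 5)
Your proposal is correct and follows essentially the same route as the paper: reduce to Theorem \ref{Description} and Lumer--Phillips via the Green identity (\ref{One D Flux}), then verify the surjectivity condition on $(G_+,G_-)$. The only cosmetic difference is that the paper exhibits the equivalent kernel decomposition $\psi=\tfrac12(\psi+\phi-\xi)+\tfrac12(\psi-\phi+\xi)\in\ker G_++\ker G_-$ explicitly, whereas you obtain joint surjectivity by composing the surjective trace map $\psi\mapsto(\psi(a),\psi'(a),\psi(b),\psi'(b))$ onto $\mathbb{C}^4$ with the invertible $4\times 4$ change of coordinates implicit in (\ref{One D BQ}); both arguments are valid and interchangeable.
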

\begin{proof}
By Theorem \ref{Lumer-Phillips}, Theorem \ref{Description}, and equation (\ref{One D Flux}), it suffices to show
\begin{equation}
    \ker G_- + \ker G_+= H^2((a,b)).
\end{equation}
Let $\psi \in H^2((a,b))$, and choose $\xi, \phi \in C^2((a,b))$ such that
\begin{equation}
    \phi(a)=-i\partial_x\psi(a), \quad \partial_x \phi(a)=i\psi(a), \quad \phi(b)=\partial_x \phi(b)=0.
\end{equation}
\begin{equation}
    \xi(b)=-i\partial_x\psi(b), \quad \partial_x \xi(b)=i\psi(b), \quad \xi(a)=\partial_x \xi(a)=0.
\end{equation}
Then
\begin{equation}
    \frac{1}{2}(\psi + \phi -\xi)\in \ker G_+, \text{ and } \frac{1}{2}(\psi-\phi +\xi)\in \ker{G_-},
\end{equation}
so the sum $\psi \in \ker G_+ + \ker G_-$.
\end{proof}
\begin{remark}
    This simple boundary quadruple construction fails for Schr\"odinger operators in dimensions $n\geq 2$.  This is because in higher space dimensions the domain of the maximal operator $D(\hat{H}^*)$ is not contained within any Sobolev space $H^s(\Omega)$ for $s>0$. So, one cannot rely on the standard Sobelev embedding theorems to make sense of the values of $\psi$ and its normal derivative along the boundary $\partial \Omega$.
\end{remark}
The following proposition reviews some classical examples of boundary conditions and their associated extensions of $\hat{H}$.
\begin{proposition}
     Setting $\Phi_N= \mathds{1}$, the identity on $\mathbb{C}^2$, returns the \textbf{Neumann} extension $\hat{H}_N$ of $\hat{H}$, with domain
    \begin{equation}
        D(\hat{H}_N)=\{ \psi \in H^2((a,b)): \partial_x \psi (a)=\partial_x \psi(b)=0 \}.
    \end{equation}
     Setting $\Phi_D= -\mathds{1}$ returns the \textbf{Dirichlet} extension $\hat{H}_D$ of $\hat{H}$, with domain
    \begin{equation}
        D(\hat{H}_D)=\{ \psi \in H^2((a,b)): \psi (a)= \psi(b)=0 \}.
    \end{equation}
     Setting $\Phi_P= \begin{pmatrix}
            0 & 1
            \\
            1 & 0
        \end{pmatrix}$ returns the \textbf{periodic} extension $\hat{H}_P$ of $\hat{H}$, with domain 
    \begin{equation} 
        D(\hat{H}_P)=\{ \psi \in H^2((a,b)): \psi (a)= \psi(b), ~\partial_x \psi(a)=\partial_x \psi(b) \}.
    \end{equation}
\end{proposition}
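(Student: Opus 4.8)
The plan is to verify the three claimed domain identities by a direct componentwise computation with the formulas (\ref{One D BQ}) for $G_\pm$, and then to invoke Theorem \ref{Unitary} to upgrade the resulting $m$-dissipative operators to self-adjoint ones.

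First, for the Neumann case I would expand the defining condition $G_+\psi = \Phi_N G_-\psi = G_-\psi$ coordinate by coordinate. The top entry reads $\psi(a)-i\partial_x\psi(a) = \psi(a)+i\partial_x\psi(a)$, which collapses to $\partial_x\psi(a)=0$; the bottom entry gives $\partial_x\psi(b)=0$ in the same way. Conversely, any $\psi\in H^2((a,b))$ with $\partial_x\psi(a)=\partial_x\psi(b)=0$ manifestly satisfies $G_+\psi = G_-\psi$, so $D(\hat H_N)$ is exactly the set claimed. The Dirichlet case is handled identically: $G_+\psi = -G_-\psi$ forces $2\psi(a)=0$ and $2\psi(b)=0$, and the converse is immediate. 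For the periodic case, $G_+\psi = \Phi_P G_-\psi$ becomes, after applying the swap matrix,
\begin{equation*}
\begin{pmatrix}\psi(a)-i\partial_x\psi(a)\\ \psi(b)+i\partial_x\psi(b)\end{pmatrix}
= \begin{pmatrix}\psi(b)-i\partial_x\psi(b)\\ \psi(a)+i\partial_x\psi(a)\end{pmatrix};
\end{equation*}
taking the sum of the two scalar equations isolates $\partial_x\psi(a)=\partial_x\psi(b)$, while their difference isolates $\psi(a)=\psi(b)$, and conversely these two conditions give back the displayed vector identity.

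It then remains to check that the three matrices are unitary on $\mathbb{C}^2$: $\Phi_N=\mathds{1}$ and $\Phi_D=-\mathds{1}$ trivially, and $\Phi_P$ because $\Phi_P^2=\mathds{1}$, so $\Phi_P^\ast = \Phi_P = \Phi_P^{-1}$. By the one-dimensional parameterization theorem just proved (equivalently, Theorems \ref{Lumer-Phillips} and \ref{Description} applied to the boundary quadruple (\ref{One D BQ})), each operator $-i\hat H_\Phi = A_\Phi$ is the generator of a $C_0$ contraction semigroup; since each $\Phi$ here is unitary, the implication (c)$\Rightarrow$(b) of Theorem \ref{Unitary} yields that $iA_\Phi = \hat H_\Phi$ is a densely defined self-adjoint operator, which is the final assertion of the proposition.

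There is essentially no obstacle beyond the bookkeeping: the only subtlety is tracking the conventions, namely that the semigroup generator is $A_\Phi = -i\hat H_\Phi$, so that the operator declared self-adjoint by Theorem \ref{Unitary}(b) is precisely $iA_\Phi = \hat H_\Phi$. Everything else is a two-line linear-algebra check for each of the three boundary conditions.
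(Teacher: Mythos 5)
Your proposal is correct, and since the paper states this proposition without proof, your direct componentwise verification of the three boundary conditions followed by the appeal to Theorem \ref{Unitary} is exactly the intended argument. The one subtlety you flag --- that the unitary-$\Phi$ case of Theorem \ref{Unitary} declares $iA_\Phi=\hat H_\Phi$ (not $A_\Phi$ itself) self-adjoint --- is handled correctly, and the algebra for all three matrices checks out.
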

    \begin{remark}
     The evolutions generated by these extensions preserve the total probability in $\Omega$ and are thus not valid as theoretical models for quantum particles undergoing detection. The last example additionally gives rise to \textit{non-local} dynamics, with probability allowed to exit through one end-point and be instantly transported to the other.
    \end{remark}
    \begin{proposition}
     $\hat{H}_\Phi$ generates local dynamics in $(a,b)$ if and only if $\Phi$ is of the form $\Phi=\begin{pmatrix}
         \Phi_a & 0
         \\
         0 & \Phi_b
     \end{pmatrix}$ with $|\Phi_a|, |\Phi_b| \leq 1$. If $\Phi_a \neq -1 \neq \Phi_b$ then $H_\Phi$ is a \textbf{local Robin} extension \begin{equation}
        D(\hat{H}_\Phi)=\{ \psi \in H^2((a,b)): -\partial_x \psi (a)=i \frac{1-\Phi_a}{1+\Phi_a} \psi(a),~ \partial_x \psi(b)=i\frac{1-\Phi_b}{1+\Phi_b} \psi(b) \}.
    \end{equation}
    If $|\Phi_a|$ and $|\Phi_b|$ are strictly less than $1$, then by Theorem \ref{asympt stability} the evolution generated by $\hat{H}_\Phi$ is asymptotically stable, $||\exp(-it\hat{H}_\Phi)\psi_0||_{L^2((a,b))}\xrightarrow{t \to \infty}0$ for each $\psi_0\in L^2((a,b))$. 
    \end{proposition}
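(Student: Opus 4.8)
The plan is to run the whole argument at the level of the boundary data. For $\psi\in H^2((a,b))$ I would introduce the endpoint-adapted coordinates
\[
\zeta_a^\pm:=\psi(a)\mp i\psi'(a),\qquad \zeta_b^\pm:=\psi(b)\pm i\psi'(b),
\]
so that $\sqrt2\,G_\pm\psi=(\zeta_a^\pm,\zeta_b^\pm)\in\mathbb C^2$ and the relation defining $D(\hat H_\Phi)$ becomes the linear system $(\zeta_a^+,\zeta_b^+)=\Phi(\zeta_a^-,\zeta_b^-)$. As a preliminary fact --- already used in the proof of the one-dimensional parameterization theorem above --- I would record that $\psi\mapsto(\psi(a),\psi'(a),\psi(b),\psi'(b))$ maps $H^2((a,b))$ onto $\mathbb C^4$, hence so does $\psi\mapsto(\zeta_a^+,\zeta_a^-,\zeta_b^+,\zeta_b^-)$; since $D(\hat H_\Phi)$ is exactly the preimage of the graph of $\Phi$ under this last map, the pair $(\zeta_a^-,\zeta_b^-)$ runs over all of $\mathbb C^2$ as $\psi$ runs over $D(\hat H_\Phi)$. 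Throughout, I would take ``$\exp(-it\hat H_\Phi)$ generates local dynamics'' to mean that $D(\hat H_\Phi)$ is invariant under multiplication by every $\chi\in C^\infty([a,b])$ that is constant in a neighbourhood of $a$ and in a neighbourhood of $b$ (the precise form of the statement that the boundary condition at each endpoint is self-contained).

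The equivalence would then be a short computation. If $\Phi=\mathrm{diag}(\Phi_a,\Phi_b)$ and $\chi\in C^\infty([a,b])$ equals a constant $c_a$ near $a$ and a constant $c_b$ near $b$, then $\chi$ is flat at both endpoints, so $\zeta_a^\pm(\chi\psi)=c_a\,\zeta_a^\pm(\psi)$ and $\zeta_b^\pm(\chi\psi)=c_b\,\zeta_b^\pm(\psi)$; the conditions $\zeta_a^+=\Phi_a\zeta_a^-$ and $\zeta_b^+=\Phi_b\zeta_b^-$ are homogeneous, so they survive these rescalings and $\chi\psi\in D(\hat H_\Phi)$. For the converse I would feed a single cutoff $\chi$ with $\chi\equiv1$ near $a$ and $\chi\equiv0$ near $b$ into the boundary relation: flatness gives $\zeta_a^\pm(\chi\psi)=\zeta_a^\pm(\psi)$ and $\zeta_b^\pm(\chi\psi)=0$, so if $\chi\psi\in D(\hat H_\Phi)$ for all $\psi\in D(\hat H_\Phi)$, then comparing the two components of $(\zeta_a^+,\zeta_b^+)=\Phi(\zeta_a^-,\zeta_b^-)$ for $\chi\psi$ with those for $\psi$ forces $\Phi_{12}\,\zeta_b^-(\psi)=0$ and $\Phi_{21}\,\zeta_a^-(\psi)=0$ for every $\psi\in D(\hat H_\Phi)$; by the surjectivity recorded above the off-diagonal entries of $\Phi$ must vanish, so $\Phi$ is block-diagonal. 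Finally, a diagonal operator on $\mathbb C^2$ is a contraction precisely when $|\Phi_a|,|\Phi_b|\le1$.

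The remaining two assertions are routine. With $\Phi=\mathrm{diag}(\Phi_a,\Phi_b)$ and $\Phi_a\ne1\ne\Phi_b$, rearranging $\zeta_a^+=\Phi_a\zeta_a^-$ into $(1-\Phi_a)\psi(a)=i(1+\Phi_a)\psi'(a)$ gives $-\psi'(a)=i\frac{1-\Phi_a}{1+\Phi_a}\psi(a)$ when $\Phi_a\ne-1$ and the Dirichlet condition $\psi(a)=0$ (the formal $\beta\to\infty$ limit) when $\Phi_a=-1$, and symmetrically at $b$ with the outward normal pointing the other way; this is the stated generalized Robin domain. For the stability claim, if $|\Phi_a|,|\Phi_b|<1$ then $\Phi$ is a strict contraction of $\mathbb C^2=\mathcal H_\pm$; by the one-dimensional parameterization theorem $D(\hat H_\Phi)\subseteq H^2((a,b))$, which embeds compactly into $L^2((a,b))$, so the $m$-dissipative operator $A_\Phi=-i\hat H_\Phi$ (for which $1\in\rho(A_\Phi)$) has compact resolvent $(A_\Phi-1)^{-1}$; and $\overline{A_0}=-i\hat H$, with $\hat H$ the minimal operator on $H^2_0((a,b))$, has no eigenvalues at all, since an eigenfunction would solve the linear ODE $-\psi''+V\psi=\mu\psi$ with vanishing Cauchy data at $a$. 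Theorem \ref{asympt stability} then applies and yields $\|\exp(-it\hat H_\Phi)\psi_0\|_{L^2((a,b))}=\|\exp(tA_\Phi)\psi_0\|\to0$ for every $\psi_0$.

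The only step that needs genuine care is the reverse direction of the equivalence: one has to commit to a workable meaning of ``local dynamics'' and, in the cutoff step, insist that $\chi$ be locally constant (not merely $0$ or $1$) near each endpoint, so that the term $\chi'(a)\psi(a)$ does not contaminate the first-order data $\psi'(a)$ after multiplication; with that in hand, the $(\Leftarrow)$ implication, the Robin rearrangement, and the verification of the hypotheses of Theorem \ref{asympt stability} are all bookkeeping.
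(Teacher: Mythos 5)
Your argument is correct; note that the paper states this proposition without proof, so there is no "official" argument to compare against. Working entirely in the endpoint-adapted coordinates $\zeta_a^\pm,\zeta_b^\pm$ is the natural thing to do, and the two substantive choices you make are sound. First, the paper never defines ``local dynamics,'' so you rightly commit to a precise surrogate (invariance of $D(\hat H_\Phi)$ under multiplication by functions locally constant near each endpoint); your cutoff argument, combined with the fact that $G_-$ maps $D(\hat H_\Phi)$ \emph{onto} $\mathbb C^2$ (which follows from the surjectivity of $(G_+,G_-)$, exactly as in the paper's lemma showing $A_{\Phi_1}\subset A_{\Phi_2}\Rightarrow\Phi_1=\Phi_2$), correctly kills the off-diagonal entries. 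Second, you correctly observe that the displayed Robin formula divides by $1+\Phi_a$, so the hypothesis as printed ($\Phi_a\neq 1\neq\Phi_b$) should really be $\Phi_a\neq -1\neq\Phi_b$ for the formula to make sense ($\Phi_a=-1$ gives the Dirichlet condition $\psi(a)=0$, $\Phi_a=1$ gives Neumann); this looks like a typo in the statement, and your handling of it is the right one. For the stability claim you verify all three hypotheses of Theorem \ref{asympt stability}: strict contractivity of a diagonal matrix with entries of modulus less than $1$, compactness of the resolvent via $D(\hat H_\Phi)\subset H^2((a,b))\hookrightarrow\hookrightarrow L^2((a,b))$ (the same mechanism the paper invokes in its remark and in the higher-dimensional Robin proofs), and absence of eigenvalues of the minimal operator --- where your ODE-uniqueness argument from vanishing Cauchy data at $a$ is a cleaner, genuinely one-dimensional replacement for the paper's Lemma \ref{no eigenvalues}, which goes through extension by zero and essential-spectrum stability because it must cover $n\geq 2$. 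No gaps.
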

    
\subsection{Higher Dimensions}
Fix $n \geq 2$ and let $\Omega$ be a bounded domain in $\mathbb{R}^n$ with $C^2$ boundary $\partial \Omega$. In this section we seek to describe all $m-$dissipative extensions of the Schr\"odinger operator $-i\hat{H}_0=-i(-\Delta + V):C_c^\infty(\Omega)\to L^2(\Omega)$, where $V\in L^\infty(\Omega,\mathbb{R})$. We accomplish this by constructing a boundary quadruple for the maximal operator $-i \hat{H}_0^*$. This section will closely follow the boundary triple construction presented in chapter $9$ of \cite{BHS2020}.
\par
We denote the Sobolev space of order $k\in \mathbb{R}$ on $\Omega$ by $H^k(\Omega)$, and the closure of $C_c^\infty(\Omega)$ in $H^k(\Omega)$ is denoted $H_0^k(\Omega)$. Sobolev spaces on the boundary are denoted $H^s(\partial \Omega)$ for $s \in  \mathbb{R}$. For $s>0$ the dual space pairing $\langle \cdot, \cdot \rangle_{H^{-s}(\partial \Omega)\times H^s(\partial \Omega)}$ is anti-linear in the second slot and satisfies
\begin{equation}
    \langle \xi, \chi \rangle_{H^{-s}(\partial \Omega)\times H^s(\partial \Omega)}=\begin{cases} \langle \xi, \chi\rangle_{L^2(\partial \Omega)}, \quad & \xi \in L^2(\partial \Omega)
    \\
    \langle \xi, \chi\rangle_{H^{-t}(\partial \Omega)\times H^{t}(\partial \Omega)}, \quad & \xi \in H^{-t}(\partial \Omega), ~0 \leq t\leq s
    \end{cases}
\end{equation} for $\chi \in H^s(\partial \Omega)$. Let $\iota_\pm:H^{\pm 1/2}(\partial \Omega) \to L^2(\partial \Omega)$ denote the isometric isomorphisms such that
\begin{equation}
    \langle \xi, \chi \rangle_{H^{-1/2}(\partial \Omega)\times H^{1/2}(\partial \Omega)}= \langle \iota_- \xi, \iota_+ \chi\rangle_{L^2(\partial \Omega)}.
\end{equation}
\begin{remark}
    On $\mathbb{R}^n$ the isometries $\iota_\pm:H^{\pm 1/2}(\mathbb{R}^n)\to L^2(\mathbb{R}^n)$ can be represented in terms of the Fourier transform $\mathcal{F}$ as $\iota_\pm=\mathcal{F}^{-1}(1+|\cdot|^2)^{\pm 1/4}\mathcal{F}$.
\end{remark}
For $s\in [0,3/2]$ the restrictions 
\begin{equation}
    \iota_+: H^{s+1/2}(\partial \Omega)\to H^s(\partial \Omega)
\end{equation}
and 
\begin{equation}
    \iota_-:H^{s}(\partial \Omega)\to H^{s+1/2}(\partial \Omega)
\end{equation}
are isometric isomorphisms such that $\iota_+\iota_- \xi=0$ for all $\xi \in H^s(\partial \Omega)$ and $\iota_- \iota_+ \chi=\chi$ for all $\chi \in H^{s+1/2}(\partial \Omega)$.
\par
Denoting the unit normal vector field pointing outwards of $\Omega$ by $\partial_n$, we recall a classical result for the trace operator of $H^2(\Omega)$ functions to the boundary.
\begin{lemma}\label{Trace Definition}
    \cite[Theorem 8.3]{LH1972}\cite[Theorem 1.5.1.2]{Grisvard} For $\Omega\subset \mathbb{R}^n$ a bounded $C^2$ domain, the trace map $\psi \mapsto \left(\psi\big{|}_{\partial \Omega}, \partial_n \psi \big{|}_{\partial \Omega} \right)$ defined for $\psi \in C^\infty(\Omega) \to H^{{3/2}}(\partial \Omega)\times H^{{1/2}}(\partial \Omega)$ admits a continuous extension $\psi \mapsto \left(\tau_D \psi, \tau_N \psi \right)$ that is surjective for $H^2(\Omega)\to H^{3/2}(\partial \Omega)\times H^{{1/2}}(\partial \Omega)$ and admits a continuous right inverse. 
\end{lemma}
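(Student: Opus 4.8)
The plan is to reduce the statement to the model case of a half-space by the standard localization-and-flattening procedure, and then to obtain both the trace estimate and the existence of a right inverse there by explicit Fourier analysis in the tangential variables. First I would cover $\partial\Omega$ by finitely many open sets $U_1,\dots,U_N\subset\mathbb{R}^n$ on each of which there is a $C^2$ diffeomorphism $\Theta_j\colon U_j\to B$ (with $B$ a ball) carrying $U_j\cap\Omega$ onto $B\cap\mathbb{R}^n_+$ and $U_j\cap\partial\Omega$ onto $B\cap\{x_n=0\}$; adjoining an open set $U_0$ with compact closure contained in $\Omega$ gives a finite cover of $\overline\Omega$, and I would fix a subordinate partition of unity $\{\chi_j\}_{j=0}^N$. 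Because each $\Theta_j$ is $C^2$, the pullback $\psi\mapsto\psi\circ\Theta_j^{-1}$ is a bounded isomorphism of $H^2$ (the chain rule produces at worst the first derivatives of $\Theta_j^{-1}$, which are $C^1$ and hence bounded multipliers on $H^1$), and it induces bounded isomorphisms $H^{3/2}\to H^{3/2}$ and $H^{1/2}\to H^{1/2}$ of the corresponding boundary spaces; this last point is exactly where the hypothesis that $\Omega$ is $C^2$ rather than merely $C^1$ or Lipschitz is needed, since $H^{3/2}$ of a surface is only invariant under $C^2$ changes of coordinates.

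Next, for $u\in C^\infty_c(\overline{\mathbb{R}^n_+})$ I would prove the model trace estimate
\[
\|u(\cdot,0)\|_{H^{3/2}(\mathbb{R}^{n-1})}+\|\partial_{x_n}u(\cdot,0)\|_{H^{1/2}(\mathbb{R}^{n-1})}\le C\,\|u\|_{H^2(\mathbb{R}^n_+)}.
\]
Taking the partial Fourier transform in the tangential variables $x'$ and rescaling $x_n\mapsto\langle\xi'\rangle x_n$, this follows termwise from the elementary one-dimensional Sobolev embedding $H^2(\mathbb{R}_+)\hookrightarrow C^1(\overline{\mathbb{R}_+})$ applied to the rescaled profile $w(t)=\hat u(\xi',t/\langle\xi'\rangle)$ — equivalently, the standard fact that the trace of $H^s(\mathbb{R}^n_+)$ on the hyperplane $\{x_n=0\}$ lands in $H^{s-1/2}(\mathbb{R}^{n-1})$, used with $s=2$ for $u$ itself and with $s=1$ for $\partial_{x_n}u\in H^1(\mathbb{R}^n_+)$. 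Transporting this through each chart, multiplying by $\chi_j$ and summing, and then invoking the density of $C^\infty(\overline\Omega)$ in $H^2(\Omega)$, yields the continuous extension $\psi\mapsto(\tau_D\psi,\tau_N\psi)$ with the asserted mapping property $H^2(\Omega)\to H^{3/2}(\partial\Omega)\times H^{1/2}(\partial\Omega)$.

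For surjectivity together with a continuous right inverse it suffices, again working chart by chart, to construct a bounded linear map sending $(g,h)\in H^{3/2}(\mathbb{R}^{n-1})\times H^{1/2}(\mathbb{R}^{n-1})$ to some $E(g,h)\in H^2(\mathbb{R}^n_+)$ with $E(g,h)(\cdot,0)=g$ and $\partial_{x_n}E(g,h)(\cdot,0)=h$. An explicit Fourier-side formula does the job, for instance
\[
\widehat{E(g,h)}(\xi',x_n)=\Big(\hat g(\xi')+x_n\,\hat h(\xi')\Big)\,\phi\big(\langle\xi'\rangle x_n\big),
\]
with $\phi\in C^\infty_c([0,\infty))$ fixed so that $\phi(0)=1$ and $\phi'(0)=0$; a direct Plancherel computation bounds $\|E(g,h)\|_{H^2(\mathbb{R}^n_+)}$ by a constant times $\|g\|_{H^{3/2}}+\|h\|_{H^{1/2}}$. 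Pulling these local extensions back through $\Theta_j^{-1}$, cutting off with $\chi_j$, and summing produces a bounded right inverse $\mathcal{E}\colon H^{3/2}(\partial\Omega)\times H^{1/2}(\partial\Omega)\to H^2(\Omega)$ of $(\tau_D,\tau_N)$; here one chooses the cutoffs to be constant in the normal direction in a collar of $\partial\Omega$ (so that $\partial_n\chi_j=0$ on $\partial\Omega$), after which the identities $\tau_D\mathcal{E}=\mathrm{id}$ and $\tau_N\mathcal{E}=\mathrm{id}$ follow from $\sum_j\chi_j\equiv 1$ near $\partial\Omega$ and the linearity of all the maps involved.

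The main obstacle I anticipate is not conceptual but is precisely the bookkeeping forced by the limited ($C^2$) regularity of the domain: one must verify carefully that composition with $\Theta_j$ and multiplication by the (merely $C^1$) entries of its Jacobian and their first derivatives are bounded operators on $H^{3/2}(\partial\Omega)$, and that applying the $C^1$ outward normal field $\partial_n$ to an $H^2(\Omega)$ function leaves a quantity whose boundary trace is still controlled in $H^{1/2}(\partial\Omega)$. Getting these multiplier and composition estimates right is where essentially all the work lies, and it is the exact reason the lemma is stated for $C^2$ domains and cannot be relaxed to $C^1$.
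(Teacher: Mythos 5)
Your proposal is essentially correct, but note that the paper does not prove this lemma at all: it is quoted verbatim from Lions--Magenes \cite[Theorem 8.3]{LH1972} and used as a black box. What you have written is the standard textbook proof of that cited result (localize and flatten with $C^2$ charts, prove the model trace estimate on $\mathbb{R}^n_+$ by tangential Fourier transform, build the right inverse from the explicit Fourier-side extension $(\hat g+x_n\hat h)\phi(\langle\xi'\rangle x_n)$ with $\phi(0)=1$, $\phi'(0)=0$, and patch), and all the main steps check out; in particular the Plancherel computation for $E(g,h)$ does give $\|E(g,h)\|_{H^2}\lesssim\|g\|_{H^{3/2}}+\|h\|_{H^{1/2}}$. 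Two small points deserve more care than your sketch gives them. First, the chain rule for the $H^2$ pullback produces not just first but also \emph{second} derivatives of $\Theta_j^{-1}$; these are merely continuous, but they multiply only first derivatives of $\psi$, so the composition is still bounded on $H^2$ -- your parenthetical as written understates what appears. Second, flattening does not send the outward normal field to $\partial_{x_n}$: the condition $\partial_n\psi=h$ becomes $a\,\partial_{x_n}u+\sum_i b_i\partial_{x_i}u = h\circ\Theta_j^{-1}$ with $a,b_i\in C^1$ and $a$ bounded away from zero, so for the right inverse you must prescribe $\partial_{x_n}u(\cdot,0)=a^{-1}\bigl(h\circ\Theta_j^{-1}-\sum_i b_i\,\partial_{x_i}(g\circ\Theta_j^{-1})\bigr)$, using that tangential derivatives of an $H^{3/2}$ trace lie in $H^{1/2}$ and that $C^1$ functions are multipliers there. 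These are exactly the bookkeeping issues you flag at the end, so there is no genuine gap, only details to be written out.
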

We call $\tau_D$ and $\tau_N$ the Dirichlet and Neumann operator respectively. These operators can be used to prove a Green's identity for the Laplacian.
\begin{lemma*}\label{Green's Identity}\textit{(Green's identity).}
    For $u,v \in H^2(\Omega)$, one can integrate by parts to show
    \begin{equation}
        \langle i\Delta u,v\rangle_{L^2(\Omega)}+ \langle u,i\Delta v \rangle_{L^2(\Omega)}=i\left( \langle \tau_N u, \tau_D v\rangle_{L^2(\partial \Omega)} -\langle  \tau_D u, \tau_N v\rangle_{L^2(\partial \Omega)}\right).
    \end{equation}
\end{lemma*}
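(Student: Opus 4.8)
The plan is to establish the identity first for functions that are smooth up to the boundary and then pass to the limit in $H^2(\Omega)$. First I would restrict to $u,v\in C^\infty(\overline{\Omega})$, where the classical divergence theorem on the bounded $C^2$ domain $\Omega$ applies directly. Using that the $L^2(\Omega)$ inner product is conjugate-linear in its second argument, one rewrites the left-hand side as
\begin{equation}
    \langle i\Delta u,v\rangle_{L^2(\Omega)}+\langle u,i\Delta v\rangle_{L^2(\Omega)}= i\int_\Omega\big((\Delta u)\,\overline v-u\,\overline{\Delta v}\big)\,dx = i\int_\Omega \operatorname{div}\!\big(\overline v\,\nabla u-u\,\nabla\overline v\big)\,dx,
\end{equation}
where the last equality uses $\overline{\Delta v}=\Delta\overline v$ and the cancellation of the cross terms $\nabla\overline v\cdot\nabla u-\nabla u\cdot\nabla\overline v=0$. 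Applying the divergence theorem, noting $\partial_n\overline v=\overline{\partial_n v}$ since the outward normal is real, and recognizing that on smooth functions $\tau_D$ and $\tau_N$ coincide with the classical boundary restriction and outward normal derivative, the expression becomes $i\big(\langle\tau_N u,\tau_D v\rangle_{L^2(\partial\Omega)}-\langle\tau_D u,\tau_N v\rangle_{L^2(\partial\Omega)}\big)$, which is the claimed formula for smooth $u,v$.

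Next I would argue that both sides extend continuously from $C^\infty(\overline{\Omega})\times C^\infty(\overline{\Omega})$ to $H^2(\Omega)\times H^2(\Omega)$. The map $u\mapsto\Delta u$ is bounded from $H^2(\Omega)$ to $L^2(\Omega)$, so the two terms on the left are jointly continuous sesquilinear forms in $(u,v)$ on $H^2(\Omega)$. For the right-hand side, Lemma \ref{Trace Definition} gives that $\tau_D:H^2(\Omega)\to H^{3/2}(\partial\Omega)$ and $\tau_N:H^2(\Omega)\to H^{1/2}(\partial\Omega)$ are bounded, and both $H^{3/2}(\partial\Omega)$ and $H^{1/2}(\partial\Omega)$ embed continuously in $L^2(\partial\Omega)$; hence the boundary pairings are also jointly continuous sesquilinear forms on $H^2(\Omega)\times H^2(\Omega)$. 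Since $C^\infty(\overline{\Omega})$ is dense in $H^2(\Omega)$ for a bounded $C^2$ domain, and the identity holds on the dense product set, it holds on all of $H^2(\Omega)\times H^2(\Omega)$ by continuity.

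I do not anticipate a serious obstacle: this is a standard Green's identity, and the only points needing care are (i) correctly tracking the complex conjugation in the sesquilinear $L^2$ pairings — in particular writing $\overline{\Delta v}=\Delta\overline v$ so that the divergence-theorem manipulation is legitimate — and (ii) invoking boundedness of the trace operators into $L^2(\partial\Omega)$, which is precisely Lemma \ref{Trace Definition} composed with the embeddings $H^s(\partial\Omega)\hookrightarrow L^2(\partial\Omega)$ for $s\ge 0$, together with density of $C^\infty(\overline{\Omega})$ in $H^2(\Omega)$, which is classical for $C^2$ domains. If one prefers to avoid quoting density of functions smooth up to the boundary, an alternative is to prove the identity directly for $u,v\in H^2(\Omega)$ by performing one integration by parts (valid on $H^1(\Omega)$ via the $H^1$ trace theorem for $C^2$ domains) and then a second; but the density route is the cleanest.
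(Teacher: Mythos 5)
Your proof is correct and is exactly the argument the paper is implicitly invoking when it says ``one can integrate by parts to show'': the classical divergence-theorem computation for functions smooth up to the boundary, followed by a density-and-continuity extension to $H^2(\Omega)$ using the boundedness of $\tau_D$ and $\tau_N$ from Lemma \ref{Trace Definition}. The conjugation bookkeeping and the continuity of both sesquilinear forms are handled correctly, so there is nothing to add.
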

To construct a boundary quadruple for $-i\hat{H}_0^*$, we wish to extend these trace operators to $D(-i\hat{H}_0^*)$ and prove a similar Green's identity.
\par
First, let us recall that the \textit{preminimal} operator for $-\Delta+V$ is defined on $D(\hat{H}_0)={C_c^\infty(\Omega)}$. The closure of $\hat{H}_0$ is the \textit{minimal} operator $\hat{H}$, and has domain $D(\hat{H})=H^2_0(\Omega)$. The \textit{maximal} operator is the adjoint of the \textit{minimal operator}, i.e $\hat{H}^*$. We must point out that $H^2(\Omega)\subset D(\hat{H}^*)$ but the two sets are not equal, so some care must be taken when defining $(\psi, \partial_n \psi)\big{|}_{\partial \Omega}$ for $\psi \in D(\hat{H}^*)$.
\par
In addition to the minimal and maximal operators, our boundary quadruple construction will frequently refer to two closed extensions of $\hat{H}$.
\begin{lemma}
    Let $\Omega\subset \mathbb{R}^n$ be a $C^2$ bounded region, and let $\hat{H}:=(-\Delta+V)\big{|}_{H^2_0(\Omega)}$ with $V \in L^\infty(\Omega,\mathbb{R})$. Then the \textit{Dirichlet} extension $\hat{H}_D$ defined by
    \begin{equation}
    D(\hat{H}_D):= \{ \psi \in H^2(\Omega):  \tau_D \psi=0\}, \quad \hat{H}_D \psi:=(-\Delta+V)\psi
\end{equation}
is a closed symmetric operator on $L^2(\Omega)$. Similarly, the Neumann extension $\hat{H}_N$ defined by 
    \begin{equation}
    D(\hat{H}_N):= \{ \psi \in H^2(\Omega):  \tau_N \psi=0\}, \quad \hat{H}_N \psi:=(-\Delta+V)\psi.
\end{equation}
is a closed symmetric operator on $L^2(\Omega)$.
\end{lemma}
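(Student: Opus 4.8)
The plan is to check the two defining properties, symmetry and closedness, separately, treating $\hat{H}_D$ and $\hat{H}_N$ in parallel since they differ only in which trace is required to vanish. Both $D(\hat{H}_D)$ and $D(\hat{H}_N)$ contain $C_c^\infty(\Omega)$ (functions supported away from $\partial\Omega$ have vanishing Dirichlet \emph{and} Neumann traces), so both operators are densely defined. For symmetry I would invoke Green's identity directly: if $u,v\in D(\hat{H}_D)$ then $\tau_D u=\tau_D v=0$, so the boundary pairing $\langle\tau_N u,\tau_D v\rangle_{L^2(\partial\Omega)}-\langle\tau_D u,\tau_N v\rangle_{L^2(\partial\Omega)}$ vanishes, giving $\langle\Delta u,v\rangle_{L^2(\Omega)}=\langle u,\Delta v\rangle_{L^2(\Omega)}$; since $V$ is real and bounded, $\langle Vu,v\rangle_{L^2(\Omega)}=\langle u,Vv\rangle_{L^2(\Omega)}$, and adding yields $\langle\hat{H}_D u,v\rangle_{L^2(\Omega)}=\langle u,\hat{H}_D v\rangle_{L^2(\Omega)}$. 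The Neumann case is identical, now using $\tau_N u=\tau_N v=0$ to annihilate the same boundary terms. (Equivalently, $\pm i\hat{H}_D$ and $\pm i\hat{H}_N$ are dissipative.)

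For closedness, suppose $\psi_k\in D(\hat{H}_D)$ with $\psi_k\to\psi$ and $\hat{H}_D\psi_k\to g$ in $L^2(\Omega)$; I must show $\psi\in D(\hat{H}_D)$ and $\hat{H}_D\psi=g$. Since $V\in L^\infty(\Omega)$, $V\psi_k\to V\psi$ in $L^2(\Omega)$, hence $\Delta\psi_k=V\psi_k-\hat{H}_D\psi_k\to V\psi-g$ in $L^2(\Omega)$. Now I would apply the global $H^2$ a priori estimate for the Dirichlet problem on a bounded $C^2$ domain: there is $C>0$ with $\|\phi\|_{H^2(\Omega)}\le C(\|\Delta\phi\|_{L^2(\Omega)}+\|\phi\|_{L^2(\Omega)})$ for every $\phi\in H^2(\Omega)$ with $\tau_D\phi=0$. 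Applied to $\phi=\psi_k-\psi_j$, this shows $(\psi_k)$ is Cauchy in $H^2(\Omega)$; its $H^2$-limit must agree with its $L^2$-limit $\psi$, so $\psi\in H^2(\Omega)$. Continuity of $\tau_D\colon H^2(\Omega)\to H^{3/2}(\partial\Omega)$ from Lemma \ref{Trace Definition} then gives $\tau_D\psi=\lim_k\tau_D\psi_k=0$, so $\psi\in D(\hat{H}_D)$, and $\hat{H}_D\psi=(-\Delta+V)\psi=\lim_k(-\Delta+V)\psi_k=g$. For $\hat{H}_N$ the argument is the same, with the Neumann a priori estimate (again valid on $C^2$ domains, the $\|\phi\|_{L^2}$ term absorbing the constants in the kernel) in place of the Dirichlet one, and continuity of $\tau_N\colon H^2(\Omega)\to H^{1/2}(\partial\Omega)$.

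The only non-elementary ingredient, and the main obstacle, is the global elliptic $H^2$-regularity estimate up to the boundary; this is exactly where the $C^2$ regularity of $\partial\Omega$ is used, and I would cite the standard $L^2$ elliptic regularity theory for second-order operators (e.g.\ \cite{LH1972}, or the treatment in chapter $9$ of \cite{BHS2020}). Everything else — density of the domains, the Green's identity computation, and continuity of the trace maps — is immediate from results already recorded above. (One could alternatively note that $\hat{H}_D$ and $\hat{H}_N$ are in fact self-adjoint, hence closed, but establishing self-adjointness is strictly more than what is needed here.)
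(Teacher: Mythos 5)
Your proof is correct. Note that the paper itself states this lemma without proof (it is part of the construction imported from chapter 9 of \cite{BHS2020}), so there is no in-paper argument to compare against; your write-up supplies exactly the standard one. Density and symmetry are handled correctly: $C_c^\infty(\Omega)$ sits inside both domains, and the Green's identity recorded in the paper for $u,v\in H^2(\Omega)$ kills the boundary pairing whenever either both Dirichlet traces or both Neumann traces vanish, with the real bounded potential contributing a bounded self-adjoint perturbation. For closedness, your reduction is clean: from $\psi_k\to\psi$ and $\hat{H}_D\psi_k\to g$ in $L^2(\Omega)$ you get $\Delta\psi_k$ convergent in $L^2(\Omega)$, and the global a priori estimate $\|\phi\|_{H^2(\Omega)}\le C\bigl(\|\Delta\phi\|_{L^2(\Omega)}+\|\phi\|_{L^2(\Omega)}\bigr)$ for $\phi\in H^2(\Omega)$ with $\tau_D\phi=0$ (respectively $\tau_N\phi=0$) upgrades this to Cauchyness in $H^2(\Omega)$; continuity of the trace maps from Lemma \ref{Trace Definition} then closes the argument. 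You correctly identify that this elliptic estimate is the one genuinely non-elementary input and the only place the $C^2$ regularity of $\partial\Omega$ is used; it holds on $C^{1,1}$ (hence $C^2$) domains and is standard, so citing \cite{LH1972} or the treatment in \cite{BHS2020} is appropriate. Your closing remark is also apt: the paper later uses that $\rho(\hat{H}_D)\cap\mathbb{R}\neq\emptyset$, which implicitly relies on self-adjointness of $\hat{H}_D$ and $\hat{H}_N$, a stronger fact than the closedness and symmetry asserted here, but one that is not needed for this lemma.
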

These operators are useful in allowing us to write the domain of the maximal operator as a direct sum of familiar spaces.
\begin{lemma}\label{Direct Sum}
    For any $\lambda \in \rho(\hat{H}_D)$ we have the direct sum decompositions
    \begin{equation}\label{Dirichlet Star Decomp}
        D(\hat{H}^*)= \ker \tau_D \oplus \ker(\hat{H}^*-\lambda)
    \end{equation}
\begin{equation}\label{Dirichlet Decomp}
        H^2(\Omega)= \ker \tau_D \oplus \left\{ \psi_{\lambda} \in H^2(\Omega):(\hat{H}^*-\lambda)\psi_{\lambda}=0\right\}.
    \end{equation}
\end{lemma}
\begin{proof}
    Let $\psi \in D(\hat{H}^*)$. By the invertibility of $(\lambda-\hat{H}_D):H^2(\Omega) \to L^2(\Omega)$, there exists a unique $\psi_D\in H^2(\Omega)$ such that $(\hat{H}_D-\lambda)\psi_D=(\hat{H}^*-\lambda)\psi$. Since $\hat{H}_D\subset \hat{H}^*$, it follows that $\psi_\lambda:=\psi-\psi_D$ satisfies $(\hat{H}^*-\lambda)\psi_\lambda=0$. Hence $\psi=\psi_D+\psi_\lambda \in D(\hat{H}_D)\oplus \ker(\hat{H}^*-\lambda)$, and the proof follows from $D(\hat{H}_D)=\ker\tau_D$.
\end{proof}
These decompositions are used to extend the trace operators $\tau_D$, $\tau_N$ to the domain of the maximal operator, and play a crucial role in writing down an ``abstract Green's identity" for $-i\hat{H}^*$.
\begin{lemma}\label{Trace Extensions} \cite[Theorem 8.3.9]{BHS2020}
    The Dirichlet and Neumann trace operators $\tau_D:H^2(\Omega)\to H^{3/2}(\partial \Omega)$, $\tau_N:H^2(\Omega)\to H^{1/2}(\partial \Omega)$ admit continuous and surjective extensions
    \begin{equation}
        \tilde{\tau}_D:D(\hat{H}^*)\to H^{{-1/2}}(\partial \Omega), \quad \tilde{\tau}_N:D(\hat{H}^*)\to H^{{-3/2}}(\partial \Omega)
    \end{equation}
    In addition,
    \begin{equation}
        \ker (\tilde{\tau}_D)=\ker(\tau_D)=D(\hat{H}_D), \quad \ker (\tilde{\tau}_N)=\ker(\tau_N)=D(\hat{H}_N).
    \end{equation}
\end{lemma}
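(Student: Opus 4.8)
I sketch a proof of Lemma \ref{Trace Extensions}; the argument is the one behind \cite[Theorem 8.3.9]{BHS2020}, reorganized around Lemma \ref{Direct Sum}.

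\textbf{Construction by duality.} The plan is to build $\tilde\tau_D$ and $\tilde\tau_N$ by a duality argument on the Green's identity for the Laplacian, and then read off continuity, the ``extends $\tau_D,\tau_N$'' property, the kernels, and surjectivity in turn. I describe $\tilde\tau_D$; the case of $\tilde\tau_N$ is identical with the Dirichlet and Neumann traces interchanged (so the target is $H^{-3/2}(\partial\Omega)$, the test space $H^{3/2}(\partial\Omega)$, and $\hat H_N$ replaces $\hat H_D$). We use the standard fact that for $\Omega$ of class $C^2$ and $V\in L^\infty(\Omega,\mathbb R)$ both $\hat H_D$ and $\hat H_N$ are self-adjoint, so $\mathbb C\setminus\mathbb R\subset\rho(\hat H_D)\cap\rho(\hat H_N)$; fix a non-real $\lambda$, so $\lambda,\bar\lambda\in\rho(\hat H_D)$. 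Given $\psi\in D(\hat H^*)$ and a test function $\chi\in H^{1/2}(\partial\Omega)$, choose $v_\chi\in D(\hat H_D)$ with $\tau_N v_\chi=\chi$ and $\|v_\chi\|_{H^2(\Omega)}\le C\|\chi\|_{H^{1/2}(\partial\Omega)}$ — possible since $\tau_N$ restricted to $\ker\tau_D$ is onto $H^{1/2}(\partial\Omega)$ with bounded right inverse, by Lemma \ref{Trace Definition} — and set
\begin{equation*}
  \langle\tilde\tau_D\psi,\chi\rangle_{H^{-1/2}(\partial\Omega)\times H^{1/2}(\partial\Omega)}:=\langle\hat H^*\psi,v_\chi\rangle_{L^2(\Omega)}-\langle\psi,\hat H_D v_\chi\rangle_{L^2(\Omega)}.
\end{equation*}
This is independent of the choice of $v_\chi$: two admissible choices differ by an element of $D(\hat H_D)\cap\ker\tau_N=D(\hat H_D)\cap D(\hat H_N)=H^2_0(\Omega)=D(\hat H)$, on which $\langle\hat H^*\psi,\,\cdot\,\rangle_{L^2(\Omega)}$ and $\langle\psi,\hat H\,\cdot\,\rangle_{L^2(\Omega)}$ coincide by definition of the adjoint.

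\textbf{Continuity, extension, kernel.} Continuity is immediate from the displayed formula: $|\langle\tilde\tau_D\psi,\chi\rangle|\le(\|\hat H^*\psi\|_{L^2(\Omega)}+\|\psi\|_{L^2(\Omega)})\|v_\chi\|_{H^2(\Omega)}\le C'\|\psi\|_{D(\hat H^*)}\|\chi\|_{H^{1/2}(\partial\Omega)}$, so $\tilde\tau_D\psi\in H^{-1/2}(\partial\Omega)$ with norm bounded by the graph norm. That $\tilde\tau_D$ extends $\tau_D$ under $H^{3/2}(\partial\Omega)\hookrightarrow H^{-1/2}(\partial\Omega)$: for $\psi\in H^2(\Omega)$, the Green's identity applied to $\psi$ and $v_\chi$, with $\tau_D v_\chi=0$, $\tau_N v_\chi=\chi$, returns exactly $\langle\tau_D\psi,\chi\rangle_{L^2(\partial\Omega)}$ equal to the right-hand side of the formula. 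For the kernel, use $D(\hat H^*)=\ker\tau_D\oplus\ker(\hat H^*-\lambda)$ from Lemma \ref{Direct Sum}: writing $\psi=\psi_D+\psi_\lambda$ and using that $\tilde\tau_D$ vanishes on $\ker\tau_D=D(\hat H_D)$, we get $\tilde\tau_D\psi=\tilde\tau_D\psi_\lambda$, and unwinding the formula shows $\tilde\tau_D\psi_\lambda=0$ forces $\psi_\lambda\in D(\hat H_D^*)=D(\hat H_D)$ with $\hat H_D\psi_\lambda=\lambda\psi_\lambda$, hence $\psi_\lambda=0$ since $\lambda\in\rho(\hat H_D)$. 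Thus $\ker\tilde\tau_D=D(\hat H_D)=\ker\tau_D$, and symmetrically $\ker\tilde\tau_N=D(\hat H_N)=\ker\tau_N$.

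\textbf{Surjectivity.} By Lemma \ref{Direct Sum}, $\tau_D$ restricts to a bijection of $N_\lambda:=H^2(\Omega)\cap\ker(\hat H^*-\lambda)$ onto $H^{3/2}(\partial\Omega)$; let $P_\lambda$ be its inverse, the Poisson operator. The key estimate is the negative-order bound $\|P_\lambda g\|_{L^2(\Omega)}\le C\|g\|_{H^{-1/2}(\partial\Omega)}$ for $g\in H^{3/2}(\partial\Omega)$. To get it, dualize: for $f\in L^2(\Omega)$ set $u:=(\hat H_D-\bar\lambda)^{-1}f\in D(\hat H_D)$, apply the Green's identity to $P_\lambda g$ and $u$ (both in $H^2(\Omega)$), and use $\hat H^*P_\lambda g=\lambda P_\lambda g$, $\tau_D u=0$, $\tau_D P_\lambda g=g$; all bulk terms cancel, leaving $\langle P_\lambda g,f\rangle_{L^2(\Omega)}=-\langle g,\tau_N u\rangle_{L^2(\partial\Omega)}$, whence $|\langle P_\lambda g,f\rangle|\le\|g\|_{H^{-1/2}(\partial\Omega)}\|\tau_N u\|_{H^{1/2}(\partial\Omega)}\le C\|g\|_{H^{-1/2}(\partial\Omega)}\|f\|_{L^2(\Omega)}$; take the supremum over $\|f\|_{L^2(\Omega)}\le1$. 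Hence $P_\lambda$ extends to a bounded map $H^{-1/2}(\partial\Omega)\to L^2(\Omega)$, and since $\hat H^*P_\lambda g=\lambda P_\lambda g$ it is bounded into $D(\hat H^*)$ with the graph norm, with range in the $L^2$-closed subspace $\ker(\hat H^*-\lambda)$. On the dense subspace $H^{3/2}(\partial\Omega)$ one has $\tilde\tau_D P_\lambda g=\tau_D P_\lambda g=g$, so by continuity $\tilde\tau_D\circ P_\lambda=\mathrm{id}$ on all of $H^{-1/2}(\partial\Omega)$; therefore $\tilde\tau_D$ maps $\ker(\hat H^*-\lambda)$, hence $D(\hat H^*)$, onto $H^{-1/2}(\partial\Omega)$. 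The same argument with $\hat H_N$ in place of $\hat H_D$ gives surjectivity of $\tilde\tau_N$ onto $H^{-3/2}(\partial\Omega)$.

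\textbf{Main obstacle.} The only step that is not formal bookkeeping is the negative-order trace estimate $\|P_\lambda g\|_{L^2(\Omega)}\le C\|g\|_{H^{-1/2}(\partial\Omega)}$ — equivalently, the boundedness of $\tilde\tau_D$ in the graph norm. It rests on pairing the Green's identity with the resolvent $(\hat H_D-\bar\lambda)^{-1}$ and the boundedness of $\tau_N:H^2(\Omega)\to H^{1/2}(\partial\Omega)$ in exactly the right combination; everything else reduces to Lemma \ref{Direct Sum}, the bounded right inverse of the classical trace map (Lemma \ref{Trace Definition}), and the identity $D(\hat H)=D(\hat H_D)\cap D(\hat H_N)$.
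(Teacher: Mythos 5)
Your argument is correct, but note that the paper does not prove this lemma at all: it is imported verbatim as \cite[Theorem 8.3.9]{BHS2020}, so there is no internal proof to compare against. What you have written is a self-contained proof by the standard transposition (Lions--Magenes) method: define $\tilde\tau_D\psi$ as a functional on $H^{1/2}(\partial\Omega)$ by pairing the Green's identity against test functions in $D(\hat H_D)$ with prescribed Neumann trace, and obtain surjectivity from the negative-order bound $\|P_\lambda g\|_{L^2(\Omega)}\le C\|g\|_{H^{-1/2}(\partial\Omega)}$ for the Poisson operator, proved by dualizing against $(\hat H_D-\bar\lambda)^{-1}$. All the steps check out: well-definedness reduces to $D(\hat H_D)\cap D(\hat H_N)=H^2_0(\Omega)=D(\hat H)$; the extension property and the kernel identification follow from the classical Green's identity and Lemma \ref{Direct Sum}; closedness of $\hat H^*$ puts the extended Poisson operator's range inside $\ker(\hat H^*-\lambda)\subset D(\hat H^*)$. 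Two standing facts you invoke go slightly beyond what the paper's own statements provide, though both are standard for $C^2$ domains and are used elsewhere in the paper without comment: (i) self-adjointness of $\hat H_D$ and $\hat H_N$ (the paper's lemma only asserts they are closed and symmetric; you need $D(\hat H_D^*)=D(\hat H_D)$ in the kernel step and $\mathbb C\setminus\mathbb R\subset\rho(\hat H_D)$ to choose $\lambda$), and (ii) equivalence of the graph norm and the $H^2(\Omega)$-norm on $D(\hat H_D)$, so that $(\hat H_D-\bar\lambda)^{-1}:L^2(\Omega)\to H^2(\Omega)$ is bounded and $\|\tau_N u\|_{H^{1/2}(\partial\Omega)}\le C\|f\|_{L^2(\Omega)}$ in the key estimate. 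It would be worth flagging these two inputs explicitly, but there is no gap in the argument.
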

The extended trace operators give rise to an extended Green's identity for elements in $\hat{H}^*$.
\begin{corollary}\cite[Corollary 8.3.11]{BHS2020}
    The Green's identity can be extended to 
    \begin{equation}\label{Green's Identity Star}
        \langle -i\hat{H}^*\psi,v\rangle_{L^2(\Omega)}+ \langle \psi,-i\hat{H}^*v \rangle_{L^2(\Omega)}=i\left( \langle \tilde{\tau}_N \psi, \tau_D v\rangle_{H^{-3/2}(\partial \Omega)\times H^{3/2}(\partial \Omega)} -\langle  \tilde{\tau}_D \psi, \tau_N v\rangle_{H^{-1/2}(\partial \Omega)\times H^{1/2}(\partial \Omega)}\right).
    \end{equation}
    for $\psi \in D(\hat{H}^*)$, $v \in H^2(\Omega)$.
\end{corollary}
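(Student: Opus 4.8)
The plan is to deduce the extended Green's identity \eqref{Green's Identity Star} from the classical Green's identity for the Laplacian (stated just above) by a density argument in the graph norm of $\hat{H}^*$. Fixing $v \in H^2(\Omega)$, I would introduce the sesquilinear functional $Q_v(\psi)$ equal to the difference of the two sides of \eqref{Green's Identity Star}, defined for $\psi \in D(\hat{H}^*)$, and first check that $Q_v$ is continuous on $D(\hat{H}^*)$ equipped with its graph norm $\psi \mapsto \|\psi\|_{L^2(\Omega)} + \|\hat{H}^*\psi\|_{L^2(\Omega)}$. This is routine: with $v$ held fixed the two $L^2(\Omega)$ inner products are bounded by the graph norm, and the two boundary pairings are bounded because $\tilde{\tau}_N : D(\hat{H}^*) \to H^{-3/2}(\partial\Omega)$ and $\tilde{\tau}_D : D(\hat{H}^*) \to H^{-1/2}(\partial\Omega)$ are continuous by Lemma \ref{Trace Extensions} while $\tau_D v$ and $\tau_N v$ are fixed elements of $H^{3/2}(\partial\Omega)$ and $H^{1/2}(\partial\Omega)$. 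Then I would note that on the subspace $H^2(\Omega)$ the extended traces agree with the ordinary traces $\tau_D,\tau_N$, the real potential contributions cancel, and the pairing convention identifies the dualities with $L^2(\partial\Omega)$ inner products, so that $Q_v$ vanishes on all of $H^2(\Omega)$ precisely by the classical Green's identity.

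The substance of the argument is then to show $H^2(\Omega)$ is dense in $D(\hat{H}^*)$ in the graph norm. Here I would fix $\lambda \in \rho(\hat{H}_D)$ and use the decomposition \eqref{Dirichlet Star Decomp}, writing $\psi = \psi_D + \psi_\lambda$ with $\psi_D \in D(\hat{H}_D) \subset H^2(\Omega)$ and $\psi_\lambda \in \ker(\hat{H}^* - \lambda)$. On $\ker(\hat{H}^* - \lambda)$ the operator $\hat{H}^*$ acts by multiplication by $\lambda$, so the graph norm there is equivalent to the $L^2(\Omega)$ norm, and it is enough to approximate $\psi_\lambda$ in $L^2(\Omega)$ by elements of $H^2(\Omega) \cap \ker(\hat{H}^* - \lambda)$; adding $\psi_D$ back then yields graph-norm approximants of $\psi$ lying in $H^2(\Omega)$. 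To get this $L^2(\Omega)$ density I would observe that $\tilde{\tau}_D$ restricts to a continuous bijection of $\bigl(\ker(\hat{H}^* - \lambda),\, \|\cdot\|_{L^2(\Omega)}\bigr)$ onto $H^{-1/2}(\partial\Omega)$ --- injective since $\ker\tilde{\tau}_D = D(\hat{H}_D)$ meets $\ker(\hat{H}^* - \lambda)$ only in $0$, surjective since $\tilde{\tau}_D$ is onto and annihilates $D(\hat{H}_D)$ --- hence a homeomorphism by the open mapping theorem; and by \eqref{Dirichlet Decomp} together with surjectivity of the ordinary Dirichlet trace $H^2(\Omega) \to H^{3/2}(\partial\Omega)$, this homeomorphism carries $H^2(\Omega) \cap \ker(\hat{H}^* - \lambda)$ onto $H^{3/2}(\partial\Omega)$. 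Since $H^{3/2}(\partial\Omega)$ is dense in $H^{-1/2}(\partial\Omega)$, pulling back through the homeomorphism delivers the density of $H^2(\Omega) \cap \ker(\hat{H}^* - \lambda)$ in $\ker(\hat{H}^* - \lambda)$, and hence of $H^2(\Omega)$ in $D(\hat{H}^*)$.

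With these ingredients in place, $Q_v$ is a graph-norm-continuous functional vanishing on a dense subspace, so it vanishes identically, and since $v \in H^2(\Omega)$ was arbitrary this is exactly \eqref{Green's Identity Star}. The hard part is the density step; the only inputs there going beyond the lemmas already established are the homeomorphism property of $\tilde{\tau}_D$ on $\ker(\hat{H}^* - \lambda)$ --- which follows from Lemma \ref{Trace Extensions} and the open mapping theorem --- and the elementary density of $H^{3/2}(\partial\Omega)$ in $H^{-1/2}(\partial\Omega)$.
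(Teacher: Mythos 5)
Your proof is correct, and it supplies an argument the paper itself omits: the corollary is stated with only a citation to \cite[Corollary 8.3.11]{BHS2020}, so there is no in-paper proof to compare against. Your route --- reduce to the classical Green's identity on $H^2(\Omega)$ and upgrade by graph-norm density of $H^2(\Omega)$ in $D(\hat{H}^*)$ --- is the standard one, and each step is consistent with the lemmas the paper does state: continuity of $Q_v$ follows from Lemma \ref{Trace Extensions}, vanishing on $H^2(\Omega)$ follows from the classical identity together with the cancellation of the real potential terms and the stated compatibility of the duality pairings with $\langle\cdot,\cdot\rangle_{L^2(\partial\Omega)}$, and the density step rests on the decompositions (\ref{Dirichlet Star Decomp}) and (\ref{Dirichlet Decomp}). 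Two small points are worth making explicit. First, to invoke the open mapping theorem for $\tilde{\tau}_D\big|_{\ker(\hat{H}^*-\lambda)}$ you need $\ker(\hat{H}^*-\lambda)$ to be complete in the $L^2(\Omega)$ norm; this holds because $\hat{H}^*$ is closed (if $\psi_n\to\psi$ in $L^2(\Omega)$ with $\hat{H}^*\psi_n=\lambda\psi_n\to\lambda\psi$, closedness gives $\psi\in D(\hat{H}^*)$ and $\hat{H}^*\psi=\lambda\psi$). Second, the continuity asserted in Lemma \ref{Trace Extensions} must be read as continuity with respect to the graph norm on $D(\hat{H}^*)$, which is the topology your argument uses throughout. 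With those remarks in place your density argument also yields, as a by-product, that $\tilde{\tau}_D$ and $\tilde{\tau}_N$ are the \emph{unique} continuous extensions of $\tau_D$ and $\tau_N$ satisfying (\ref{Green's Identity Star}), since a graph-norm-continuous map is determined by its values on the dense subspace $H^2(\Omega)$.
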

The formula above does not seem sufficient for our purposes, as we require a Green's identity in the case that both $\psi$ and $v$ are in $D(\hat{H}^*)$. Luckily, it is possible to extend this identity by decomposing the wave functions appropriately. 
\begin{proposition}\label{Current Decomposition}
        Let $\eta \in \rho(\hat{H}_D)\cap \mathbb{R}$, which exists since $\hat{H}_D-V$ is a positive operator and $V \in L^\infty(\Omega,\mathbb{R})$. Also let $\iota_\pm:H^{\pm {1/2}}(\partial \Omega)\to L^2(\partial \Omega)$ be the isomorphisms such that $\langle \xi,\chi \rangle_{H^{-1/2}(\partial \Omega) \times H^{1/2}(\partial \Omega)}=\langle \iota _- \xi, \iota_+ \chi \rangle_{L^2(\partial \Omega)}$. Then, defining $G_\pm(\eta): D(-i\hat{H}^*)\to L^2(\partial \Omega)$ by
    \begin{equation}
        G_+\psi:=\frac{1}{\sqrt{2}}\left(\iota_- \tilde{\tau}_D \psi  + i \iota_+ \tau_N \psi_D\right)
    \end{equation}
    \begin{equation}
        G_-\psi:=\frac{1}{\sqrt{2}}\left(\iota_-\tilde{\tau}_D \psi  - i \iota_+ \tau_N  \psi_D \right)
    \end{equation}
    where $\psi=\psi_D + \psi_\eta$ is decomposed according to equation (\ref{Dirichlet Star Decomp}) returns a further extension of the Green's identity for elements in $\psi, \phi \in D(\hat{H}^*)$
    \begin{equation}\label{Current Star Decomp Eq}
        \langle -i\hat{H}^* \psi,\phi \rangle_{L^2(\Omega)} + \langle \psi,-i\hat{H}^* \phi \rangle_{L^2(\Omega)}=\langle G_+\psi,G_+\phi \rangle_{L^2(\partial \Omega)} - \langle G_-\psi,G_- \phi \rangle_{L^2(\partial \Omega)}.
    \end{equation}
        \end{proposition}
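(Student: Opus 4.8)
The plan is to verify the identity (\ref{Current Star Decomp Eq}) directly, by expanding both sides and matching, using the \emph{unique} direct sum decomposition (\ref{Dirichlet Star Decomp}) to reduce the computation to the already-established Green's identity (\ref{Green's Identity Star}), in which exactly one argument lies in $H^2(\Omega)$. First I would record that $G_\pm$ are well-defined: by Lemma \ref{Direct Sum} every $\psi \in D(\hat{H}^*)$ decomposes uniquely as $\psi = \psi_D + \psi_\eta$ with $\psi_D \in D(\hat{H}_D) = \ker\tau_D \subset H^2(\Omega)$ and $\psi_\eta \in \ker(\hat{H}^*-\eta)$, so $\tau_N\psi_D \in H^{1/2}(\partial\Omega)$ makes classical sense, $\tilde{\tau}_D\psi \in H^{-1/2}(\partial\Omega)$ is furnished by Lemma \ref{Trace Extensions}, hence $G_\pm\psi \in L^2(\partial\Omega)$; moreover $\tilde{\tau}_D\psi = \tilde{\tau}_D\psi_\eta$ since $\psi_D \in \ker\tilde{\tau}_D$ by Lemma \ref{Trace Extensions}.

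Next I would expand the right-hand side of (\ref{Current Star Decomp Eq}). Writing $G_\pm\psi = \tfrac{1}{\sqrt 2}(\iota_-\tilde{\tau}_D\psi \pm i\,\iota_+\tau_N\psi_D)$ and using sesquilinearity of $\langle\cdot,\cdot\rangle_{L^2(\partial\Omega)}$, the ``diagonal'' contributions cancel and what remains is $i\big(\langle \iota_+\tau_N\psi_D, \iota_-\tilde{\tau}_D\phi\rangle_{L^2(\partial\Omega)} - \langle \iota_-\tilde{\tau}_D\psi, \iota_+\tau_N\phi_D\rangle_{L^2(\partial\Omega)}\big)$. Invoking the defining relation $\langle \iota_-\xi, \iota_+\chi\rangle_{L^2(\partial\Omega)} = \langle \xi, \chi\rangle_{H^{-1/2}(\partial\Omega)\times H^{1/2}(\partial\Omega)}$ (and conjugate-symmetry of the $L^2$-inner product for the first term) rewrites the right side as $i\big(\,\overline{\langle \tilde{\tau}_D\phi, \tau_N\psi_D\rangle_{H^{-1/2}(\partial\Omega)\times H^{1/2}(\partial\Omega)}} - \langle \tilde{\tau}_D\psi, \tau_N\phi_D\rangle_{H^{-1/2}(\partial\Omega)\times H^{1/2}(\partial\Omega)}\big)$.

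Then I would compute the left side $Q(\psi,\phi) := \langle -i\hat{H}^*\psi, \phi\rangle_{L^2(\Omega)} + \langle \psi, -i\hat{H}^*\phi\rangle_{L^2(\Omega)}$, which is linear in $\psi$, anti-linear in $\phi$, and satisfies $\overline{Q(\psi,\phi)} = Q(\phi,\psi)$. Decomposing $\psi = \psi_D + \psi_\eta$ and $\phi = \phi_D + \phi_\eta$ splits $Q(\psi,\phi)$ into four pieces: $Q(\psi_D,\phi_D) = 0$ because $\hat{H}_D$ is symmetric; $Q(\psi_\eta,\phi_\eta) = 0$ because $\hat{H}^*$ restricts to multiplication by the \emph{real} number $\eta$ on $\ker(\hat{H}^*-\eta)$; the cross term $Q(\psi_\eta,\phi_D)$ is evaluated from (\ref{Green's Identity Star}) with $v = \phi_D \in H^2(\Omega)$, where $\tau_D\phi_D = 0$ annihilates the $H^{-3/2}$-pairing and leaves $Q(\psi_\eta,\phi_D) = -i\langle \tilde{\tau}_D\psi_\eta, \tau_N\phi_D\rangle_{H^{-1/2}(\partial\Omega)\times H^{1/2}(\partial\Omega)} = -i\langle \tilde{\tau}_D\psi, \tau_N\phi_D\rangle_{H^{-1/2}(\partial\Omega)\times H^{1/2}(\partial\Omega)}$; and the remaining cross term follows by Hermitian symmetry, $Q(\psi_D,\phi_\eta) = \overline{Q(\phi_\eta,\psi_D)} = i\,\overline{\langle \tilde{\tau}_D\phi, \tau_N\psi_D\rangle_{H^{-1/2}(\partial\Omega)\times H^{1/2}(\partial\Omega)}}$. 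Summing the four contributions reproduces precisely the expression obtained for the right-hand side, which establishes (\ref{Current Star Decomp Eq}).

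The main obstacle — and the very reason the $\psi_D$-decomposition is built into the definition of $G_\pm$ — is that one cannot feed two arbitrary elements of $D(\hat{H}^*)$ into the extended Green's identity (\ref{Green's Identity Star}): its right side would then ask to pair $\tilde{\tau}_N\phi \in H^{-3/2}(\partial\Omega)$ with $\tilde{\tau}_D\psi \in H^{-1/2}(\partial\Omega)$, and no such duality exists. Peeling off the $H^2$-regular Dirichlet component is exactly what guarantees that every boundary pairing appearing in the calculation is between an $H^{-1/2}$ object and an $H^{1/2}$ object, and that the genuinely $H^{-3/2}$ Neumann traces never survive (only $\tau_N\psi_D$ with $\psi_D \in H^2(\Omega)$ ever occurs, and in (\ref{Green's Identity Star}) the truly singular term is multiplied against the vanishing $\tau_D\phi_D$). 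Apart from this structural point, the argument is routine bookkeeping of factors of $i$ and complex conjugates; the continuity of $G_\pm$ required by the boundary-quadruple axioms (and the surjectivity statements, handled separately) follows from continuity of $\tilde{\tau}_D$, $\tau_N$, the projections in Lemma \ref{Direct Sum}, and the isometries $\iota_\pm$.
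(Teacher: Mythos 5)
Your proposal is correct and follows essentially the same route as the paper: decompose $\psi$ and $\phi$ via the Dirichlet splitting (\ref{Dirichlet Star Decomp}), kill the diagonal terms using self-adjointness of $\hat{H}_D$ and the reality of $\eta$, evaluate the cross terms with the extended Green's identity (\ref{Green's Identity Star}) (where $\tau_D\psi_D=\tau_D\phi_D=0$ removes the $H^{-3/2}$ pairing), and match against the sesquilinear expansion of $\langle G_+\psi,G_+\phi\rangle-\langle G_-\psi,G_-\phi\rangle$. Your explicit handling of the conjugate-symmetric cross term and the remark on why the $H^{-3/2}\times H^{-1/2}$ pairing is avoided only make explicit what the paper leaves implicit.
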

      \begin{proof}
          Let $\psi,\phi \in D(\hat{H})$, and decompose $\psi= \psi_D + \psi_\eta$, $\phi=\phi_D + \phi_\eta$ according to equation (\ref{Dirichlet Star Decomp}). First, since $\hat{H}_D$ is self-adjoint 
          \begin{equation}
           \langle -i\hat{H}^* \psi_D,\phi_D \rangle_{L^2(\Omega)} + \langle \psi_D,-i\hat{H}^* \phi_D \rangle_{L^2(\Omega)}=\langle -i\hat{H}_D \psi_D,\phi_D \rangle_{L^2(\Omega)} + \langle \psi_D,-i\hat{H}_D \phi_D \rangle_{L^2(\Omega)}=0.
          \end{equation}
          Similarly, since $\eta$ is real we have 
          \begin{equation}
           \langle -i\hat{H}^* \psi_\eta,\phi_\eta \rangle_{L^2(\Omega)} + \langle \psi_\eta,-i\hat{H}^* \phi_\eta \rangle_{L^2(\Omega)}=\langle -i\eta\psi_\eta,\phi_\eta \rangle_{L^2(\Omega)} + \langle \psi_\eta,-i\eta \phi_\eta \rangle_{L^2(\Omega)}=0.
           \end{equation}
           Hence
\begin{equation}
    \begin{split}
             \langle -i\hat{H}^* \psi,\phi \rangle_{L^2(\Omega)} + \langle \psi,-i\hat{H}^* \phi \rangle_{L^2(\Omega)}&=\langle -i\hat{H}^* \psi_\eta,\phi_D \rangle_{L^2(\Omega)} +\langle \psi_\eta,-i\hat{H}^* \phi_D \rangle_{L^2(\Omega)}
             \\
             &+ \langle -i\hat{H}^* \psi_D,\phi_\eta \rangle_{L^2(\Omega)} +\langle \psi_D,-i\hat{H}^* \phi_\eta \rangle_{L^2(\Omega)}.
    \end{split}
\end{equation}
Since $\psi_D, \phi_D \in H^2(\Omega)$, we may apply the generalized Green's identity $(\ref{Green's Identity Star})$ to the two pairs of terms. Applying the identity along with $\tau_D \psi_D= \tau_D \phi_D=0$ returns
\begin{equation}\label{Gen Green's Identity Inner Prod}
\begin{split}
    \langle -i\hat{H}^* \psi,\phi \rangle_{L^2(\Omega)} + \langle \psi,-&i\hat{H}^* \phi \rangle_{L^2(\Omega)}=
    \\
    &=i\left( \langle \tau_N \psi_D, \tilde{\tau}_D \phi_\eta \rangle_{H^{1/2}(\partial \Omega)\times H^{-1/2}(\partial \Omega)} -\langle  \tilde{\tau}_D \psi_\eta, \tau_N \phi_D \rangle_{H^{-1/2}(\partial \Omega)\times H^{1/2}(\partial \Omega)}\right)
    \\
    &=i\left( \langle \tau_N \psi_D, \tilde{\tau}_D \phi \rangle_{H^{1/2}(\partial \Omega)\times H^{-1/2}(\partial \Omega)} -\langle  \tilde{\tau}_D \psi, \tau_N \phi_D \rangle_{H^{-1/2}(\partial \Omega)\times H^{1/2}(\partial \Omega)}\right)
    \\
    &=i\left( \langle \iota_+ \tau_N \psi_D, \iota_- \tilde{\tau}_D \phi \rangle_{L^2(\partial \Omega)} -\langle  \iota_-\tilde{\tau}_D \psi, \iota_+ \tau_N \phi_D \rangle_{L^2(\partial \Omega)}\right).
    \end{split}
\end{equation}
It is then not difficult to compute
    \begin{equation}
        \langle G_+\psi, G_+ \phi\rangle_{L^2(\partial \Omega)}- \langle G_-\psi, G_- \phi\rangle_{L^2(\partial \Omega)}= i\left( \langle \iota_+ \tau_N \psi_D, \iota_- \tilde{\tau}_D \phi \rangle_{L^2(\partial \Omega)} -\langle  \iota_-\tilde{\tau}_D \psi, \iota_+ \tau_N \phi_D \rangle_{L^2(\partial \Omega)}\right).
    \end{equation}
This concludes our proof of Proposition \ref{Current Decomposition}. \end{proof}
We now formally restate and prove the first main result of this paper.
\begin{theoremmainagain}\label{Parameterization Schrod}
    Let $\Omega \subset \mathbb{R}^n$ be a bounded $C^2$ domain, and let $\hat{H}:=(-\Delta+V)\big{|}_{H^2_0(\Omega)}$ with $V \in L^\infty(\Omega,\mathbb{R})$. Then for any $\eta\in \rho(\hat{H}_D)\cap \mathbb{R}$, the maps $G_\pm(\eta):D(\hat{H}^*)\to L^2(\partial \Omega)$ defined in Proposition \ref{Current Decomposition} define a boundary quadruple for $-i\hat{H}$. Consequently, a $C_0$ contraction semigroup $W_t$ on $L^2(\Omega)$ has its generator extended by $-i\hat{H}^*$ if and only if there exists a linear contraction $\Phi:L^2(\partial \Omega)\to L^2(\partial \Omega)$ such that $W_t=\exp(-it\hat{H}_\Phi)$ with
    \begin{equation}\label{Classification}
        D(\hat{H}_\Phi):=\{ \psi \in D(\hat{H}^*): \Phi G_- \psi= G_+ \psi\}, \quad \hat{H}_\Phi \psi:=(-\Delta+V)\psi.
    \end{equation}
\end{theoremmainagain} 
\begin{proof}
    Fix $\eta \in \rho(\hat{H}_D)\cap \mathbb{R}$. By Theorem \ref{Lumer-Phillips}, Theorem \ref{Description}, and equation (\ref{Current Star Decomp Eq}), it suffices to show
    \begin{equation}
        (G_+, G_-):D(\hat{H}^*)\to L^2(\partial \Omega)\times L^2(\partial \Omega)\text{ is surjective.}
    \end{equation}
    To that end, let $\xi, \chi \in L^2(\partial \Omega)$ and consider $\iota_-^{-1}\chi \in H^{-1/2}(\partial \Omega)$, $\iota_+^{-1}\xi \in H^{1/2}(\partial \Omega)$. From (\ref{Trace Definition}) we have that $\tau_N$ is a surjective mapping from $D(\hat{H}_D)$ onto $H^{1/2}(\partial \Omega)$, hence there exists some $\phi_D \in D(\hat{H}_D)$ such that $\tau_N \phi_D=\iota_+^{-1}\xi$. Recall also that $\tilde{\tau}_D$ is surjective from $D(\hat{H}^*)$ to $H^{-1/2}(\partial \Omega)$, and that $\ker \tilde{\tau}_D=\ker \tau_D= D(\hat{H}_D)$. It follows from the direct sum decomposition (\ref{Dirichlet Star Decomp}) that the restriction $\tilde{\tau}_D: \ker(\hat{H}^*-\eta)\to H^{-1/2}(\partial \Omega)$ is bijective, hence there exists $\phi_\eta \in \ker(\hat{H}^*-\eta)$ such that $\tilde{\tau}_D \phi_\eta=\iota_-^{-1}\chi$. Consequently $\phi:=\phi_D + \phi_\eta \in D(\hat{H}^*)$ satisfies 
    \begin{equation}
        \iota_-\tilde{\tau}_D\phi=\iota_- \tilde{\tau}_D \phi_\eta=\iota_- \iota_-^{-1}\chi=\chi
    \end{equation}
    \begin{equation}
        \iota_+ \tau_N \phi_D=\iota_+ \iota_+^{-1}\xi=\xi. 
    \end{equation}
    It follows that 
    \begin{equation}
        \begin{pmatrix}
            \sqrt{2}(G_- + G_+)
            \\
            \sqrt{2}i (G_- - G_+)
        \end{pmatrix} :{D(\hat{H}^*}) \to L^2(\partial \Omega)\times L^2(\partial \Omega) \text{ is surjective}.
    \end{equation}
    The surjectivity of $(G_+, G_-)$ also follows immediately.
\end{proof}
\section{Generalized Robin Boundary Conditions}
In this section we fix a boundary quadruple and construct linear contractions $\Phi$ which correspond to commonly known boundary conditions for the Schr\"odinger operator. We will find it most convenient to fix $\eta \in \rho(\hat{H}_D)\cap \rho(\hat{H}_N)\cap \mathbb{R}$, which exists since $\hat{H}_D - V$ and $\hat{H}_N - V$ are positive operators and $V \in L^\infty(\Omega,\mathbb{R})$. Any choice of $\eta \in \rho(\hat{H}_D)\cap \rho(\hat{H}_N) \cap \mathbb{R}$ can be used to prove the results in this section, although we will eventually find it useful to set $\eta<-(1+||V||_{L^\infty(\Omega)})$. For a given contraction $\Phi$, the $m$-dissipative extension $-i\hat{H}_\Phi$ of $-i\hat{H}$ is defined via
\begin{equation}\label{H Phi Def}
        D(\hat{H}_\Phi):=\left\{\psi \in \hat{H}^*: G_+ \psi= \Phi G_-\psi \right\}= \left\{\psi \in \hat{H}^*: i(\mathds{1}-\Phi)\iota_-\tilde{\tau}_D \psi=(\mathds{1}+\Phi)\iota_+ \tau_N \psi_D \right\}.
    \end{equation}
\subsection{Dirichlet and Neumann boundary conditions}
\begin{example}\textbf{Dirichlet} boundary conditions: \em{Setting $\Phi=-\mathds{1}$ returns
    \begin{equation}
        D(\hat{H}_{-\mathds{1}})= \left\{\psi \in \hat{H}^*: \tilde{\tau}_D \psi=0 \right\}.
    \end{equation}
    From Lemma \ref{Trace Extensions} we know $\ker(\tilde{\tau}_D)=\ker(\tau_D)=D(\hat{H}_D)$, so $\hat{H}_{-\mathds{1}}= \hat{H}_D$.}
\end{example}
\begin{example}\label{Krein Extension}
    \textbf{Krein type extension}: \em{Setting $\Phi=\mathds{1}$ returns
    \begin{equation}
        D(\hat{H}_{\mathds{1}})= \left\{\psi \in \hat{H}^*: \tau_N \psi_D=0 \right\}.
    \end{equation}
This must not be mistaken for the Neumann extension, as $\psi_D=(\hat{H}_D-\eta)^{-1}(\hat{H}^*-\eta)\psi \neq \psi$.}
\end{example}
To recover the Neumann boundary condition we must construct a contraction $\Phi$ so that $\tilde{\tau}_D \psi$ and $\tau_N \psi_D$ drop out of the boundary condition, and we are just left with $\tilde{\tau}_N \psi=0$. To accomplish this, we introduce the ``Dirichlet solution" map $\gamma(\lambda)$ and the ``Dirichlet-to-Neumann" map $D(\lambda)$.
\begin{definition}
    For $\lambda \in \rho(\hat{H}_D)\cap\mathbb{R}$, the ``Dirichlet solution" map $\gamma(\lambda)$ is defined by
    \begin{equation}
        \gamma(\lambda):\xi \mapsto \psi_\lambda \quad \text{ where} \quad \begin{cases}
                (\hat{H}^*-\lambda)\psi_\lambda=0
                \\
                \psi_\lambda \big{|}_{\partial \Omega}=\xi
            \end{cases}
    \end{equation}
    The ``Dirichlet-to-Neumann" map is defined as $D(\lambda):=\tau_N\gamma(\lambda)$, it maps Dirichlet values to Neumann values $D(\lambda):\tau_D\psi_\lambda \mapsto\tau_N\psi_\lambda$. 
    \end{definition}
    These mappings are well-studied in the literature and are famously used in the study of inverse problems. We will primarily refer the reader to section \ref{D to N section} of the Appendix for the proofs of Lemmas regarding these mappings, such as the one below.
    \begin{lemma}\label{maps properties}
        For $\lambda \in \rho(\hat{H}_D)\cap \mathbb{R}$, the map $\gamma(\lambda)$ is a bounded linear operator from $H^{3/2}(\partial \Omega)\to H^2(\Omega)$. Consequently, the ``Dirichlet-to-Neumann" map $D(\lambda)$ is a bounded linear map from $H^{3/2}(\partial \Omega)\to H^{1/2}(\partial \Omega)$. $D(\lambda)$ is also a densely defined symmetric operator on $L^2(\partial \Omega)$, and for $\lambda \in \rho(\hat{H}_D)\cap \rho(\hat{H}_N)\cap\mathbb{R}$ the operator $D(\lambda):H^{3/2}(\partial \Omega)\to H^{1/2}(\partial \Omega)$ is bijective with bounded inverse.
    \end{lemma}
\begin{proof}
    See the proofs of Lemma \ref{Dirichlet solution properties} and Lemma \ref{Dirichlet to Neumann Facts} in the Appendix.
\end{proof}
We also require an additional Lemma.
\begin{lemma}
    Let $\lambda \in \rho(\hat{H}_D)\cap \rho(\hat{H}_N)\cap \mathbb{R}$. Then the operator $\Theta_N(\lambda) :H^2(\partial \Omega)\to L^2(\partial \Omega)$, defined as $\Theta_N(\lambda):=\iota_+ D(\lambda)\iota_-^{-1}$ is a densely defined bijective symmetric operator in $L^2(\partial \Omega)$, and is hence an unbounded self-adjoint operator on $L^2(\partial \Omega)$ with $\pm i \in \rho(\Theta_N)$.
\end{lemma}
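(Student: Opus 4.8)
The plan is to verify in turn that $\Theta_N(\lambda)$ is densely defined, bijective, and symmetric, and then to read off self-adjointness and $\pm i\in\rho(\Theta_N(\lambda))$ from standard operator theory.

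First I would pin down the domain. Since the restriction $\iota_-\colon H^{3/2}(\partial\Omega)\to H^2(\partial\Omega)$ is an isometric isomorphism, $\iota_-^{-1}$ carries $H^2(\partial\Omega)$ onto $H^{3/2}(\partial\Omega)$, which is exactly the space on which $D(\lambda)$ is defined (by the lemma preceding the definition of the Dirichlet-to-Neumann map), while $\iota_+\colon H^{1/2}(\partial\Omega)\to L^2(\partial\Omega)$ is an isometric isomorphism; hence the composition $\Theta_N(\lambda)=\iota_+D(\lambda)\iota_-^{-1}$ is well defined precisely on $H^2(\partial\Omega)$, which is dense in $L^2(\partial\Omega)$ since it contains $C^\infty(\partial\Omega)$. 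Bijectivity is then immediate from the factorization: for $\lambda\in\rho(\hat{H}_D)\cap\rho(\hat{H}_N)\cap\mathbb{R}$ the preceding proposition makes $D(\lambda)\colon H^{3/2}(\partial\Omega)\to H^{1/2}(\partial\Omega)$ a bijection with bounded inverse, and the $\iota_\pm$ are isomorphisms on the relevant scales, so $\Theta_N(\lambda)$ is a bijection of $H^2(\partial\Omega)$ onto $L^2(\partial\Omega)$ with inverse $\iota_-D(\lambda)^{-1}\iota_+^{-1}$, bounded from $L^2(\partial\Omega)$ into $H^2(\partial\Omega)$ and therefore bounded on $L^2(\partial\Omega)$.

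The real content is symmetry. The input is that $D(\lambda)$ is symmetric on $L^2(\partial\Omega)$ with domain $H^{3/2}(\partial\Omega)$: this follows by polarization from the reality of $\langle D(\lambda)\chi,\chi\rangle_{L^2(\partial\Omega)}$ established by integration by parts in the preceding proposition. To pass this through $\iota_\pm$ I would first record the adjoint relation $\langle\iota_+v,u\rangle_{L^2(\partial\Omega)}=\langle v,\iota_-^{-1}u\rangle_{L^2(\partial\Omega)}$ for $u,v\in H^{1/2}(\partial\Omega)$, obtained by substituting $\xi=\iota_-^{-1}u\in L^2(\partial\Omega)\subset H^{-1/2}(\partial\Omega)$ into $\langle\iota_-\xi,\iota_+\chi\rangle_{L^2(\partial\Omega)}=\langle\xi,\chi\rangle_{H^{-1/2}(\partial\Omega)\times H^{1/2}(\partial\Omega)}$ and using that this pairing restricts to the $L^2$ inner product on $L^2(\partial\Omega)$. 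Applying the relation twice, for $\xi,\chi\in H^2(\partial\Omega)$ one gets
\[
\langle\Theta_N(\lambda)\xi,\chi\rangle_{L^2(\partial\Omega)}=\langle D(\lambda)\iota_-^{-1}\xi,\iota_-^{-1}\chi\rangle_{L^2(\partial\Omega)}=\langle\iota_-^{-1}\xi,D(\lambda)\iota_-^{-1}\chi\rangle_{L^2(\partial\Omega)}=\langle\xi,\Theta_N(\lambda)\chi\rangle_{L^2(\partial\Omega)},
\]
the middle equality being symmetry of $D(\lambda)$, which applies because $\iota_-^{-1}\xi,\iota_-^{-1}\chi\in H^{3/2}(\partial\Omega)$.

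Finally I would conclude: $\Theta_N(\lambda)^{-1}$ is a bounded symmetric operator on $L^2(\partial\Omega)$, hence self-adjoint, so $\Theta_N(\lambda)$ is itself self-adjoint with $0\in\rho(\Theta_N(\lambda))$ (alternatively, a densely defined surjective symmetric operator is automatically self-adjoint). Self-adjointness then forces $\sigma(\Theta_N(\lambda))\subset\mathbb{R}$, so in particular $\pm i\in\rho(\Theta_N(\lambda))$. The one step requiring genuine care is the symmetry computation --- keeping track of which Sobolev order each $\iota_\pm$ and the duality pairing are acting at, so that the pairing really does collapse to the $L^2(\partial\Omega)$ inner product --- but this is routine once the adjoint relation between $\iota_+$ and $\iota_-^{-1}$ has been isolated; everything else is assembly of results already in hand.
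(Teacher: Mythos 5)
Your proposal is correct and follows essentially the same route as the paper: the only real content is transporting the symmetry of $D(\lambda)$ through the isometries $\iota_\pm$ via the duality-pairing identity $\langle \iota_-\xi,\iota_+\chi\rangle_{L^2(\partial\Omega)}=\langle\xi,\chi\rangle_{H^{-1/2}(\partial\Omega)\times H^{1/2}(\partial\Omega)}$, with density, bijectivity, and $\pm i\in\rho(\Theta_N(\lambda))$ read off from the preceding proposition and standard operator theory. If anything you are more explicit than the paper, which only checks reality of the quadratic form $\langle\Theta_N(\lambda)\chi,\chi\rangle_{L^2(\partial\Omega)}$ and leaves the polarization and the self-adjointness deduction implicit.
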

\begin{proof}
    It suffices to show that $\langle \Theta_N(\lambda) \chi, \chi \rangle_{L^2(\partial \Omega)}\in \mathbb{R}$ for all $\chi \in H^2(\partial \Omega)$. Let $\chi$ be such an element, and let $\xi=\iota_-^{-1}\chi \in H^{3/2}(\partial \Omega)$. Then
    \begin{equation}
        \langle \iota_+ D(\lambda)\iota_-^{-1}\chi,\chi \rangle_{L^2(\partial \Omega)}=\langle \iota_+ D(\lambda)\chi,\iota_-\chi \rangle_{L^2(\partial \Omega)}=\langle  D(\lambda)\xi,\xi \rangle_{L^2(\partial \Omega)}.
    \end{equation}
    Hence $\langle \Theta_N \chi,\chi \rangle_{L^2(\partial \Omega)}=\langle  D(\lambda)\xi,\xi \rangle_{L^2(\partial \Omega)}\in \mathbb{R}$ for all $\chi \in H^2(\partial \Omega)$ as desired.
\end{proof}
\begin{example}
    \textbf{Neumann} boundary condition: \em{Set $\Theta_N=\Theta_N(\eta)$ and $\Phi=(i + \Theta_N)(i- \Theta_N)^{-1}$, which is well defined since $i \in \rho(\Theta_N)$. It is easy to verify
    \begin{equation}\label{Phi Simplication}
        (\mathds{1}+\Phi)=2i (i-\Theta_N)^{-1}, \quad (\mathds{1}-\Phi)=-2\Theta_N (i-\Theta_N)^{-1}.
    \end{equation}
    so the domain of $\hat{H}_\Phi$ is given by
    \begin{equation}
        D(\hat{H}_{\Phi})= \left\{\psi \in \hat{H}^*: -\Theta_N (i- \Theta_N)^{-1}\iota_-\tilde{\tau}_D \psi=(i - \Theta_N)^{-1}\iota_+ \tau_N \psi_D \right\}.
    \end{equation}
    For $\psi \in D(\hat{H}_\Phi)$ our boundary condition implies
    \begin{equation}
    \iota_- \tilde{\tau}_D \psi=(i-\Theta_N)(i-\Theta_N)^{-1}\iota_-\tilde{\tau}_D\psi=i(i-\Theta_N)^{-1}\iota_-\tilde{\tau}_D\psi +(i-\Theta_N)^{-1}\iota_+\tau_N \psi_D\end{equation}
    hence $\iota_- \tilde{\tau}_D \psi \in D(\Theta_N)=H^2(\partial \Omega)$. Consequently, the trace  $\tilde{\tau}_D \psi \in H^{3/2}(\partial \Omega)$ which by Lemma \ref{Trace Extensions} implies that $\psi \in H^2(\Omega)$, so $\tilde{\tau}_D \psi= \tau_D \psi$.  Applying $(i- \Theta_N)$ to both sides of the boundary condition returns
    \begin{equation}
        \iota_+ \tau_N \psi_D= -\Theta_N \iota_- \tau_D \psi=-\Theta_N \iota_- \tau_D \psi_\eta=-\iota_+ D(\eta)\tau_D \psi_\eta=-\iota_+ \tau_N \psi_\eta=  \iota_+ (\tau_N \psi_D-\tau_N \psi)
    \end{equation}
    so $\tau_N \psi=0$, and we have $\hat{H}_\Phi= \hat{H}_N$ as desired.}
 \end{example}
 \subsection{Regular Robin boundary conditions}
 In what follows we will be primarily interested in extensions of $-i\hat{H}$ that generate non-unitary dynamics. The extensions will have compact resolvents, so the following lemma will be useful in proofs concerning asymptotic stability.
 \begin{lemma}\label{no eigenvalues}
        For $\Omega \subset \mathbb{R}^n$ a bounded $C^2$ domain, the minimal operator $\hat{H}=(-\Delta+V):H^2_0(\Omega)\to L^2(\Omega)$ admits no eigenvalues.
    \end{lemma}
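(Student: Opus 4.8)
The plan is to argue by contradiction and reduce the claim to an interior unique continuation principle. Suppose $\lambda \in \mathbb{C}$ were an eigenvalue of $\hat{H}$, with eigenfunction $0 \neq \psi \in D(\hat{H}) = H^2_0(\Omega)$; thus $-\Delta \psi + V\psi = \lambda \psi$ as an identity in $L^2(\Omega)$. I will show $\psi \equiv 0$, a contradiction.

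First, I would transport the equation to all of $\mathbb{R}^n$ by extension by zero. Let $\tilde{\psi}$ be the extension of $\psi$ by $0$ outside $\Omega$ and $W := V \mathds{1}_\Omega \in L^\infty(\mathbb{R}^n)$ the corresponding extension of $V$. Since $C_c^\infty(\Omega)$ is dense in $H^2_0(\Omega)$, the zero-extension map is bounded from $H^2_0(\Omega)$ into $H^2(\mathbb{R}^n)$, and $\Delta(\widetilde{\varphi}) = \widetilde{\Delta \varphi}$ trivially for $\varphi \in C_c^\infty(\Omega)$, a routine density argument gives $\tilde{\psi} \in H^2(\mathbb{R}^n)$ and $\Delta \tilde{\psi} = \widetilde{\Delta \psi}$ in $L^2(\mathbb{R}^n)$. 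Using $\Delta \psi = (V - \lambda)\psi$ on $\Omega$ and $\tilde{\psi} \equiv 0$ on $\mathbb{R}^n \setminus \Omega$, this yields $-\Delta \tilde{\psi} = (\lambda - W)\tilde{\psi}$ almost everywhere on $\mathbb{R}^n$, and in particular the pointwise bound $|\Delta \tilde{\psi}| \leq \big(|\lambda| + \|V\|_{L^\infty(\Omega)}\big)|\tilde{\psi}|$ a.e. on $\mathbb{R}^n$.

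Next, I would apply unique continuation. As $\Omega$ is bounded, $\tilde{\psi}$ vanishes on the nonempty open set $\mathbb{R}^n \setminus \overline{\Omega}$. I then invoke the classical weak unique continuation property for the Laplacian with bounded zeroth-order perturbation: an $H^2_{\mathrm{loc}}$ function satisfying $|\Delta u| \leq C(|u| + |\nabla u|)$ on a connected open set and vanishing on a nonempty open subset vanishes identically. Since $\mathbb{R}^n$ is connected (note that $\Omega$ itself need not be), this forces $\tilde{\psi} \equiv 0$, hence $\psi = 0$, contrary to $\psi \neq 0$. Thus $\hat{H}$ admits no eigenvalues.

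The only substantial ingredient is the unique continuation theorem in the last step, which holds under the sole hypothesis $V \in L^\infty(\Omega,\mathbb{R})$, with no regularity or analyticity required. I would stress that no boundary unique continuation statement is needed: the hypothesis $\psi \in H^2_0(\Omega)$ — equivalently, the vanishing of both $\psi$ and its normal derivative on $\partial \Omega$ — is used precisely to make the zero extension $\tilde{\psi}$ a genuine $H^2_{\mathrm{loc}}(\mathbb{R}^n)$ solution of the differential inequality, after which the standard interior result closes the argument.
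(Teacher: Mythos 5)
Your proposal is correct, and after the step it shares with the paper --- extending the putative eigenfunction $\psi \in H^2_0(\Omega)$ by zero to $\tilde{\psi}\in H^2(\mathbb{R}^n)$ solving $-\Delta\tilde{\psi}=(\lambda-W)\tilde{\psi}$ with $W=V\mathds{1}_\Omega$ --- it takes a genuinely different route. The paper shifts the potential so that $-\Delta+\overline{V}_+$ is nonnegative, invokes Weyl's theorem on the stability of the essential spectrum under relatively compact perturbations to get $\sigma(-\Delta+\overline{V}_+)=\sigma_{\mathrm{ess}}(-\Delta+\overline{V}_+)=[0,\infty)$, and concludes from this that there are no eigenvalues; you instead use that $\tilde{\psi}$ vanishes on the nonempty open set $\mathbb{R}^n\setminus\overline{\Omega}$, that $\mathbb{R}^n$ is connected (correctly noting $\Omega$ need not be), and the Aronszajn--Cordes weak unique continuation principle for $|\Delta u|\leq C|u|$ with $u\in H^2_{\mathrm{loc}}$. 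The trade-off: the paper's argument uses only soft spectral theory, but as written it tacitly assumes that $\sigma=\sigma_{\mathrm{ess}}$ rules out point spectrum, which is not automatic --- eigenvalues can be embedded in the essential spectrum, and the standard proofs that compactly supported bounded perturbations of $-\Delta$ have no nonnegative eigenvalues themselves pass through unique continuation or a Rellich-type decay argument. Your route pays for a heavier black box (Carleman estimates underlie unique continuation, though the $L^\infty$ zeroth-order case you need is classical and well below the sharp $L^{n/2}$ results), but it closes the argument airtight and in fact supplies exactly the ingredient the paper's spectral argument is implicitly leaning on. All the supporting steps you give (boundedness of the zero-extension map on $H^2_0(\Omega)$, commutation of $\Delta$ with extension by density from $C_c^\infty(\Omega)$, the reduction to the differential inequality) are sound; note only that you need just the zeroth-order inequality $|\Delta u|\leq C|u|$, so citing the $|u|+|\nabla u|$ form is more than sufficient.
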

    \begin{proof}
        We proceed by contradiction. Suppose that there exists a $\lambda \in \mathbb{C}$ and $\psi_\lambda \in H^2_0(\Omega)$ such that $\hat{H}\psi_\lambda=\lambda \psi_\lambda$. Denote $V_+:=V+||V||_{L^\infty(\Omega)}$, so $\psi_\lambda$ is an eigenvector of $\left(-\Delta + V_+\right)$ with eigenvalue $\lambda + ||V||_{L^\infty(\Omega)}$. We may extend $\psi_\lambda$ to a function $\overline{\psi}_\lambda$ on $\mathbb{R}^n$ by setting $\overline\psi_\lambda=0$ on $\mathbb{R}^n\setminus \Omega$. Since this extension mapping $\psi \mapsto \overline{\psi}$ is a continuous linear operator \cite{Brezis} from $H^2_0(\Omega)\to H^2(\mathbb{R}^n)$, it follows that $\overline{\psi}_\lambda$ is an eigenvector of the operator $(-\Delta+\overline{V}_+):H^2(\mathbb{R}^n)\to L^2(\mathbb{R}^n)$ with eigenvalue $\lambda+||V||_{L^\infty(\Omega)}$. However, it can be shown that this operator has no eigenvalues. To see this we note that $(-\Delta+\overline{V}_+)$ is a positive operator on $L^2(\Omega)$ so $\sigma(\Delta+\overline V_+)\subset [0,\infty)$. By Rellich–Kondrachov \cite{Brezis} $\mathds{1}_{\Omega}\circ(-\Delta+\mu)^{-1}:L^2(\mathbb{R}^n)\to H^2(\mathbb{R}^n)\to H^2(\Omega)\hookrightarrow L^2(\Omega)\to L^2(\mathbb{R}^n) $ is a compact operator on $L^2(\mathbb{R}^n)$ for any $\mu \in \rho(-\Delta)$, so $-\Delta + \overline{V}_+$ is a relatively compact perturbation of $-\Delta$. This implies that the essential spectrum of these operators are the same \cite[Theorem 8.4.3]{Davies_1995}, $\sigma_\text{ess}(-\Delta + \overline{V}_+)=\sigma_\text{ess}(-\Delta)=[0,\infty)$. Hence  $\sigma(-\Delta + \overline{V}_+)=\sigma_\text{ess}(-\Delta+\overline V_+)$, contradicting our early statement that $\overline{\psi}_\lambda$ is an eigenvector of $-\Delta + \overline{V}_+$. 
    \end{proof}
    \begin{theorem}\label{Robin boundary conditions nice}
    (regular \textbf{Robin} boundary condition) Let $\beta: H^{3/2}(\partial \Omega)\to H^{1/2}(\partial \Omega)$ be a compact operator such that \em$\text{Re} \langle \beta \chi, \chi \rangle_{L^2(\partial \Omega)}\geq 0$ \em for all $\chi \in H^{3/2}(\partial \Omega)$. Then the operator defined via
    \begin{equation}
        D(-i\hat{H}_\beta):=\left\{\psi \in H^2(\Omega):\tau_N \psi=i \beta \tau_D \psi \right\}, \quad -i\hat{H}_\beta:=-i\hat{H}^*\big{|}_{D(-i\hat{H}_\beta)}
    \end{equation}
    is an $m$-dissipative extension of $-i\hat{H}$. If the real part of $\beta$ is strictly positive, \em $\text{Re}\langle \beta \chi, \chi \rangle_{L^2(\partial \Omega)}>0$ for all $0 \neq \chi \in H^{3/2}(\partial \Omega)$\em, then $||\exp(-it\hat{H}_\beta)\psi_0||_{L^2(\Omega)}\xrightarrow{t \to \infty}0$ for each $\psi_0 \in L^2(\Omega)$.
\end{theorem}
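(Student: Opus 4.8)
The plan is to realize $-i\hat{H}_\beta$ as one of the operators $-i\hat{H}_\Phi$ produced by the boundary quadruple $(L^2(\partial\Omega),G_\pm)$ of Theorem~\ref{main} (Proposition~\ref{Current Decomposition}), for an explicit $L^2(\partial\Omega)$-contraction $\Phi$ obtained as a Cayley transform; $m$-dissipativity then follows from Theorem~\ref{Description}, and the decay statement from Theorem~\ref{asympt stability}. Throughout I fix $\eta\in\rho(\hat{H}_D)\cap\rho(\hat{H}_N)\cap\mathbb{R}$, so that the Dirichlet-to-Neumann map $D(\eta):H^{3/2}(\partial\Omega)\to H^{1/2}(\partial\Omega)$ and the self-adjoint operator $\Theta_N:=\iota_+D(\eta)\iota_-^{-1}$ on $L^2(\partial\Omega)$ (with domain $H^2(\partial\Omega)$) are available.

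First I translate the Robin condition into boundary-quadruple form. For $\psi\in H^2(\Omega)$ decompose $\psi=\psi_D+\psi_\eta$ according to equation~(\ref{Dirichlet Decomp}); then $\tau_D\psi=\tau_D\psi_\eta$ and $\tau_N\psi=\tau_N\psi_D+D(\eta)\tau_D\psi$, so the condition $\tau_N\psi=i\beta\tau_D\psi$ is equivalent to $\tau_N\psi_D=(i\beta-D(\eta))\tau_D\psi$. Writing $a:=\iota_-\tau_D\psi\in H^2(\partial\Omega)$ and introducing the densely defined operator $S:=\iota_+\beta\iota_-^{-1}+i\Theta_N$ on $L^2(\partial\Omega)$ with domain $H^2(\partial\Omega)$, applying $\iota_+$ to the previous identity gives $\iota_+\tau_N\psi_D=(i\iota_+\beta\iota_-^{-1}-\Theta_N)a$, and substituting this into the formulas for $G_\pm$ from Proposition~\ref{Current Decomposition} yields
\[
G_+\psi=\tfrac{1}{\sqrt{2}}(\mathds{1}-S)a,\qquad G_-\psi=\tfrac{1}{\sqrt{2}}(\mathds{1}+S)a .
\]
Hence the Robin condition is exactly $G_+\psi=\Phi G_-\psi$ with $\Phi:=(\mathds{1}-S)(\mathds{1}+S)^{-1}$, once $\mathds{1}+S$ is shown to be invertible in the appropriate sense.

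Next I check that $\Phi$ is a genuine contraction of $L^2(\partial\Omega)$. Since $\Theta_N$ is self-adjoint, $\text{Re}\langle i\Theta_N a,a\rangle_{L^2(\partial\Omega)}=0$; and the duality relation $\langle\iota_-\xi,\iota_+\chi\rangle_{L^2(\partial\Omega)}=\langle\xi,\chi\rangle_{H^{-1/2}(\partial\Omega)\times H^{1/2}(\partial\Omega)}$, together with the compatibility of that pairing with the $L^2$ inner product, gives $\langle\iota_+\beta\iota_-^{-1}a,a\rangle_{L^2(\partial\Omega)}=\langle\beta\iota_-^{-1}a,\iota_-^{-1}a\rangle_{L^2(\partial\Omega)}$, whose real part is $\geq 0$ by hypothesis. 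The polarization identity $\|(\mathds{1}\mp S)a\|_{L^2}^2=\|a\|_{L^2}^2\mp 2\,\text{Re}\langle Sa,a\rangle_{L^2}+\|Sa\|_{L^2}^2$ then yields
\[
\|(\mathds{1}+S)a\|_{L^2(\partial\Omega)}^2-\|(\mathds{1}-S)a\|_{L^2(\partial\Omega)}^2=4\,\text{Re}\langle\beta\iota_-^{-1}a,\iota_-^{-1}a\rangle_{L^2(\partial\Omega)}\geq 0 ,
\]
which shows $\mathds{1}+S$ is injective on $H^2(\partial\Omega)$ and that $(\mathds{1}+S)a\mapsto(\mathds{1}-S)a$ is contractive on $\mathrm{ran}(\mathds{1}+S)$. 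That this range is all of $L^2(\partial\Omega)$ is where compactness of $\beta$ is used: $\mathds{1}+i\Theta_N:H^2(\partial\Omega)\to L^2(\partial\Omega)$ is a bijection with bounded inverse (skew-adjointness of $i\Theta_N$), while $\iota_+\beta\iota_-^{-1}:H^2(\partial\Omega)\to L^2(\partial\Omega)$ is compact, so $\mathds{1}+S=(\mathds{1}+i\Theta_N)\bigl(\mathds{1}+(\mathds{1}+i\Theta_N)^{-1}\iota_+\beta\iota_-^{-1}\bigr)$ is a Fredholm-index-zero perturbation of a bijection, and injectivity upgrades it to a bijection $H^2(\partial\Omega)\to L^2(\partial\Omega)$. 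Thus $\Phi=(\mathds{1}-S)(\mathds{1}+S)^{-1}$ is defined and contractive on all of $L^2(\partial\Omega)$, and when $\text{Re}\langle\beta\cdot,\cdot\rangle_{L^2(\partial\Omega)}$ is strictly positive it is a strict contraction, because $a\neq 0\Rightarrow\iota_-^{-1}a\neq 0$.

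Finally I match domains and draw both conclusions. The inclusion $D(-i\hat{H}_\beta)\subseteq D(\hat{H}_\Phi)$ is the computation of the second paragraph; for the reverse, take $\psi\in D(\hat{H}^*)$ with $G_+\psi=\Phi G_-\psi$, set $a:=(\mathds{1}+S)^{-1}\bigl(\iota_-\tilde{\tau}_D\psi-i\iota_+\tau_N\psi_D\bigr)\in H^2(\partial\Omega)$, and add and subtract the resulting identities $\iota_-\tilde{\tau}_D\psi-i\iota_+\tau_N\psi_D=(\mathds{1}+S)a$ and $\iota_-\tilde{\tau}_D\psi+i\iota_+\tau_N\psi_D=(\mathds{1}-S)a$ to get $\iota_-\tilde{\tau}_D\psi=a\in H^2(\partial\Omega)$ and $\iota_+\tau_N\psi_D=(i\iota_+\beta\iota_-^{-1}-\Theta_N)a$; the first forces $\tilde{\tau}_D\psi\in H^{3/2}(\partial\Omega)$, hence (by Lemma~\ref{Trace Extensions} and surjectivity of $\tau_D:H^2(\Omega)\to H^{3/2}(\partial\Omega)$) $\psi\in H^2(\Omega)$ with $\tilde{\tau}_D\psi=\tau_D\psi$, and then the second relation unwinds to $\tau_N\psi=i\beta\tau_D\psi$. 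So $-i\hat{H}_\beta=-i\hat{H}_\Phi$, which is $m$-dissipative by Theorem~\ref{Description} and evidently extends $-i\hat{H}$ (elements of $H^2_0(\Omega)$ have vanishing Dirichlet and Neumann traces). For the decay claim I apply Theorem~\ref{asympt stability}: $\Phi$ is strictly contractive by the third paragraph; $(-i\hat{H}_\beta-\mathds{1})^{-1}$ is compact on $L^2(\Omega)$ because $D(-i\hat{H}_\beta)\subset H^2(\Omega)$ with its graph norm dominating the $H^2$-norm (closed graph theorem) and $H^2(\Omega)\hookrightarrow L^2(\Omega)$ compactly; and $\overline{-i\hat{H}_0}=-i\hat{H}$ has no eigenvalues on the imaginary axis --- in fact none at all --- by Lemma~\ref{no eigenvalues}. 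The main obstacle I anticipate is exactly the everywhere-definedness of $\Phi$ on $L^2(\partial\Omega)$: this is the one place where compactness of $\beta$ is essential (through the Fredholm alternative), and it must be reconciled with the Sobolev-index bookkeeping for $\iota_\pm$, $\tilde{\tau}_D$, $\tau_N$ and $\Theta_N$ so that every trace lands in the space where the relevant maps are defined; the contraction estimates themselves are only the Cayley-transform algebra together with the sign condition on $\beta$.
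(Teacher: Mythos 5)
Your proposal is correct and follows essentially the same route as the paper: your $\Phi=(\mathds{1}-S)(\mathds{1}+S)^{-1}$ with $S=\iota_+\beta\iota_-^{-1}+i\Theta_N$ is literally the paper's Cayley transform $(i+\Theta_\beta)(i-\Theta_\beta)^{-1}$ in disguise (since $i\mp\Theta_\beta=i(\mathds{1}\pm S)$), and the Fredholm-index-zero argument for surjectivity, the norm identity giving contractivity, the bootstrapping $\iota_-\tilde{\tau}_D\psi\in H^2(\partial\Omega)\Rightarrow\psi\in H^2(\Omega)$, and the appeal to Theorem \ref{asympt stability} with Lemma \ref{no eigenvalues} all match the paper's proof step for step. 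No gaps.
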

As before, our goal is to construct a contraction $\Phi$ so that $\tau_N \psi_D$ drops out of the boundary condition and we are left with a relation between $\tau_D\psi$ and $\tau_N \psi$. Towards this goal we first prove the following lemma.
    \begin{lemma}
        Let $\lambda \in \rho(\hat{H}_D)\cap \rho(\hat{H}_N)\cap \mathbb{R}$. Then the operator $\Theta_\beta(\lambda):=\Theta_N(\lambda)- i\iota_+ \beta \iota_-^{-1}:H^2(\partial \Omega)\to L^2(\partial \Omega)$ is a densely defined operator in $L^2(\partial \Omega)$, with $i \in \rho(\Theta_\beta(\lambda))$.
    \end{lemma}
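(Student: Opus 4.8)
The plan is to prove the two claims in turn. That $\Theta_\beta(\lambda)$ is densely defined amounts to checking that the composition $\iota_+\beta\iota_-^{-1}$ makes sense on $H^2(\partial\Omega)$: since $\iota_-:H^{3/2}(\partial\Omega)\to H^2(\partial\Omega)$ is an isometric isomorphism, $\iota_-^{-1}$ sends $H^2(\partial\Omega)$ into $H^{3/2}(\partial\Omega)$, which is the domain of $\beta$; then $\beta$ maps into $H^{1/2}(\partial\Omega)$ and $\iota_+$ carries this into $L^2(\partial\Omega)$. Together with $\Theta_N(\lambda):H^2(\partial\Omega)\to L^2(\partial\Omega)$ this shows $\Theta_\beta(\lambda)$ is well defined on $D(\Theta_\beta(\lambda))=H^2(\partial\Omega)$, which is dense in $L^2(\partial\Omega)$.

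For $i\in\rho(\Theta_\beta(\lambda))$ the first step is a form identity. Given $\chi\in H^2(\partial\Omega)$, set $\xi:=\iota_-^{-1}\chi\in H^{3/2}(\partial\Omega)$, so $\chi=\iota_-\xi$. Using the defining relation $\langle\iota_-\,\cdot\,,\iota_+\,\cdot\,\rangle_{L^2(\partial\Omega)}=\langle\,\cdot\,,\,\cdot\,\rangle_{H^{-1/2}(\partial\Omega)\times H^{1/2}(\partial\Omega)}$ together with the fact that the dual pairing restricts to the $L^2$ pairing on $L^2(\partial\Omega)$, a short computation gives $\langle\iota_+\beta\iota_-^{-1}\chi,\chi\rangle_{L^2(\partial\Omega)}=\langle\beta\xi,\xi\rangle_{L^2(\partial\Omega)}$, whose real part is nonnegative by hypothesis on $\beta$. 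Combining this with $\langle\Theta_N(\lambda)\chi,\chi\rangle_{L^2(\partial\Omega)}\in\mathbb{R}$ (the symmetry of $\Theta_N(\lambda)$) and taking imaginary parts yields, for every $\chi\in H^2(\partial\Omega)$,
\[
\text{Im}\,\langle (i-\Theta_\beta(\lambda))\chi,\chi\rangle_{L^2(\partial\Omega)}=\|\chi\|_{L^2(\partial\Omega)}^2+\text{Re}\,\langle\beta\xi,\xi\rangle_{L^2(\partial\Omega)}\geq\|\chi\|_{L^2(\partial\Omega)}^2,
\]
and hence the a priori estimate $\|(i-\Theta_\beta(\lambda))\chi\|_{L^2(\partial\Omega)}\geq\|\chi\|_{L^2(\partial\Omega)}$. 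This already gives injectivity of $i-\Theta_\beta(\lambda)$ and closedness of its range.

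It remains to establish surjectivity of $i-\Theta_\beta(\lambda):H^2(\partial\Omega)\to L^2(\partial\Omega)$. Since $i\in\rho(\Theta_N(\lambda))$ by the previous lemma, $i-\Theta_N(\lambda):H^2(\partial\Omega)\to L^2(\partial\Omega)$ is a bounded bijection whose inverse is bounded into $H^2(\partial\Omega)$ (closed graph theorem, using $H^2(\partial\Omega)\hookrightarrow L^2(\partial\Omega)$), so I would write
\[
i-\Theta_\beta(\lambda)=(i-\Theta_N(\lambda))\bigl(\mathds{1}+K\bigr),\qquad K:=(i-\Theta_N(\lambda))^{-1}\,i\,\iota_+\beta\iota_-^{-1}:H^2(\partial\Omega)\to H^2(\partial\Omega).
\]
Because $\beta$ is compact and every other factor is bounded, $K$ is a compact operator on $H^2(\partial\Omega)$, so by the Fredholm alternative $\mathds{1}+K$ is invertible as soon as it is injective; injectivity is immediate from the a priori estimate, since $(\mathds{1}+K)\chi=0$ forces $(i-\Theta_\beta(\lambda))\chi=0$ and hence $\chi=0$. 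Consequently $i-\Theta_\beta(\lambda)$ is a composition of bounded bijections with bounded inverses, so $(i-\Theta_\beta(\lambda))^{-1}$ is bounded on $L^2(\partial\Omega)$; in particular $\Theta_\beta(\lambda)$ is closed and $i\in\rho(\Theta_\beta(\lambda))$.

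The part I expect to be the actual work is bookkeeping rather than conceptual: correctly tracking the Sobolev indices and the action of $\iota_\pm$ through $\iota_+\beta\iota_-^{-1}$, verifying the identity $\langle\iota_+\beta\iota_-^{-1}\chi,\chi\rangle_{L^2(\partial\Omega)}=\langle\beta\iota_-^{-1}\chi,\iota_-^{-1}\chi\rangle_{L^2(\partial\Omega)}$ with all conjugations in place, and making sure the factorization and the compactness of $K$ are set on the correct spaces ($H^2(\partial\Omega)$ in the middle, $L^2(\partial\Omega)$ at the end). No new ingredient beyond the previous lemma and the Fredholm alternative appears to be needed.
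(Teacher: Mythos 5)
Your proof is correct and follows essentially the same route as the paper: the paper proves $i-\Theta_\beta(\lambda)$ is Fredholm of index $0$ by viewing it as a compact perturbation of the bijection $i-\Theta_N(\lambda):H^2(\partial\Omega)\to L^2(\partial\Omega)$ and then reduces surjectivity to injectivity, which it establishes via the identical imaginary-part estimate $\mathrm{Im}\,\langle (i-\Theta_\beta(\lambda))\xi,\xi\rangle_{L^2(\partial\Omega)}\geq \|\xi\|_{L^2(\partial\Omega)}^2$. Your factorization $(i-\Theta_N(\lambda))(\mathds{1}+K)$ with the Fredholm alternative applied to $\mathds{1}+K$ on $H^2(\partial\Omega)$ is just a repackaging of that same compact-perturbation argument, so there is nothing missing.
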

    \begin{proof}
        Recall that $i - \Theta_N(\lambda):H^2(\partial \Omega)\to L^2(\partial \Omega)$ is a bijective and bounded linear map, and is thus Fredholm with index $0$. Since $\beta$ is compact and $\iota_-^{-1}:H^2(\partial \Omega) \to H^{3/2}(\partial \Omega)$, $\iota_+:H^{1/2}(\partial \Omega)\to L^2(\partial \Omega)$ are isometries, it follows that $\iota_+ \beta \iota_-^{-1}:H^2(\partial \Omega)\to L^2(\partial \Omega)$ is also compact. By the stability of Fredholm operators under compact perturbations \cite{Brezis}, we have that $i-\Theta_\beta(\lambda)=i-\Theta_N(\lambda)+ i\iota_+ \beta \iota_-^{-1}$ is also Fredholm of index $0$, so it suffices to prove that the map is injective. Let $\xi \in H^2(\partial \Omega)$ with $\chi=\iota_-^{-1}\xi \in H^{3/2}(\partial \Omega)$. Since $\Theta_N(\lambda)$ is symmetric
        \begin{equation}
            \text{Im} \langle (i - \Theta_\beta(\lambda))\xi, \xi \rangle_{L^2(\partial \Omega)}= \text{Im} \langle i\xi,\xi \rangle_{L^2(\partial \Omega)} + \text{Re} \langle \iota_+ \beta \iota_-^{-1}\xi,\xi \rangle_{L^2(\partial \Omega)}= ||\xi||^2_{L^2(\partial \Omega)} + \text{Re} \langle \beta \chi,\chi \rangle_{L^2(\partial \Omega)}.
        \end{equation}
        Hence 
        \begin{equation}
            ||\xi||_{L^2(\partial \Omega)}^2 \leq | \text{Im} \langle (i - \Theta_\beta(\lambda))\xi, \xi \rangle_{L^2(\partial \Omega)} | \leq ||(i - \Theta_\beta(\lambda)) \xi||_{L^2(\partial \Omega)} ||\xi||_{L^2(\partial \Omega)}
        \end{equation}
        which implies $||\xi||_{L^2(\partial \Omega)}\leq ||(i - \Theta_\beta(\lambda)) \xi ||_{L^2(\partial \Omega)}$. It follows that $(i-\Theta_\beta(\lambda))$ is an injective Fredholm operator of index $0$ and is thus invertible.
    \end{proof}
    \begin{proof}[Proof of Theorem \ref{Robin boundary conditions nice}]
        Set $\Theta_\beta=\Theta_\beta(\eta)$ and $\Phi=(i + \Theta_\beta)(i- \Theta_\beta)^{-1}$. It is easy to verify
    \begin{equation}
        (\mathds{1}+\Phi)=2i (i-\Theta_\beta)^{-1}, \quad (\mathds{1}-\Phi)=-2\Theta_\beta (i-\Theta_\beta)^{-1}.
    \end{equation}
    The domain of $\hat{H}_\Phi$ is then given by
    \begin{equation}
        D(\hat{H}_\Phi)= \left\{\psi \in D(\hat{H}^*): -\Theta_\beta (i- \Theta_\beta)^{-1}\iota_-\tilde{\tau}_D \psi=(i - \Theta_\beta)^{-1}\iota_+ \tilde{\tau}_N \psi_D \right\}.
    \end{equation}
    To show $\hat{H}_\Phi$ is a maximally dissipative extension of $\hat{H}$, we must prove $||(i+\Theta_\beta)(i-\Theta_\beta)^{-1}\chi||_{L^2(\partial \Omega)}\leq ||\chi||_{L^2(\partial \Omega)}$. Setting $\xi=(i-\Theta_\beta)^{-1}\chi$, it is equivalent to prove $||(i+\Theta_\beta)\xi||_{L^2(\partial \Omega)}\leq ||(i-\Theta_\beta)\xi||_{L^2(\partial \Omega)}$ for all $\xi \in H^{2}(\partial \Omega)$. This follows immediately from
    \begin{equation}\label{contraction regular}
        ||(i-\Theta_\beta)\xi||_{L^2(\partial \Omega)}^2-||(i+\Theta_\beta)\xi||_{L^2(\partial \Omega)}^2=4\text{Re}\langle \iota_+\beta\iota_-^{-1}\xi,\xi\rangle_{L^2(\partial \Omega)}=4\text{Re}\langle \beta\iota_-^{-1}\xi, \iota_-^{-1}\xi\rangle_{L^2(\partial \Omega)}\geq 0.
    \end{equation}
    Repeating the same steps as in the Neumann example, the boundary condition implies $\iota_- \tilde{\tau}_D \psi \in D(\Theta_\beta)=H^2(\partial \Omega)$ for all $\psi \in D(\hat{H}_\Phi)$, and in particular $\tilde{\tau}_D \psi \in H^{3/2}(\partial \Omega)$ which by Lemma \ref{Trace Extensions} implies that $\psi \in H^2(\Omega)$, so $\tilde{\tau}_D \psi= \tau_D \psi$.  Applying $(i- \Theta_\beta)$ to both sides of the boundary condition returns
    \begin{equation}
    \begin{split}
         \iota_+ \tau_N \psi_D&= - (\Theta_N- i\iota_+ \beta \iota_-^{-1}) \iota_- \tau_D \psi
        \\
        &= (i\iota_+ \beta \tau_D \psi-\Theta_N\iota_- \tau_D \psi_\eta) 
        \\
        &=\iota_+( i \beta \tau_D \psi-D(\eta) \tau_D \psi_\eta) 
        \\
        &=\iota_+(i \beta  \tau_D \psi-\tau_N \psi + \tau_N \psi_D ) 
    \end{split}
    \end{equation}
    so $D(\hat{H}_\Phi)=  \left\{\psi \in H^2(\Omega):\tau_N \psi=i \beta \tau_D \psi \right\} $ as desired. 
    \par
    Now, when the real part of $\beta$ is \textit{strictly} positive we have from equation (\ref{contraction regular}) that $\Phi$ is \textit{strictly} contractive on $L^2(\Omega)$. The contraction semigroup generator $-i\hat{H}_\Phi$ has compact resolvent because for $\lambda \in \rho(-i\hat{H}_\Phi)$ the resolvent operator $(-i\hat{H}_\Phi - \lambda)^{-1}$ maps $L^2(\Omega)$ to $H^2(\Omega)$, which by Rellich-Kondrachov \cite{Brezis} embeds compactly into $L^2(\Omega)$. The generator $-i\hat{H}_\Phi$ meets the requirements of Theorem \ref{asympt stability}, so $||\exp(-it\hat{H}_\Phi)\psi_0||_{L^2(\Omega)}\xrightarrow{t\to \infty}0$ holds for all $\psi_0 \in L^2(\Omega)$ if the operator $(-\Delta+V):H^2_0(\Omega)\to L^2(\Omega)$ admits no real eigenvalues. This follows immediately from Lemma \ref{no eigenvalues}, concluding our proof of Theorem \ref{Robin boundary conditions nice}.
 \end{proof}
 \begin{remark}
        Since the embeddings $H^{1/2 +\epsilon}(\partial \Omega) \xhookrightarrow{} H^{1/2}(\partial \Omega)$ and $H^{3/2}(\partial \Omega) \xhookrightarrow{} H^{3/2-\epsilon}(\partial \Omega)$ are compact \cite{Brezis} for any $\epsilon>0$, a sufficient condition for $\beta$ to be compact is taking $\beta:H^{3/2}(\partial \Omega)\to H^{1/2+\epsilon}(\partial \Omega)$ or $\beta:H^{3/2 -\epsilon}(\partial \Omega)\to H^{1/2}(\partial \Omega)$ bounded for some $\epsilon>0$. If $\beta$ is multiplication by some function, it suffices to take $\beta \in H^s(\partial \Omega)$ for any $s>\max \{\frac{1}{2},\frac{n-2}{2}\}$ \cite{behzadan2021multiplicationsobolevspacesrevisited}.
    \end{remark}
    \subsection{Generalized Robin boundary conditions with compactness}
    The construction provided in the previous subsection does not allow $\beta$ to be multiplication by some function of low regularity, such as a generic $L^\infty$ function. Our generalization to lower regularity $\beta$ will be slightly more complicated, and we no longer expect $D(\hat{H}_\Phi)$ to reside within $H^2(\Omega)$ since $\tilde \tau_N\psi=i\beta\tilde \tau_D\psi \notin H^{1/2}(\partial \Omega)$.
    \begin{theorem}\label{Robin boundary conditions ugly}
    (\textbf{Robin} boundary condition) Let $\beta: H^{1/2}(\partial \Omega)\to H^{-1/2}(\partial \Omega)$ be a compact operator such that \em$\text{Re} \langle \beta \chi, \chi \rangle_{H^{-1/2}(\partial \Omega)\times H^{1/2}(\partial \Omega)}\geq 0$ \em for all $\chi \in H^{1/2}(\partial \Omega)$. Then operator defined via
    \begin{equation}
        D(-i\hat{H}_\beta):=\left\{\psi \in H^1(\Omega)\cap D(\hat{H}^*):\tilde{\tau}_N \psi=i \beta \tilde{\tau}_D \psi \right\}, \quad -i\hat{H}_\beta:=-i\hat{H}^*\big{|}_{D(-i\hat{H}_\beta)}
    \end{equation}
    is a maximally dissipative extension of $-i\hat{H}$.  If the real part of $\beta$ is strictly positive;\newline \em $\text{Re}\langle \beta \chi, \chi \rangle_{H^{-1/2}(\partial \Omega)\times H^{1/2}(\partial \Omega)}>0$ for all $0 \neq \chi \in H^{1/2}(\partial \Omega)$\em; then $||\exp(-it\hat{H}_\beta)\psi_0||_{L^2(\Omega)}\xrightarrow{t \to \infty}0$ for each $\psi_0 \in L^2(\Omega)$. 
\end{theorem} The proof of this theorem will require us to introduce the ``extended" Dirichlet-to-Neumann maps.
\begin{lemma}\label{Extended maps properties}
    For $\lambda \in \rho(\hat{H}_D)\cap \mathbb{R}$, the operator $\gamma(\lambda)$ admits extensions to bounded linear operators $\gamma(\lambda):H^s(\partial \Omega)\to H^{s+1/2}(\Omega)$ for all $s \in [-\frac{1}{2},\frac{3}{2}]$. Consequently, the ``Dirichlet-to-Neumann" map $D(\lambda)$ also admits extensions to bounded linear maps $\tilde{D}(\lambda):=\tilde{\tau}_N\gamma(\lambda):H^{s}(\partial \Omega)\to H^{s-1/2}(\partial \Omega)$ for all $s \in [-\frac{1}{2},\frac{3}{2}]$. 
    \par
    The restriction $\tilde{D}(\lambda):H^{1/2}(\partial \Omega)\to H^{1/2}(\partial \Omega)$ is a ``symmetric" operator, in the sense that $\langle \tilde{D}(\lambda)\zeta,\zeta \rangle_{H^{-1/2}(\partial \Omega)\times H^{1/2}(\partial \Omega)}\in\mathbb{R}$ for all $\zeta \in H^{1/2}(\partial \Omega)$. Lastly, for $\lambda \in \rho(\hat{H}_D)\cap\rho(\hat{H}_N)\cap \mathbb{R}$ the restrictions $\tilde{D}(\lambda):H^s(\partial \Omega)\to H^{s-1/2}(\partial \Omega)$ are bijective linear maps with bounded inverses for all $s\in [-\frac{1}{2},\frac{3}{2}]$.
\end{lemma}
\begin{proof}
    See the proofs of Lemma \ref{Dirichlet solution properties} and Lemma \ref{Dirichlet to Neumann Facts} in the Appendix.
\end{proof}
\begin{corollary}
    For $\lambda \in \rho(\hat{H}_D)\cap \mathbb{R}$, the operator $\Theta_N(\lambda)=\iota_+\tilde{D}(\lambda)\iota_-^{-1}:H^{1}(\partial \Omega)\to H^{-1}(\partial \Omega)$ is ``symmetric" in the sense that $\langle \tilde{\Theta}_N(\lambda)\xi, \xi \rangle_{H^{-1}(\partial \Omega)\times H^1(\partial \Omega)} \in \mathbb{R}$ for all $\xi \in H^1(\partial \Omega)$. For $\lambda \in \rho(\hat{H}_D)\cap \rho(\hat{H}_N)\cap \mathbb{R}$ this operator is also bijective.
\end{corollary}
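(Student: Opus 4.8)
The plan is to transport the corresponding facts about $\tilde D(\lambda)$ from the preceding proposition through the isometries $\iota_\pm$, using $\iota_-^{-1}=\iota_+$ on every relevant Sobolev scale. For $\lambda\in\rho(\hat H_D)\cap\mathbb{R}$ the restriction $\iota_-\colon H^{1/2}(\partial\Omega)\to H^1(\partial\Omega)$ is an isometric isomorphism (the $s=\tfrac12$ case), so $\iota_-^{-1}=\iota_+$ carries $H^1(\partial\Omega)$ onto $H^{1/2}(\partial\Omega)$; composing with $\tilde D(\lambda)\colon H^{1/2}(\partial\Omega)\to H^{-1/2}(\partial\Omega)$ and with $\iota_+\colon H^{-1/2}(\partial\Omega)\to H^{-1}(\partial\Omega)$ shows $\tilde\Theta_N(\lambda)$ is a bounded operator $H^1(\partial\Omega)\to H^{-1}(\partial\Omega)$, as asserted.

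For the ``symmetry'', I would first record the shifted compatibility relation
\[
\langle\iota_+\mu,\iota_-\nu\rangle_{H^{-1}(\partial\Omega)\times H^1(\partial\Omega)}=\langle\mu,\nu\rangle_{H^{-1/2}(\partial\Omega)\times H^{1/2}(\partial\Omega)},\qquad \mu\in H^{-1/2}(\partial\Omega),\ \nu\in H^{1/2}(\partial\Omega),
\]
the ``one level up'' analogue of the defining identity $\langle\iota_-\xi,\iota_+\chi\rangle_{L^2}=\langle\xi,\chi\rangle_{H^{-1/2}\times H^{1/2}}$; it follows from the Fourier-multiplier description of $\iota_\pm$ (which are self-adjoint there) together with $\iota_+\iota_-=\iota_-\iota_+=\mathrm{id}$. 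Then for $\xi\in H^1(\partial\Omega)$, setting $\zeta:=\iota_-^{-1}\xi\in H^{1/2}(\partial\Omega)$ so that $\xi=\iota_-\zeta$, one computes
\[
\langle\tilde\Theta_N(\lambda)\xi,\xi\rangle_{H^{-1}\times H^1}=\langle\iota_+\tilde D(\lambda)\zeta,\iota_-\zeta\rangle_{H^{-1}\times H^1}=\langle\tilde D(\lambda)\zeta,\zeta\rangle_{H^{-1/2}\times H^{1/2}},
\]
which lies in $\mathbb{R}$ by the preceding proposition since $\zeta\in H^{1/2}(\partial\Omega)$.

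For the last assertion, when in addition $\lambda\in\rho(\hat H_N)$, I would observe that $\tilde\Theta_N(\lambda)$ is the composition
\[
H^1(\partial\Omega)\xrightarrow{\iota_-^{-1}}H^{1/2}(\partial\Omega)\xrightarrow{\tilde D(\lambda)}H^{-1/2}(\partial\Omega)\xrightarrow{\iota_+}H^{-1}(\partial\Omega)
\]
of three bijections: the outer two are isometric isomorphisms, and the middle one is bijective with bounded inverse by the preceding proposition; hence $\tilde\Theta_N(\lambda)$ is itself a bijection with bounded inverse.

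The one step deserving care is the shifted compatibility relation displayed above: one must check that the adjointness of $\iota_+$ and $\iota_-$ with respect to the $L^2$ pairing propagates to the $H^{-1}\times H^1$ pairing. This is routine — density of $L^2(\partial\Omega)$ in $H^{-1}(\partial\Omega)$ together with continuity of both sides, or the local Fourier-multiplier picture — and is the only point at which more than a direct appeal to the preceding proposition is required.
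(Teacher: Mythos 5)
Your proposal is correct and matches the (implicit) argument the paper intends: the corollary is stated without proof precisely because it follows by conjugating the properties of $\tilde{D}(\lambda)$ established in the preceding proposition with the isometric isomorphisms $\iota_\pm$, which is exactly what you do. The one step you rightly flag — the shifted duality relation $\langle\iota_+\mu,\iota_-\nu\rangle_{H^{-1}(\partial\Omega)\times H^{1}(\partial\Omega)}=\langle\mu,\nu\rangle_{H^{-1/2}(\partial\Omega)\times H^{1/2}(\partial\Omega)}$ — is valid and is the same identity the paper uses tacitly in the subsequent lemma on $\tilde{\Theta}_\beta(\lambda)$, so nothing is missing.
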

    \begin{lemma}
        Let $\lambda \in \rho(\hat{H}_D)\cap \rho(\hat{H}_N)\cap \mathbb{R}$. Then the operator $(i-\tilde\Theta_\beta(\lambda)):=(i-\tilde{\Theta}_N(\lambda)+ i\iota_+ \beta \iota_-^{-1}):H^1(\partial \Omega)\to H^{-1}(\partial \Omega)$ is a bijective linear operator with bounded inverse.
    \end{lemma}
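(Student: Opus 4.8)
The plan is to transcribe, almost line for line, the argument used for $i-\Theta_\beta(\lambda)$ in the proof of Theorem~\ref{Robin boundary conditions nice}, replacing the $L^2(\partial\Omega)$-based setting by the dual pair $H^{1}(\partial\Omega)$, $H^{-1}(\partial\Omega)$ and paying attention to which Sobolev index each of the maps $\iota_\pm$, $\tilde{D}(\lambda)$ acts at. First I would establish that $i-\tilde{\Theta}_\beta(\lambda)$ is Fredholm of index zero. By the preceding Corollary, for $\lambda\in\rho(\hat{H}_D)\cap\rho(\hat{H}_N)\cap\mathbb{R}$ the operator $\tilde{\Theta}_N(\lambda)=\iota_+\tilde{D}(\lambda)\iota_-^{-1}\colon H^1(\partial\Omega)\to H^{-1}(\partial\Omega)$ is a bounded bijection, hence Fredholm of index zero. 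The embedding $H^1(\partial\Omega)\hookrightarrow H^{-1}(\partial\Omega)$ is compact\cite{Brezis}, and since $\iota_-^{-1}\colon H^1(\partial\Omega)\to H^{1/2}(\partial\Omega)$ and $\iota_+\colon H^{-1/2}(\partial\Omega)\to H^{-1}(\partial\Omega)$ are bounded isomorphisms while $\beta\colon H^{1/2}(\partial\Omega)\to H^{-1/2}(\partial\Omega)$ is compact by hypothesis, the operator $i\iota_+\beta\iota_-^{-1}\colon H^1(\partial\Omega)\to H^{-1}(\partial\Omega)$ is compact. Writing $i-\tilde{\Theta}_\beta(\lambda)=-\tilde{\Theta}_N(\lambda)+\big(i\cdot(\text{inclusion})+i\iota_+\beta\iota_-^{-1}\big)$ exhibits it as a compact perturbation of a Fredholm index-zero operator, hence it is itself Fredholm of index zero; it therefore suffices to prove injectivity.

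For injectivity I would run the imaginary-part estimate. Given $\xi\in H^1(\partial\Omega)$, set $\chi:=\iota_-^{-1}\xi\in H^{1/2}(\partial\Omega)$. I would use that $\langle i\xi,\xi\rangle_{H^{-1}(\partial\Omega)\times H^1(\partial\Omega)}=i\|\xi\|_{L^2(\partial\Omega)}^2$ (since $i\xi\in L^2(\partial\Omega)$ and the dual pairing restricts there to the $L^2$ inner product), that $\langle\tilde{\Theta}_N(\lambda)\xi,\xi\rangle_{H^{-1}(\partial\Omega)\times H^1(\partial\Omega)}\in\mathbb{R}$ by the ``symmetry'' of $\tilde{\Theta}_N(\lambda)$ from the preceding Corollary, and that $\langle\iota_+\beta\iota_-^{-1}\xi,\xi\rangle_{H^{-1}(\partial\Omega)\times H^1(\partial\Omega)}=\langle\beta\chi,\chi\rangle_{H^{-1/2}(\partial\Omega)\times H^{1/2}(\partial\Omega)}$, which holds because $\iota_+$ is the Banach-space adjoint of $\iota_-^{-1}$ relative to these dual pairings. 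Combining the three facts gives
\[\operatorname{Im}\big\langle(i-\tilde{\Theta}_\beta(\lambda))\xi,\xi\big\rangle = \|\xi\|_{L^2(\partial\Omega)}^2+\operatorname{Re}\langle\beta\chi,\chi\rangle_{H^{-1/2}(\partial\Omega)\times H^{1/2}(\partial\Omega)} \ \geq\ \|\xi\|_{L^2(\partial\Omega)}^2,\]
so $(i-\tilde{\Theta}_\beta(\lambda))\xi=0$ forces $\xi=0$ and the map is injective. An injective Fredholm operator of index zero between Banach spaces is bijective, and its inverse is bounded by the open mapping theorem, which is the claim.

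The main obstacle I anticipate is bookkeeping rather than analysis: one must confirm that the bijectivity and ``symmetry'' of $\tilde{\Theta}_N(\lambda)$ recorded in the preceding Corollary really are available at the regularity $H^1(\partial\Omega)\to H^{-1}(\partial\Omega)$ used above, and that the pairing identity $\langle\iota_+\beta\iota_-^{-1}\xi,\xi\rangle_{H^{-1}(\partial\Omega)\times H^1(\partial\Omega)}=\langle\beta\chi,\chi\rangle_{H^{-1/2}(\partial\Omega)\times H^{1/2}(\partial\Omega)}$ is legitimate, i.e.\ that $\iota_+=(\iota_-^{-1})^{*}$ with respect to the two dual pairings. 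Both are immediate consequences of the mapping properties of $\iota_\pm$ and $\tilde{D}(\lambda)$ proved in the preceding Proposition, so no new analytic ingredient is needed; the rest is a direct transcription of the proof of Theorem~\ref{Robin boundary conditions nice}.
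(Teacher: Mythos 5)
Your proposal is correct and follows essentially the same route as the paper: exhibit $i-\tilde{\Theta}_\beta(\lambda)$ as a compact perturbation (both the $i\cdot(\text{inclusion})$ term, via the compact embedding $H^1(\partial\Omega)\hookrightarrow H^{-1}(\partial\Omega)$, and the $i\iota_+\beta\iota_-^{-1}$ term) of the bounded bijection $-\tilde{\Theta}_N(\lambda)$, conclude Fredholm of index zero, and then obtain injectivity from the estimate $\|\xi\|_{L^2(\partial\Omega)}^2\leq|\operatorname{Im}\langle(i-\tilde{\Theta}_\beta(\lambda))\xi,\xi\rangle_{H^{-1}(\partial\Omega)\times H^{1}(\partial\Omega)}|$ using the symmetry of $\tilde{\Theta}_N(\lambda)$ and $\operatorname{Re}\langle\beta\chi,\chi\rangle\geq 0$. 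The bookkeeping points you flag (the $H^1\to H^{-1}$ regularity of the Corollary and the compatibility of the dual pairings under $\iota_\pm$) are exactly the facts the paper relies on implicitly, so no gap remains.
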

    \begin{proof}
        Recall that $\tilde{\Theta}_N(\lambda):H^1(\partial \Omega)\to H^{-1}(\partial \Omega)$ is a bijective and bounded linear map, and is thus Fredholm with index $0$. Since $\beta$ is compact and $\iota_-^{-1}:H^{1}(\partial \Omega) \to H^{1/2}(\partial \Omega)$, $\iota_+:H^{-1/2}(\partial \Omega)\to H^{-1}(\partial \Omega)$ are isometries, it follows that $\iota_+ \beta \iota_-^{-1}:H^{1}(\partial \Omega)\to H^{-1}(\partial \Omega)$ is also compact. By compact embedding of Sobolev spaces, multiplication by $i$ is also a compact operator from $H^{1}(\partial \Omega)\to H^{-1}(\partial \Omega)$. So, the stability of Fredholm operators under compact perturbations \cite{Brezis} informs us that $i-\tilde{\Theta}_\beta(\lambda)=i-\tilde{\Theta}_N(\lambda)+ i\iota_+ \beta \iota_-^{-1}$ is also Fredholm of index $0$, and it again suffices to prove that the map is injective. Let $\xi \in H^1(\partial \Omega)$ with $\chi=\iota_-^{-1}\xi \in H^{1/2}(\partial \Omega)$. Since $\tilde{\Theta}_N(\lambda)$ is symmetric
        \begin{equation}\label{injectivity inequality}
            \text{Im} \langle (i - \tilde{\Theta}_\beta(
            \lambda
            ))\xi, \xi \rangle_{H^{-1}(\partial \Omega)\times H^{1}(\partial \Omega)}= ||\xi||^2_{L^2(\partial \Omega)} + \text{Re} \langle \beta \chi,\chi \rangle_{H^{-1/2}(\partial \Omega)\times H^{1/2}(\partial \Omega)}\geq ||\xi||_{L^2(\partial \Omega)}^2.
        \end{equation}
        Consequently $\xi=0$ whenever $(i-\tilde{\Theta}_\beta(\lambda))\xi=0$, so $(i-\tilde\Theta_\beta(
        \lambda
        ))$ is an injective Fredholm operator of index $0$ and is thus invertible.
    \end{proof}
    \begin{proof}[Proof of Theorem \ref{Robin boundary conditions ugly}]
        Set $\tilde{\Theta}_\beta=\tilde{\Theta}_\beta(\eta)$ and $\Phi=(i + \tilde{\Theta}_\beta)(i-\tilde{\Theta}_\beta)^{-1}\big{|}_{L^2(\partial \Omega)}$. Since $(i+\tilde{\Theta}_\beta)$ maps its domain into $H^{-1}(\partial \Omega)$, it is not immediately clear that $\Phi$ maps into $L^2(\partial \Omega)$. Before we continue, we will prove that $||(i+\tilde{\Theta}_\beta)(i-\tilde{\Theta}_\beta)^{-1}\chi||_{L^2(\partial \Omega)}\leq ||\chi||_{L^2(\partial \Omega)}$ for all $\chi \in L^{2}(\partial \Omega)$. Setting $\xi=(i-\tilde{\Theta}_\beta)^{-1}\chi\in H^1(\partial \Omega)$, it is equivalent to prove $||(i+\tilde{\Theta}_\beta)\xi||_{L^2(\partial \Omega)}\leq ||(i-\tilde{\Theta}_\beta)\xi||_{L^2(\partial \Omega)}$. This follows immediately from
    \begin{equation}\label{contraction}
    \begin{split}
        ||(i-\tilde{\Theta}_\beta)\xi||_{L^2(\partial \Omega)}^2-||(i+\tilde{\Theta}_\beta)\xi||_{L^2(\partial \Omega)}^2&=4\text{Im}\langle \tilde{\Theta}_\beta \xi, \xi \rangle_{H^{-1}(\partial \Omega)\times H^{1}(\partial \Omega)}
        \\
        &=4\text{Re}\langle \beta\iota_-^{-1}\xi, \iota_-^{-1}\xi\rangle_{H^{-1/2}(\partial \Omega)\times H^{1/2}(\partial \Omega)}\geq 0.
    \end{split}
    \end{equation}
    $\Phi$ is therefore contractive on $L^2(\partial \Omega)$, and the associated extension $-i\hat{H}_\Phi$ is maximally dissipative with domain
    \begin{equation}\label{H Phi bad def}
        D(\hat{H}_\Phi)= \left\{\psi \in D(\hat{H}^*): -\tilde{\Theta}_\beta (i- \tilde{\Theta}_\beta)^{-1}\iota_-\tilde{\tau}_D \psi=(i - \tilde{\Theta}_\beta)^{-1}\iota_+ \tilde{\tau}_N \psi_D \right\}.
    \end{equation}
    For $\psi \in D(\hat{H}_\Phi)$ our boundary condition implies
    \begin{equation}
    \iota_- \tilde{\tau}_D \psi=(i-\tilde{\Theta}_\beta)(i-\tilde{\Theta}_\beta)^{-1}\iota_-\tilde{\tau}_D\psi=i(i-\tilde{\Theta}_\beta)^{-1}\iota_-\tilde{\tau}_D\psi +(i-\tilde{\Theta}_\beta)^{-1}\iota_+\tau_N \psi_D
    \end{equation}
    so $\iota_- \tilde{\tau}_D \psi \in D(\tilde{\Theta}_\beta)=H^1(\partial \Omega)$, and in particular $\tilde{\tau}_D \psi \in H^{1/2}(\partial \Omega)$. By Corollary \ref{Converse Trace Regularity} it follows that $\psi \in H^1(\Omega)$, so $D(\hat{H}_\Phi)\subset H^1(\Omega)$.
    
    Applying $(i- \Theta_\beta)$ to both sides of the boundary condition in (\ref{H Phi bad def}) returns
    \begin{equation}
    \begin{split}
         \iota_+ \tilde\tau_N \psi_D&= - (\tilde\Theta_N- i\iota_+ \beta \iota_-^{-1}) \iota_- \tilde\tau_D \psi
        \\
        &= (i\iota_+ \beta  \tilde\tau_D \psi-\tilde\Theta_N\iota_- \tilde\tau_D \psi_\eta) 
        \\
        &=\iota_+( i \beta \tilde\tau_D \psi-\tilde{D}(\eta) \tilde\tau_D \psi_\eta) 
        \\
        &=\iota_+(i \beta  \tilde\tau_D \psi-\tilde\tau_N \psi + \tilde\tau_N \psi_D ) 
    \end{split}
    \end{equation}
    so $D(\hat{H}_\Phi)=  \left\{\psi \in H^1(\Omega)\cap D(\hat{H}^*):\tilde\tau_N \psi=i \beta \tilde\tau_D \psi \right\}$ as desired.
    
    When the real part of $\beta$ is \textit{strictly} positive we repeat the last steps in the proof of Theorem \ref{Robin boundary conditions nice}. $\Phi$ is \textit{strictly} contractive on $L^2(\Omega)$ by equation (\ref{contraction}), and  $-i\hat{H}_\Phi$ has compact resolvent because for $\lambda \in \rho(-i\hat{H}_\Phi)$ the resolvent operator $(-i\hat{H}_\Phi - \lambda)^{-1}$ maps $L^2(\Omega)$ to $H^1(\Omega)$, which by Rellich-Kondrachov \cite{Brezis} embeds compactly into $L^2(\Omega)$. The generator $-i\hat{H}_\Phi$ meets the requirements of Theorem \ref{asympt stability}, so $||\exp(-it\hat{H}_\Phi)\psi_0||_{L^2(\Omega)}\xrightarrow{t\to \infty}0$ holds for all $\psi_0 \in L^2(\Omega)$ if the operator $(-\Delta+V):H^2_0(\Omega)\to L^2(\Omega)$ admits no real eigenvalues. This follows immediately from Lemma \ref{no eigenvalues}, and we conclude our proof of Theorem \ref{Robin boundary conditions ugly}.
 \end{proof}
 \begin{remark}
        Since the embeddings $H^{-1/2 +\epsilon}(\partial \Omega) \xhookrightarrow{} H^{-1/2}(\partial \Omega)$ and $H^{1/2}(\partial \Omega) \xhookrightarrow{} H^{1/2-\epsilon}(\partial \Omega)$ are compact \cite{Brezis} for any $\epsilon>0$, a sufficient condition for $\beta$ to be compact is taking $\beta:H^{1/2}(\partial \Omega)\to H^{-1/2+\epsilon}(\partial \Omega)$ or $\beta:H^{1/2 -\epsilon}(\partial \Omega)\to H^{-1/2}(\partial \Omega)$ bounded for some $\epsilon>0$.  If $\beta$ is multiplication by some function, it suffices to take $\beta \in L^\infty(\partial \Omega)$.
    \end{remark}
\subsection{Generalized Robin boundary conditions without compactness}\label{Robin uggliest section}
We turn our attention to the least regular class of Robin boundary operators $\beta:H^{1/2}(\partial \Omega)\to H^{-1/2}(\partial \Omega)$; those that are bounded with non-negative real part but are not compact operators. Looking back at our proof of Theorem \ref{Robin boundary conditions ugly}, we see that once the invertibility of $(i-\tilde\Theta_\beta(\eta))=(i - \iota_+(\tilde D(\eta)-i\beta)\iota_-^{-1}):H^{1}(\partial \Omega)\to H^{-1}(\partial \Omega)$ is established then all further steps are effectively identical to those in the proof of Theorem \ref{Robin boundary conditions nice}. 
However, the compactness assumption played a crucial role in our analysis: it allowed us to apply Fredholm theory to derive the surjectivity of $(i-\tilde\Theta_\beta(\eta))$ from its injectivity. We will demonstrate how this compactness assumption can be dropped from the statement of Theorem \ref{Robin boundary conditions ugly} at the expense of picking up alternative conditions. 
\par
We now restate and prove the second main result of this paper.
\begin{theoremmainagain}\label{Robin boundary conditions ugliest}
    (Generalized \textbf{Robin} boundary condition) Let $\Omega \subset \mathbb{R}^n$ be a bounded $C^2$ domain of dimension $n>1$, and let $\hat{H}:=(-\Delta+V)\big{|}_{C_c^\infty(\Omega)}$ with $V \in L^\infty(\Omega,\mathbb{R})$. Suppose that $\beta=\beta_1+\beta_2+\beta_3$ is a sum of three bounded linear operators $\beta_k:H^{1/2}(\partial \Omega)\to H^{-1/2}(\partial \Omega)$ having ``non-negative real parts", meaning \em $\text{Re}\langle \beta_k \chi,\chi \rangle_{H^{-1/2}(\partial \Omega)\times H^{1/2}(\partial \Omega)}\geq 0$ \em for all $\chi\in H^{1/2}(\partial \Omega)$, with the following conditions on $\beta_1$, $\beta_2$, and $\beta_3$:
    \begin{enumerate}
        \em \item[(R1)]\em Compactness: $\beta_1:H^{1/2}(\partial \Omega)\to H^{-1/2}(\partial \Omega)$ is a compact operator.
        \em \item[(R2)]\em Smallness in operator norm: $||\beta_2
        ||_{\mathcal{B}(H^{1/2}(\partial \Omega),H^{-1/2}(\partial \Omega))}< ||\tilde\tau_D||^{-2}_{\mathcal{B}(H^{1}( \Omega),H^{1/2}(\partial \Omega))}$.
        \em \item[(R3)]\em Non-negative imaginary part: \em $\text{Im}\langle \beta_3 \chi, \chi\rangle_{H^{-1/2}(\partial \Omega),H^{1/2}(\partial \Omega)}\geq 0$ \em for all $\chi \in H^{1/2}(\partial \Omega)$. 
    \end{enumerate}
    Then the operator defined by
    \begin{equation}\label{non compact Robin def}
        D(-i\hat{H}_\beta):=\left\{\psi \in H^1(\Omega)\cap D(\hat{H}^*):\tilde{\tau}_N \psi=i \beta \tilde{\tau}_D \psi \right\}, \quad -i\hat{H}_\beta:=-i\hat{H}^*\big{|}_{D(-i\hat{H}_\beta)}
    \end{equation}
    is a maximally dissipative extension of $-i\hat{H}$. Consequently, for each $\psi_0 \in D(-i\hat{H}_\beta)$ the initial-boundary value problem
    \em \begin{equation}\label{Ugliest Robin IBVP}
\left\{\begin{array}{rclr}
       i \partial_t \psi&=& \hat{H}^* \psi \quad &\text{in } \Omega
       \\
       \psi &=&\psi_0 \quad &\text{at } t=0
       \\
       \partial_n \psi&=&i \beta \psi \quad &\text{on } \partial \Omega
    \end{array}
    \right.
    \end{equation}
    \em admits a unique, global-in-time solution $\psi_t=\exp(-it\hat{H}_\beta)\in C^1([0,\infty),L^2(\Omega))$, and the solution mappings $W_t:\psi_0\mapsto \psi_t$ extend continuously to a $C_0$ contraction semigroup on $L^2(\Omega)$.
\end{theoremmainagain}
For any choice of $\eta \in \rho(\hat{H}_D)\cap\mathbb{R}$, Theorem \ref{Parameterization Schrod} offers an explicit parameterization of all maximally dissipative extensions of $-i\hat{H}$ in terms of linear contractions $\Phi$ on $L^2(\partial \Omega)$. In practice, certain choices of $\eta$ are more convenient when trying to construct the linear contraction associated with a particular extension. In the last three subsections we took $\eta \in \rho(\hat{H}_D)\cap \rho(\hat{H}_N)\cap \mathbb{R}$ when constructing the linear contractions associated with the Neumann and compact Robin extensions. For the proof of Theorem \ref{Robin boundary conditions ugliest} we will find it convenient to additionally demand $\eta < -(1+||V||_{L^\infty(\Omega)})$. Our construction of the linear contraction $\Phi$ associated with the extension $-i\hat{H}_\beta$ defined in (\ref{non compact Robin def}) relies entirely on the following Lemma.
\begin{lemma}\label{Theta beta invertibility}
    Let $\lambda <-(1+||V||_{L^\infty(\Omega)})$, so $\lambda \in \rho(\hat{H}_D)\cap \rho(\hat{H}_N)\cap \mathbb{R}$. Then the operator $(i - \tilde{\Theta}_\beta(\lambda)):=(i-\tilde{\Theta}_N(\lambda) + i \iota_+ \beta \iota_-^{-1}):H^{1}(\partial \Omega)\to H^{-1}(\partial \Omega)$ is a bijective linear operator with a bounded inverse.
\end{lemma}
\begin{proof}
    Repeating the same set of inequalities as in equation (\ref{injectivity inequality}) shows that the operator $(i - \tilde \Theta_\beta(\lambda))$ is injective. However, our proof for surjectivity will require more care. Since $\iota_+ \beta_1\iota_-^{-1}$ is a compact operator it does not impact the Fredholm index of $(i - \tilde \Theta_\beta(\lambda))$, so without loss of generality we may set $\beta_1=0$. Let $\xi \in H^1(\partial \Omega)$ and define $\chi=\iota_-^{-1}\xi \in H^{1/2}(\partial \Omega)$. Then by (R3)
    \begin{equation}\label{Real inequality}
        \text{Re}\langle ( \tilde{\Theta}_\beta(\lambda)-i )\xi,\xi \rangle_{H^{-1}(\partial \Omega)\times H^{1}(\partial \Omega)}\geq\langle \tilde D(\lambda) \chi, \chi\rangle_{H^{-1/2}(\partial \Omega)\times H^{1/2}(\partial \Omega)} + \text{Im}\langle \beta_2 \chi, \chi \rangle_{H^{-1/2}(\partial \Omega)\times H^{1/2}(\partial \Omega)}.
    \end{equation}
    Introducing $\psi_\lambda:=\tilde\gamma(\lambda)\chi\in H^1(\Omega)$, the unique element in $\ker(\hat{H}^*-\lambda)$ such that $\tilde \tau_D \psi_\lambda = \chi$, we may apply the Green's identity (\ref{Green's Identity Star}) and integration-by-parts to compute
    \begin{equation}
    \begin{split}
    \langle \tilde D(\lambda) \chi, \chi\rangle_{H^{-1/2}(\partial \Omega)\times H^{1/2}(\partial \Omega)}&=\langle\tilde{\tau}_N\psi_\lambda, \tilde{\tau}_D \psi_\lambda\rangle_{H^{-1/2}(\partial \Omega)\times H^{1/2}(\partial \Omega)}
    \\
    &=||\vec{\nabla} \psi_\lambda||_{L^2(\Omega)}^2 + \langle (V-\lambda)\psi_\lambda,\psi_\lambda\rangle_{L^2(\Omega)}
    \\
    & \geq  ||\vec{\nabla} \psi_\lambda||_{L^2(\Omega)}^2 + ||\psi_\lambda||^2_{L^2(\Omega)} = ||\psi_\lambda||^2_{H^1(\Omega)}
    \end{split}
    \end{equation}
    where we've used $\inf (V-\lambda)\geq1$. Let us recall the trace theorem for functions that reside in $H^1(\Omega)$.
    \begin{lemma}
        \cite[Theorem 1.5.1.2]{Grisvard} For $\Omega\subset \mathbb{R}^n$ a bounded $C^2$ domain, the trace map $\psi \mapsto \psi\big{|}_{\partial \Omega}$ defined for $\psi \in C^\infty(\Omega) \to H^{{1/2}}(\partial \Omega)$ admits a continuous extension $\tilde{\tau}_D:H^{1}(\Omega)\to H^{1/2}(\partial \Omega)$.
    \end{lemma}
    Denoting the norm of this bounded linear operator $C_D:=||\tilde{\tau}_D||_{\mathcal{B}(H^1(\Omega),H^{1/2}(\partial \Omega))}$, we see  
    \begin{equation}\label{D lambda bound}
       ||\xi||_{H^1(\partial \Omega)}^2=||\chi||^2_{H^{1/2}(\partial \Omega)}\leq C_D^2||\psi_\lambda||_{H^1(\Omega)}^2\leq C_D^2  \langle \tilde D(\lambda) \chi, \chi\rangle_{H^{-1/2}(\partial \Omega)\times H^{1/2}(\partial \Omega)}.
    \end{equation}
    Introducing the shorthand $||\beta_2||:=||\beta_2||_{\mathcal{B}(H^{1/2}(\partial \Omega),H^{-1/2}(\partial \Omega))}$, we recall from condition (R2) that $C_D^{-2}-||\beta_2||>0$. Also,
    \begin{equation}\label{Beta 2 bound}
        \left|\text{Im}\langle \beta_2 \chi, \chi \rangle_{H^{-1/2}(\partial \Omega)\times H^{1/2}(\partial \Omega)}\right|\leq ||\beta_2 \chi||_{H^{-1/2}(\partial \Omega)}||\chi||_{H^{1/2}(\partial \Omega)}\leq ||\beta_2||~||\chi||_{H^{1/2}(\partial \Omega)}^2=||\beta_2||~||\xi||_{H^{1}(\partial \Omega)}^2
    \end{equation}
    Combining the inequalities of equations (\ref{Real inequality}), (\ref{D lambda bound}), and (\ref{Beta 2 bound}), we arrive at
    \begin{equation}
     \begin{split}
    ||(i-\tilde{\Theta}_\beta(\lambda))\xi||_{H^{-1}(\partial \Omega)}||\xi||_{H^1(\partial \Omega)}&\geq \text{Re}\langle(\tilde{\Theta}_\beta(\lambda)-i)\xi,\xi \rangle_{H^{-1}(\partial \Omega)\times H^1(\partial \Omega)} 
    \\
    &\geq \langle \tilde D(\lambda)\chi, \chi\rangle_{H^{1/2}(\partial \Omega)\times H^{-1/2}(\partial \Omega)}-  ||\beta_2||~||\xi||_{H^1(\partial \Omega)}^2
    \\
    &\geq \left(C_D^{-2}-||\beta_2||\right)||\xi||_{H^1(\partial \Omega)}^2.
    \end{split}
    \end{equation}
    Hence $||\xi||_{H^1(\partial \Omega)}\leq \left(C_D^{-2}-||\beta_2||\right)^{-1} ||(i-\tilde \Theta _\beta(\lambda) )\xi||_{H^{-1}(\partial \Omega)}$ and
    \begin{equation}
        ||(i-\tilde \Theta_\beta(\lambda))^{-1}\zeta||_{H^1(\partial \Omega)}\leq \left(C_D^{-2}-||\beta_2||\right)^{-1}||\zeta||_{H^{-1}(\partial \Omega)}, \quad \text{for all }\zeta \in \text{Ran}(i-\tilde\Theta_\beta).
    \end{equation}
    This provides us with an operator norm bound of the inverse, but we have yet to prove that $(i-\tilde \Theta_\beta)$ is surjective. Towards that goal we first note that the range of $(i-\tilde \Theta_\beta)$ is closed, since for any sequence $\zeta_n=(i-\tilde \Theta_\beta(\lambda))\xi_n$ that is Cauchy in $H^{-1}(\partial \Omega)$
    \begin{equation}
        ||\xi_n-\xi_m||_{H^1(\partial \Omega)} \leq \left(C_D^{-2}-||\beta_2||\right)^{-1}||\zeta_n - \zeta_m||_{H^{-1}(\partial \Omega)}.
    \end{equation}
    So $\xi_n$ is also Cauchy in $H^{1}(\partial \Omega)$ with some limit $\xi\in H^1(\partial \Omega)$. The boundedness of $(i-\tilde \Theta_\beta(\lambda)):H^1(\partial \Omega)\to H^{-1}(\partial \Omega)$ ensures that $\zeta_n$ converges to $(i-\tilde \Theta_\beta(\lambda))\xi\in \text{Ran}(i-\tilde \Theta_\beta(\lambda))$ in $H^{-1}(\partial \Omega)$ norm.
    \par
    Now we may apply the closed range theorem to state that $\text{Ran}(i-\tilde\Theta_\beta(\lambda))=\ker(-i-\tilde\Theta_\beta(\lambda)^*)^\perp$, where $\tilde{\Theta}_\beta(\lambda)^*:H^{1}(\partial \Omega)\to H^{-1}(\partial \Omega)$ denotes the Banach space adjoint of $\tilde{\Theta}_\beta(\lambda)$. The inequalities we've already proven are sufficient to show that $\ker(-i-\tilde\Theta_\beta(\lambda)^*)=\{0\}$. Let $\xi \in H^1(\partial \Omega)$. Then by definition of the Banach space adjoint
    \begin{equation}
        -\text{Im}\langle(-i-\tilde\Theta_\beta(\lambda)^*)\xi,\xi\rangle_{H^{-1}(\partial \Omega)\times H^1(\partial \Omega)}=\text{Im}\langle(i-\tilde\Theta_\beta(\lambda))\xi,\xi\rangle_{H^{-1}(\partial \Omega)\times H^{1}(\partial \Omega)}\geq||\xi||_{L^2(\partial \Omega)}^2 
    \end{equation}
    so $\xi \in \ker(-i-\tilde\Theta_\beta(\lambda)^*)$ if and only if $\xi=0$. So $(i-\tilde \Theta_\beta(\lambda))$ is a bounded bijective operator from $H^{1}(\partial \Omega)$ onto $H^{-1}(\partial \Omega)$ as desired.
\end{proof}
\begin{proof}[Proof of Theorem \ref{Robin boundary conditions ugliest}] Set $\tilde{\Theta}_\beta=\tilde{\Theta}_\beta(\eta)$ and $\Phi=(i+\tilde{\Theta}_\beta)(i-\tilde{\Theta}_\beta)^{-1}\big{|}_{L^2(\partial \Omega)}$. To show that $\Phi$ is linear contraction on $L^2(\partial \Omega)$ one applies the same sequence of inequalities in equation (\ref{contraction}). The associated extension $-i\hat{H}_\Phi$ is therefore maximally dissipative with domain 
\begin{equation}
        D(\hat{H}_\Phi)= \left\{\psi \in D(\hat{H}^*): -\tilde{\Theta}_\beta (i- \tilde{\Theta}_\beta)^{-1}\iota_-\tilde{\tau}_D \psi=(i - \tilde{\Theta}_\beta)^{-1}\iota_+ \tilde{\tau}_N \psi_D \right\}.
    \end{equation}
Repeating the same steps in the proof of Theorem \ref{Robin boundary conditions ugly} starting from equation (\ref{H Phi bad def}) returns $D(\hat{H}_\Phi)=  \left\{\psi \in H^1(\Omega)\cap D(\hat{H}^*):\tilde\tau_N \psi=i \beta \tilde\tau_D \psi \right\}$.
\end{proof}
\begin{remark}\label{Hardy remark}
    Condition \em (R2) \em allows the Robin boundary condition to include tangential derivatives along the boundary, while conditions \em(R1) \em and \em (R3) \em allow for singular multiplication operators. For example, if $\Omega\subset\mathbb{R}^n$ is a bounded $C^2$ domain of dimension $n> 2$ and $y \in \partial \Omega$, then the fractional Hardy inequality \cite{Hardy} implies that multiplication by $\frac{1}{|x-y|^{1/2}}$ defines a bounded linear operator from $H^{1/2}(\partial \Omega)\to L^2(\partial \Omega)$. Duality then shows that multiplication by $\frac{1}{|x-y|^{1/2}}$ extends to a bounded linear operator from $L^2(\partial \Omega)\to H^{-1/2}(\partial \Omega)$, therefore multiplication by $\frac{1}{|x-y|}$ defines a bounded positive operator from $H^{1/2}(\partial \Omega)\to H^{-1/2}(\partial \Omega)$.
\end{remark}
We now restate and prove the third main result of this paper, the asymptotic stability of all solutions to the initial-boundary value problem when the real part of $\beta$ is \textit{strictly positive}; meaning $\text{Re}\langle \beta \chi, \chi \rangle_{H^{-1/2}(\partial \Omega)\times H^{1/2}(\partial \Omega)}>0$ for all $0 \neq \chi \in H^{1/2}(\partial \Omega)$.
\begin{theoremmainagain}\label{Robin asymt stab again} Let $\Omega \subset \mathbb{R}^n$ be a bounded $C^2$ domain of dimension $n>1$, let $\hat{H}:=(-\Delta+V)\big{|}_{C_c^\infty(\Omega)}$ with $V \in L^\infty(\Omega,\mathbb{R})$, and suppose that $\beta$ satisfies the assumptions of Theorem \ref{Robin boundary conditions ugliest} while having strictly positive real part. Then all solutions of the initial-boundary value problem (\ref{Ugliest Robin IBVP}) asymptotically vanish, $||W_t\psi_0||_{L^2(\Omega)}\xrightarrow{t \to \infty}0$ for all $\psi_0 \in L^2(\Omega)$.
\end{theoremmainagain}
\begin{proof}
    We again follow the last steps in the proofs of Theorem \ref{Robin boundary conditions nice} and Theorem \ref{Robin boundary conditions ugly}. The linear contraction $\Phi=(i+\tilde{\Theta}_\beta)(i-\tilde{\Theta}_\beta)^{-1}\big{|}_{L^2(\partial \Omega)}$ is \textit{strictly} contractive on $L^2(\partial \Omega)$ by the same set of inequalities as in equation (\ref{contraction}). Once again $-i\hat{H}_\Phi$ has compact resolvent since for $\lambda \in \rho(-i\hat{H}_\Phi)$ the resolvent operator $(-i\hat{H}_\Phi-\lambda)^{-1}$ maps $L^2( \Omega)$ into $H^1( \Omega)$, which by Rellich-Kondrachov \cite{Brezis} embeds compactly into $L^2( \Omega)$. By the compactness of the resolvent and Lemma \ref{no eigenvalues}, Theorem \ref{asympt stability} states $||W_t\psi_0||_{L^2( \Omega)}\xrightarrow{t \to \infty}0$ for all $\psi_0 \in L^2(\Omega)$.
\end{proof}
\section{Detection Time Distributions}
Suppose that a quantum particle is initially prepared with state $\psi_0$ of unit norm in a Hilbert space $\mathcal{H}$, and suppose that the particle undergoes an idealized detection process which is irreversible, autonomous, and highly sensitive to interactions; so that the formation of any entanglement between the particle and the detector leads to an immediate firing. Then; as argued in the introduction; the dynamics of the particle is governed by a $C_0$ contraction semigroup $W_t=\exp(-itL)$ on $\mathcal{H}$ while the detector has not yet fired, with the quantity $||W_t\psi_0||_{\mathcal{H}}^2$ representing the probability that the particle remains undetected up to time $t$. Hence, the probability that the particle will be detected between times $0\leq t_1<t_2<\infty$ is given by
\begin{equation}\label{Prob dist}
    \text{Prob}_{\psi_0}(t_1<t<t_2):=||W_{t_1}\psi_0||_{\mathcal{H}}^2 - ||W_{t_2}\psi_0||_{\mathcal{H}}^2
\end{equation}
while the event that the particle is never detected; denoted by $t=\text{N}$; has probability
\begin{equation}\label{Prob no det}
    \text{Prob}_{\psi_0}(t=\text{N}):=\lim_{t \to \infty}||W_t\psi_0||_{\mathcal{H}}^2
\end{equation}
where this limit exists by monotonicity. Werner \cite{Werner1987}  demonstrated that every $C_0$ contraction semigroup $W_t$ on a Hilbert space $\mathcal{H}$ admits a ``Born rule" for the detection time distribution above. Formally stated, there exists a positive operator-valued measure defined on Lebesgue measureable subsets $I \subset [0,\infty)\cup \{\text{N}\}$ and acting on $\mathcal{H}$ such that
\begin{equation}
    \langle E(I)\psi_0,\psi_0\rangle_{\mathcal{H}}=\text{Prob}_{\psi_0}(t \in I)
\end{equation}
agrees with (\ref{Prob dist}) and (\ref{Prob no det}) for all $\psi_0 \in \mathcal{H}$. Werner begins his analysis by pointing out that
\begin{equation}
    \mathcal{J}(\psi,\phi):=-\frac{d}{dt}\langle W_t\psi, W_t\phi\rangle_{\mathcal{H}}\big{|}_{t=0}=\langle iL \psi,\phi\rangle_{\mathcal{H}}+\langle \psi,iL\phi\rangle_{\mathcal{H}}
\end{equation}
defines a non-negative Hermitian form on the domain of the semigroup generator $D(L)$. In most cases this Hermitian form may not be positive, so $D(L)$ is not a pre-Hilbert space. Werner only considered cases where $D(L)$ is a pre-Hilbert space, however, we can apply the Cauchy-Schwarz inequality to show that the set of $\psi\in D(L)$ such that $\mathcal{J}(\psi,\psi)=0$ is a linear subspace of $D(L)$, in fact it is the radical of $\mathcal{J}$
\begin{equation}
    \text{Rad}(\mathcal{J}):=\{\psi \in D(L): \mathcal{J}(\psi,\phi)=0 \quad \forall \phi \in D(L) \}=\{\psi \in D(L): \mathcal{J}(\psi,\psi) =0\}.
\end{equation}
Taking the quotient of $D(L)$ with $\text{Rad}(\mathcal{J})$ returns a pre-Hilbert space $(D(L)/\text{Rad}(\mathcal{J}), \mathcal{J}(\cdot,\cdot))$. Let $\mathcal{K}$ denote the completion of this pre-Hilbert space, and let $j:D(L)\to \mathcal{K}$ denote the canonical embedding, so the inner product on $\mathcal{K}$ is defined on the dense set $j(D(L)) \subset \mathcal{K}$
\begin{equation}\label{Embedding In Prod}
    \langle j \psi,j\phi \rangle_{\mathcal{K}}=\langle iL \psi, \phi \rangle_{\mathcal{H}}+\langle \psi, iL \phi \rangle_{\mathcal{H}}.
\end{equation}
For each $\psi \in D(L)$ we consider the function $J\psi:[0,\infty) \to \mathcal{K}$ defined by $(J\psi)(t):=j(W_t\psi)$. Then
\begin{equation}
    \int_{0}^\infty ||(J \psi)(t) ||_{\mathcal{K}}^2 ~ dt=- \int_0^\infty \frac{d}{dt}||W_t \psi||_\mathcal{H}^2 ~dt=||\psi||_\mathcal{H}^2 - \lim_{t \to \infty}||W_t \psi||_{\mathcal{H}}^2 \leq ||\psi||_{\mathcal{H}}^2.
\end{equation}
where these limits exist by the monotonicity of $||W_t\psi||_{\mathcal{H}}^2$. This bound shows that $J$ extends from $D(L)$ to a linear contraction $J:\mathcal{H}\to L^2([0,\infty),\mathcal{K})$. The quantity $||(J\psi)(t)||_\mathcal{K}^2$ measures the rate at which probability flows from the space of states where the detector is primed $\mathcal{H}_p$ to the space of states where the detector has fired $\mathcal{H}_F$. It therefore acts as a probability density for the particle's time of detection. Now, for any Lebesgue measureable subset $I\subset [0,\infty)$, we introduce the operator $E(I):=J^*\chi_IJ$ acting on $\mathcal{H}$, with $\chi_I$ denoting the indicator function. Then
\begin{equation}
    \langle E(I)\psi_0,\psi_0\rangle_{\mathcal{H}}=\int_{I} ||(J\psi)(t)||_{\mathcal{K}}^2 ~dt \geq 0,
\end{equation}
so $E(I)$ is a positive operator. It is straightforward to show that $E(\cdot)$ forms a (unnormalized) positive operator-valued measure, which, when extended by setting $E(\{N\})=1-J^*J$ forms a normalized positive operator-valued measure on $[0,\infty)\cup \{\text{N}\}$ that agrees with (\ref{Prob dist}) and (\ref{Prob no det}. 

Any Hilbert space $\mathcal{K}$ with a linear map $j:D(L) \to \mathcal{K}$ satisfying equation (\ref{Embedding In Prod}) is called an \textit{exit space}. All $C_0$ contraction semigroups admit at least one exit space; such as the abstract \textit{minimal exit space} detailed above; and the detection time probability density $||(J\psi)(t)||_{\mathcal{K}}^2$ and positive operator-valued measure $E(\cdot)$ is independent of the choice of exit space. The last main result of this paper applies the theory of boundary quadruples to explicitly construct exit spaces for $C_0$ contraction semigroups $W_t$ that ``weakly solve" a Schr\"odinger equation $i\partial_t \psi=\hat{H}^*\psi$ for some densely defined symmetric operator $\hat{H}$ on $\mathcal{H}$.
\begin{theoremmainagain}
    Let $-i\hat{H}$ be a densely defined skew-symmetric operator on a Hilbert space $\mathcal{H}$ and let $W_t$ be a $C_0$ contraction semigroup on $\mathcal{H}$ whose generator is extended by $-i\hat{H}^*$. Then for any choice of boundary quadruple $(\mathcal{H}_\pm,G_\pm)$ for $-i\hat{H}$ we may write $W_t=\exp(-it\hat{H}_\Phi)$ where $\Phi:\mathcal{H}_-\to\mathcal{H}_+$ is the linear contraction guaranteed by Theorem \ref{Description}, and \begin{equation}
        j:D(\hat{H}_\Phi)\to \mathcal{H}_-, \quad j:\psi \mapsto \sqrt{1-\Phi^*\Phi}G_-\psi
    \end{equation} 
    defines an exit space for $W_t$. Consequently $(J\psi_0)(t):=\sqrt{1-\Phi^*\Phi}G_-W_t\psi_0$ extends to a linear contraction $J:\mathcal{H}\to L^2([0,\infty),\mathcal{H}_-)$ with $||\sqrt{1-\Phi^*\Phi}G_-W_t\psi_0||_{\mathcal{H}_-}^2$ equal to the (unnormalized) probability density for the time of detection. Lastly, the positive operator-valued measure for the detection time probability distribution can be expressed as $E(I)=J^*\chi_IJ$ for Lebesgue measureable subsets $I\subset [0,\infty)$ and $E(\{N\})=\mathds{1}_\mathcal{H}-J^*J$.
\end{theoremmainagain}

\begin{proof}
    Let $\Phi^*:H_+ \to H_-$ denote the Hilbert space adjoint of $\Phi$. Then for all $\psi, \phi \in D(\hat{H}_\Phi)$ 
    \begin{equation}
    \begin{split}
        \langle i\hat{H}_\Phi \psi, \phi \rangle_{\mathcal{H}}+\langle \psi, i\hat{H}_\Phi \phi \rangle_{\mathcal{H}} &= \langle G_-\psi, G_- \phi \rangle_{H_-}- \langle G_+ \psi, G_+\phi \rangle_{H_+}
        \\
        &= \langle G_-\psi, G_- \phi \rangle_{H_-}- \langle \Phi G_- \psi, \Phi G_-\phi \rangle_{H_+}
        \\
        &= \langle (1- \Phi^* \Phi)G_-\psi, G_- \phi \rangle_{H_-}
        \\
        &= \langle \sqrt{1- \Phi^* \Phi}G_-\psi, \sqrt{1-\Phi^* \Phi} G_- \phi \rangle_{H_-}
    \end{split}
    \end{equation}
    where the square root of the bounded positive operator $1-\Phi^*\Phi$ on $\mathcal{H}_-$ is defined by a power series.
\end{proof}
 \section{Summary and Outlook}
 We have shown that wave functions of non-relativistic quantum particles undergoing idealized irreversible hard autonomous detection along the boundary of some bounded $C^2$ region $\Omega\subset \mathbb{R}^n$ must evolve according to the Schr\"odinger equation while satisfying a time-independent absorbing boundary condition along $\partial \Omega$. Every such detector model admits a natural Born rule for the distribution of times at which the particle is detected along $\partial \Omega$, and we have shown that a detection will almost surely occur in finite time whenever the particle is completely surrounded by screen detectors.  
 \par
 There are several exciting directions to go in extending these results. The local Robin boundary conditions studied in this paper offer a large family of possible hard detector models, but it is worthwhile to also look into other types of local absorbing boundary conditions. The modern theory of boundary triples for Schr\"odinger operators \cite{BEHRNDT2014,BEHRNDT20155903} might allow us to extend the theorems in this paper to the case of unbounded Lipschitz domains $\Omega \subset \mathbb{R}^n$. A boundary triple construction for the Dirac Hamiltonian is nearly complete \cite{SADIRAC, DiracSing} and it would be of great interest to prove similar results for spin $\frac{1}{2}$ relativistic quantum particles undergoing irreversible hard autonomous detection. We also plan to investigate in an upcoming paper \cite{DetTime} the behavior of the detection time probability distribution under perturbations of the boundary condition. 
 
 Another avenue of research might focus on providing similar parameterization results when one of our idealizations regarding the detecting screen is relaxed. If one allows the detection mechanism to vary with time, then the dynamics of the wave function are not autonomous, and one would investigate whether the results of Wegner \cite{Wegner2017} and Arendt et al. \cite{Arendt2023} generalize to the setting where $W_t$ does not form a semigroup under composition. If one instead allows $\Psi_t\big{|}_{\mathcal{H}_P}$ to become entangled in its dynamics, then it would be interesting to see if a similar parameterization result holds for the particle's density matrix dynamics. 
\section{Acknowledgment}
The author thanks Shadi Tahvildar-Zadeh, Markus Holzmann, Jussi Behrndt, Siddhant Das, Will Cavendish, Anupam Nayak, Mark Vaysiberg, Sheldon Goldstein, and especially Roderich Tumulka  for the insight and advice they provided in the development of this manuscript.
\appendix
\section{Dirichlet-to-Neumann Maps}\label{D to N section}
This appendix is dedicated to the study of Dirichlet boundary value problems and the Dirichlet-to-Neumann map. We assume throughout this section that $\Omega \subset \mathbb{R}^n$ is a bounded $C^2$ domain of dimension $n>1$, $V \in L^\infty(\Omega,\mathbb{R})$, and we take $\hat{H}:=-\Delta+V$ densely defined on the minimal domain $D(\hat{H})=H^2_0(\Omega)$. Recall that the Dirichlet extension $\hat{H}_D$ of $\hat{H}$ defined by
\begin{equation}
    D(\hat{H}_D):=\{ \psi \in H^2(\Omega): \tau_D \psi=0\}, \quad \hat{H}_D\psi:=(-\Delta +V)\psi
\end{equation}
is a self-adjoint operator on $L^2(\Omega)$. This operator can be used to prove an existence and uniqueness result for a class of Dirichlet boundary value problems.
\begin{lemma}\label{Dirichlet solution properties}
    Let $\lambda \in \rho(\hat{H}_D)\cap \mathbb{R}$ and $s \in [-\frac{1}{2},\frac{3}{2}]$. For each $\xi \in H^{s}(\partial \Omega)$, there exists a unique $\psi_\lambda \in H^{s+1/2}(\Omega)$ to the boundary value problem
    \begin{equation}\label{Boundary value problem}
        \begin{cases}
            (\hat{H}^*-\lambda)\psi=0
            \\
            \psi \big{|}_{\partial \Omega}=\xi
        \end{cases}
    \end{equation}
    The solution mappings $\xi \mapsto\psi_\lambda$ define a bounded linear map $\gamma(\lambda):H^s(\partial \Omega)\to H^{s+1/2}(\Omega)$.
    \end{lemma}
\begin{proof}
 We first prove the existence and uniqueness of a solution to (\ref{Boundary value problem}) for $\xi \in H^{-1/2}(\partial \Omega)$. By Lemma \ref{Trace Extensions} there exists some $\psi \in D(\hat{H}^*)$ such that $\tilde{\tau}_D\psi=\xi$. It is then easy to verify that $\psi_\lambda:=\psi-(\hat{H}_D-\lambda)^{-1}(\hat{H}^*-\lambda)\psi$ resides in $\ker(\hat{H}^*-\lambda)$ and satisfies $\tilde{\tau}_D\psi_\lambda=\tilde{\tau}_D\psi=\xi$, and is hence a solution to (\ref{Boundary value problem}). For uniqueness, we observe that if $\psi_\lambda$ and $\psi_\lambda'$ both solve (\ref{Boundary value problem}) then $\psi_\lambda-\psi_\lambda'\in \ker(\tau_D)\cap \ker(\hat{H}^*-\lambda)=\{0\}$, where the last equality follows from $\lambda \in \rho(\hat{H}_D)$. Hence, the solution mapping $\xi \mapsto \gamma(\lambda)\xi:=\psi_\lambda$ is a well-defined linear map from $H^{-1/2}(\partial \Omega)\to L^2(\Omega)$. In fact, if $\xi \in H^{3/2}(\partial \Omega)$ then Lemma \ref{Trace Definition} ensures that there exists a $\psi \in H^2(\Omega)$ such that $\tau_D\psi=\xi$, so $\psi_\lambda=\psi - (\hat{H}_D-\lambda)^{-1}(\hat{H}^* - \lambda)\psi$ is a difference of two $H^2(\Omega)$ functions and is thus in $H^2(\Omega)$. So $\gamma(\lambda)$ maps $H^{3/2}(\partial \Omega)\to H^2(\Omega)$. This establishes existence and uniqueness of solutions at the end-point cases $s=-\frac{1}{2}$ and $s=\frac{3}{2}$.
        
For continuity at the end-point cases we will show that $\gamma(\lambda)$ is the Banach space adjoint of $-\tau_N (\hat{H}_D-\lambda)^{-1}:L^2(\Omega)\to H^{1/2}(\partial \Omega)$, making it a bounded operator from $H^{-1/2}(\partial \Omega)\to L^2(\Omega)$. To see this, let $\xi \in H^{-1/2}(\partial \Omega)$ and $\phi \in L^2(\Omega)$. Then $\gamma(\lambda)\xi \in \ker(\hat{H}^*-\lambda)$ and we have
\begin{equation}
    \begin{split}
    \langle \gamma(\lambda)\xi, \phi \rangle_{L^2(\Omega)}&=\langle \gamma(\lambda)\xi, (\hat{H}^* - \lambda)(\hat{H}_D - \lambda)^{-1}\phi \rangle_{L^2(\Omega)}
        \\
    &=\langle \gamma(\lambda)\xi, \hat{H}^*(\hat{H}_D - \lambda)^{-1}\phi \rangle_{L^2(\Omega)}- \langle \lambda \gamma(\lambda)\xi, (\hat{H}_D-\lambda)^{-1}\phi \rangle_{L^2(\Omega)}
        \\
    &=\langle \gamma(\lambda)\xi, \hat{H}^*(\hat{H}_D - \lambda)^{-1}\phi \rangle_{L^2(\Omega)}- \langle \hat{H}^* \gamma(\lambda)\xi, (\hat{H}_D-\lambda)^{-1}\phi \rangle_{L^2(\Omega)}
        \\
    &=-\langle \xi, \tau_N (\hat{H}_D - \lambda)^{-1}\phi \rangle_{H^{-1/2}(\partial \Omega)\times H^{1/2}(\partial \Omega)},
 \end{split}
\end{equation}
  where in the last step we applied the abstract Green's formula (\ref{Green's Identity Star}) along with $\tilde{\tau}_D\gamma(\lambda)\xi=\xi$ and $\tau_D (\hat{H}_D-\lambda)^{-1}\phi=0$. Hence $\gamma(\lambda)$ is a bounded operator from $H^{-1/2}(\partial \Omega)\to L^2(\Omega)$ that also maps $H^{3/2}(\partial \Omega)\to H^2(\Omega)$. The closed graph theorem then implies that the restriction $\gamma(\lambda):H^{3/2}(\partial \Omega)\to H^2(\Omega)$ is also a bounded operator, and interpolation (see e.g. \cite[Theorem 1.4.3.3]{Grisvard} and \cite[Theorems 5.1 and 7.7]{LH1972}) returns that the restrictions $\gamma(\lambda):H^s(\partial \Omega)\to H^{s+1/2}(\Omega)$ are bounded for $s \in [-\frac{1}{2},\frac{3}{2}]$.
  \end{proof}
  \begin{corollary}\label{Converse Trace Regularity}
	If $\psi \in D(\hat{H}^*)$ has trace $\tilde{\tau}_D \psi \in H^s(\partial \Omega)$ for some $s \in [-\frac{1}{2},\frac{3}{2}]$, then $\psi \in H^{s+1/2}(\Omega)$.
	\end{corollary}
    \begin{proof}
        Let $\psi \in D(\hat{H}^*)$ with $\tilde{\tau}_D\psi \in H^s(\partial \Omega)$, and let $\lambda \in \rho(\hat{H}_D)\cap \mathbb{R}$. Then $\psi-\gamma(\lambda)\tilde{\tau}_D\psi \in \ker(\tilde{\tau}_D)\subset H^2(\Omega)$, so $\psi$ is the sum of two elements in $H^{s+1/2}(\Omega)$ and is thus in $H^{s+1/2}(\Omega)$.
    \end{proof}
    We now turn our attention to the ``Dirichlet-to-Neumann" map. \begin{definition}\label{Dichlet to Neumann}
 For $\lambda \in \rho(\hat{H}_D)\cap \mathbb{R}$, the ``Dirichlet-to-Neumann" map is defined by
    \begin{equation}\label{Dirichlet to Neumann Eq again}
        D(\lambda):H^{3/2}(\partial \Omega)\to H^{1/2}(\partial \Omega), \quad D(\lambda)\xi:=\tau_N \gamma(\lambda)\xi.
    \end{equation}
\end{definition}
Clearly $D(\lambda)$ is a bounded linear operator since it is a composition of two bounded linear operators, and $D(\lambda)$ can be continuously extended to a bounded operator $\tilde{D}(\lambda):=\tilde{\tau}_N\gamma(\lambda):H^{-1/2}(\partial \Omega)\to H^{-3/2}(\partial \Omega)$ using the extension $\tilde{\tau}_N$. We now prove several useful facts about the extended ``Dirichlet-to-Neumann" map. We should recall that the Neumann extension $\hat{H}_N$ of $\hat{H}$ defined by 
\begin{equation}
    D(\hat{H}_N):=\{ \psi \in H^2(\Omega): \tau_N \psi=0\}, \quad \hat{H}_N\psi:=(-\Delta +V)\psi
\end{equation}
is a self-adjoint operator on $L^2(\Omega)$.
\begin{lemma}\label{Dirichlet to Neumann Facts}
    For $\lambda \in \rho(\hat{H}_D)\cap \mathbb{R}$, the extended ``Dirichlet-to-Neumann" map $\tilde{D}(\lambda):H^{-1/2}(\partial \Omega)\to H^{-3/2}(\partial \Omega)$ is equal to the Banach space adjoint of $D(\lambda):H^{3/2}(\partial \Omega)\to H^{1/2}(\partial \Omega)$. In addition, its restrictions $\tilde{D}(\lambda):H^s(\partial \Omega)\to H^{s-1}(\partial \Omega)$ are bounded linear operators for $s \in [-\frac{1}{2},\frac{3}{2}]$, with $\tilde{D}(\lambda):H^{1/2}(\partial \Omega)\to H^{-1/2}(\partial \Omega) $ ``symmetric" in the sense that $\langle \tilde{D}(\lambda)\zeta,\zeta \rangle_{H^{-1/2}(\partial \Omega)\times H^{1/2}(\partial \Omega)}\in \mathbb{R}$ for all $\zeta \in H^{1/2}(\partial \Omega)$. For $\lambda \in \rho(\hat{H}_D)\cap \rho(\hat{H}_N)\cap \mathbb{R}$ the operators $\tilde{D}(\lambda):H^s(\partial \Omega)\to H^{s-1}(\partial \Omega)$ are bijective linear maps with bounded inverse for $s \in [-\frac{1}{2},\frac{3}{2}]$. 
\end{lemma}
\begin{proof}
    Let $\lambda \in \rho(\hat{H}_D)\cap \mathbb{R}$, $\xi \in H^{-1/2}(\partial \Omega)$, and $\chi \in H^{3/2}(\partial \Omega)$. Setting $\psi_\lambda=\gamma(\lambda)\xi\in \ker(\hat{H}^*-\lambda)$ and $\phi_\lambda=\gamma(\lambda)\chi\in H^2(\Omega)\cap \ker(\hat{H}^*-\lambda)$, we can compute from the extended Green's identity (\ref{Green's Identity Star}) that
    \begin{equation}\label{D is symmetric}
    \begin{split}
        \langle \tilde D (\lambda)\xi,\chi\rangle_{H^{-3/2}(\partial \Omega)\times H^{3/2}(\partial \Omega)}&=\langle \tilde \tau_N \psi_\lambda,\tau_D \phi_\lambda\rangle_{H^{-3/2}(\partial \Omega)\times H^{3/2}(\partial \Omega)}
        \\
        &=\langle \tilde \tau_D \psi_\lambda, \tau_N \phi_\lambda \rangle_{H^{-1/2}(\partial \Omega)\times H^{1/2}(\partial \Omega)}
        \\
        &=\langle \xi,D(\lambda)\chi\rangle_{H^{-1/2}(\partial \Omega)\times H^{1/2}(\partial \Omega)}
    \end{split}
    \end{equation}
    where in the second line we used $\langle -i\hat{H}^* \psi_\lambda, \phi_\lambda \rangle_{L^2(\partial \Omega)}+\langle  \psi_\lambda, -i\hat{H}^*\phi_\lambda \rangle_{L^2(\partial \Omega)}=0$. Hence $\tilde{D}(\lambda)$ is equal to the Banach space adjoint of $D(\lambda)$. Now $\tilde{D}(\lambda)$ is a bounded linear operator from $H^{-1/2}(\partial \Omega)\to H^{-3/2}(\partial \Omega)$ which also maps $H^{3/2}(\partial \Omega)\to H^{1/2}(\partial \Omega)$, so interpolation (see e.g. \cite[Theorem 1.4.3.3]{Grisvard} and \cite[Theorems 5.1 and 7.7]{LH1972}) implies the restrictions
    \begin{equation}
        \tilde{D}(\lambda):H^{s}(\partial \Omega)\to H^{s-1}(\partial \Omega)
    \end{equation}
    are bounded for $s\in [-\frac{1}{2},\frac{3}{2}]$. In particular, $\tilde{D}(\lambda):H^{1/2}(\partial \Omega)\to H^{-1/2}(\partial \Omega)$ is ``symmetric" by equation (\ref{D is symmetric}).
    \par
    Now let $\lambda \in \rho(\hat{H}_D)\cap \rho(\hat{H}_N)\cap \mathbb{R}$. We wish to prove the bijectivity of the extended operator $\tilde{D}(\lambda):H^{-1/2}(\partial \Omega)\to H^{-3/2}(\partial \Omega)$. To show injectivity, suppose $\tilde{D}(\lambda)\xi=0$ for some $\xi \in H^{-1/2}(\partial \Omega)$. Then $\gamma(\lambda)\xi \in D(\hat{H}_N)\cap\ker(\hat{H}^*-\lambda)=\{0\}$, hence $\xi=\tilde{\tau}_D0=0$. To prove surjectivity, let $\chi \in H^{-3/2}(\partial \Omega)$. By Lemma \ref{Trace Extensions} there exists some $\psi \in D(\hat{H}^*)$ such that $\tilde{\tau}_N\psi=\chi$. Setting $\psi_\lambda=\psi-(\hat{H}_N - \lambda)^{-1}(\hat{H}^*-\lambda)\psi\in \ker(\hat{H}^*-\lambda)$, we have $\tilde{\tau}_N\psi_\lambda=\tilde{\tau}_N\psi=\chi$. So, $\xi=\tilde{\tau}_D\psi_\lambda \in H^{-1/2}(\partial \Omega)$ satisfies $\tilde{D}(\lambda)\xi=\tilde{\tau}_N \psi_\lambda=\chi$ as desired. It follows that $\tilde{D}(\lambda):H^{-1/2}(\partial \Omega)\to H^{-3/2}(\partial \Omega)$ is bijective with $\tilde{D}^{-1}(\lambda):H^{s-1}(\partial \Omega)\to H^s(\partial \Omega)$ a bounded linear operator for each $s \in [-\frac{1}{2}, \frac{3}{2}]$.
\end{proof}

\end{document}